\providecommand{\U}[1]{\protect\rule{.1in}{.1in}}
\newtheorem{theorem}{Theorem}
\newtheorem{corollary}[theorem]{Corollary}
\newtheorem{proposition}[theorem]{Proposition}
\newenvironment{proof}[1][Proof]{\noindent\textbf{#1.} }{\ \rule{0.5em}{0.5em}}
\begin{document}

\title{The generic model of General Relativity}
\author{Michael Tsamparlis$^{1}$\thanks{Email: mtsampa@phys.uoa.gr}\\{\ \ }$^{1}${\textit{Faculty of Physics, Department of
Astronomy-Astrophysics-Mechanics,}}\\{\ \textit{University of Athens, Panepistemiopolis, Athens 157 83, Greece.}}
\and Andronikos Paliathanasis$^{2,3,4}$\thanks{Email: anpaliat@phys.uoa.gr}\\{\ }$^{2}${\textit{Instituto de Ciencias F\'{\i}sicas y Matem\'{a}ticas, }}\\{\textit{Universidad Austral de Chile, Valdivia, Chile.}}\\$^{3}${\textit{Department of Mathematics and Natural Sciences, }}\\{\textit{Core Curriculum Program, Prince Mohammad Bin Fahd University, }}\\{\textit{Al Khobar 31952, Kingdom of Saudi Arabia.}}\\$^{4}${\textit{Institute of Systems Science, Durban University of Technology}}\\{\textit{Durban 4000, Republic of South Africa.}}}
\date{}
\maketitle

\begin{abstract}
We develop a generic spacetime model in General Relativity which can be used
to build any gravitational model within General Relativity. The generic model
uses two types of assumptions: (a) Geometric assumptions additional to the
inherent geometric identities of the Riemannian geometry of spacetime and (b)
Assumptions defining a class of observers by means of their 4-velocity $u^{a}$
which is a unit timelike vector field. The geometric assumptions as a rule concern symmetry
assumptions (the so called collineations). The latter introduces the 1+3
decomposition of tensor fields in spacetime. The 1+3 decomposition results in
two major results. The 1+3 decomposition of $u_{a;b}$ defines the kinematic
variables of the model (expansion, rotation, shear and 4-acceleration) and
defines the kinematics of the gravitational model. The 1+3 decomposition of
the energy momentum tensor representing all gravitating matter introduces the
dynamic variables of the model (energy density, the isotropic pressure, the
momentum transfer or heat flux vector and the traceless tensor of the
anisotropic pressure) as measured by the defined observers and define the
dynamics of he model. The symmetries assumed by the model act as constraints
on both the kinematical and the dynamical variables of the model. As a second
further development of the generic model we assume that in addition to the
4-velocity of the observers $u_{a}$ there exists a second universal vector field
$n_{a}$ in spacetime so that one has a so called double congruence
$(u_{a},n_{a})$ which can be used to define the 1+1+2 decomposition of tensor
fields. The 1+1+2 decomposition leads to an extended kinematics concerning
both fields building the double congruence and to a finer dynamics involving
more physical variables. After presenting and discussing the results in their
full generality we show how they are applied in practice by considering in a
step by step approach the case of a string fluid in Bianchi I spacetime for
the comoving observers.

\end{abstract}

Keywords: General Relativity, Collineations, Decomposition.

\section{Introduction}

General Relativity (GR) is the first theory of Physics which uses geometry to
a such a great extend \cite{ein1,ein2,ein3,ein4}. Newtonian Physics is also a
geometric theory of Physics but the difference is that it uses the 3-d Euclidian geometry
which has a direct sensory correspondence therefore `couples' much easier and
directly with the description of the physical phenomena.

In GR the space used (spacetime) and the geometry employed (Riemannian
Geometry) do not allow for a direct comparison of the mathematical description
with the direct sensorial reality. Most frequently this results in a confusion
as to where geometry starts and where she stops, how much Physics is used and
done, what is the physical interpretation of the geometrically derived results
etc. On top of that it does not seem to exist at a reasonable level a concise
exposition of the structure of GR as a theory of Physics and a clear
description of the various hypotheses made both at the geometric and at the
physical level when a gravitational model is presented. For example it is
widely misunderstood that Einstein's gravitational field equations \cite{ein4}
are equations in the usual sense; that is, they contain unknown quantities of
some sort which are specified as soon as one "solves" them. However on the
contrary these equations contain only unknowns and there is no way or point to
solve them. In fact they are generators of standard field equations which
result after certain assumptions are made which specify a "model"
gravitational universe and whose solutions reveal the properties of the
specific model universe. This is the reason that exist so many (and still
are produced) "solutions" of the GR\ field equations in the literature
\cite{exact,Kransinski}.

The purpose of this article is to develop in a systematic way the structure of
a generic model for GR so that one has a clear understanding of the
impact of the geometric and the physical assumptions made in the
construction of a certain gravitational model within the framework of GR.

The generic model of GR\ consists of two parts:

a) The first part concerns all possible parameters which are used in order to
build a GR model. These parameters are classified in two large sets (i)
Geometric restrictions / assumptions and (ii) Physical restrictions / requirements

b) The second part consists of all constrains the parameters of the first part
must satisfy in order to lead to a mathematically consistent relativistic
model. These constrains are also classified in two large sets\\
 (i)
Constraints resulting from geometric identities and geometric relations in
general and \\
(ii) Constraints due to physical simplifications which have the
general name equations of state.

The constrains in the two sets are not
independent, in the sense, that parameters from one set constraint the
parameters in the second set and vice versa. This is due to the GR\ field
equations which relate the geometry with the physics. It is apparent that when
one makes assumptions on the physics of a model these assumptions have to be
compatible with the geometric requirements of the model. If this is not the
case then there is the possibility that the proposed gravitational model is
inconsistent hence an invalid\ model.

The Physics today is still in its Aristotelian form, that is, a statement /
prediction / result is either "true" (i.e. justified by the world "out there")
or not. This is the logic of zero and one. As it is well known today the
Aristotelian point of view has been replaced in many areas (especially the
technological) by the Fuzzy Physics approach which is beyond the logic or true
or false. This type of Physics has yet to come. However because GR is
definitely a theory of the Aristotelian type we are not to worry about that
type of future developments and we shall follow the classical path.

\section{The generic model of GR}

Although the theories of Physics follow the general pattern of the theories of
other sciences, they have the unique characteristic that they use mathematical
entities to model actual physical quantities of the real world. The
association of these two very different type of entities imposes certain
requirements which a theory of Physics must satisfy in order to be meaningful.
In general terms every theory of Physics must have the following ingredients.

\section{The background mathematical space}

Physics does not build its models in the `real' space i.e. the space we live
in. This is the case only in Newtonian Physics where we have a direct sense of
the evolution of physical systems in space due to the direct sensory
conception we have for the Newtonian world. In relativistic models of Physics
we have an `indirect sensory' conception of the world by means of our
measuring instruments. The first relativistic theory which was proposed was of
course Special Relativity where people could not have a direct sensory feeling
therefore they were unable to understand and anxious to disprove it by means
of the many paradoxes proposed. In short it was violating their `common sense'
the latter be possible to be put forward as the rule\ "I\ see it, I\ believe it!"

The trouble is the use of the word `space' to describe both a mathematical
entity (software) and a physical entity (hardware). The physical entity is
what it is `out there'. The mathematical entity is a set with certain
mathematical structures. Our intention here is not to enter into details about
these delicate matters, which in any case are not first priority questions
to a general physicist.

The mathematical space used by the theories of Physics is a point set which
has the structure of a manifold. The main characteristic of the manifold
structure is that it is locally diffeomorphic to an open set in a flat space
of some dimension (called the dimension of the manifold) therefore it can be
covered by a set of coordinate patches from the flat space in such a way that
whenever two coordinate patches coincide there is a differentiable map which
relates them either way. In General Relativity the set of points are the
events which are assumed to correspond to the various events in the real
world. The coordinate patches are called coordinate systems and are
diffeomorphic to open neighborhoods of $R^{4}$. The differentiable maps which
relate two coordinate systems are called coordinate transformations. This
manifold is called spacetime\footnote{In addition to the local structure
spacetime, there exists a global structure described by its topology. However
these topological properties are not of our interests.}.

The coordinate transformations form a group under the action of composition of
maps. This group is an infinite dimensional Lie group called the Manifold
Mapping Group (MMG) \cite{anderson}. The mathematical quantities which in any
coordinate system are described by means of a set of components so that under
coordinate transformations they transform in a definite way we call geometric
objects. We say that the geometric objects form a representation of the MMG.
The nature of a geometric object is characterized by the its components
transform under coordinate transformations in the manifold. The geometric
objects which transform in a linear and homogeneous manner we call tensors.
In the following we shall restrict our considerations to tensors.

 This is as
far as one can go with the assumption of manifold structure of the background
mathematical space. However this mathematical structure is not enough to study the
physical phenomena because it lacks the concept of "measure".  It is safe to say that in all theories of Physics
the concept of "measure" is introduced by the
 requirement of the existence of a specific
geometric object on the manifold which is the metric. Let us see how this is done.

It is possible (but not necessary!) that besides the MMG group a theory of Physics introduces an
additional characteristic subgroup of the MMG. This is achieved by considering
an additional inherent structure by means of an absolute\footnote{By absolute
we mean that there exist coordinate systems called inertial coordinate systems
which are defined all over the space and in which the metric has its canonical
form, that is it is represented by a diagonal square matric whose elements are
$\pm1.$ The linear coordinate transformations which preserve the canonical
form of the metric forma group which is the characteristic group of the
theory.} metric tensor defined all over the space. This subgroup defines
special classes of tensors by the requirement that they transform covariantly
under the coordinate transformations derived by the special subgroup. For
example, in Special Relativity one assumes the Lorentz metric whose canonical
form\footnote{Canonical form of a metric in a coordinate system is defined by
the requirement that in that coordinate system the metric is represented
with a diagonal matrix whose entries are $\pm1.$} is preserved under the
coordinate transformations which we know as Lorentz transformations and are
elements of the Lorentz group. The geometric objects which transform
covariantly under the Lorentz transformations we call Lorentz tensors.
Similarly, in Newtonian Physics one assumes the Euclidian metric whose
canonical form is the tensor $\delta_{\mu\nu}.$ The coordinate transformations
which preserve $\delta_{\mu\nu}$ we call Galilean transformations and they
form the Galilean group which defines the Newtonian tensors. In GR\ we do not
assume an absolute metric\footnote{Assumes only that at each point there is a
coordinate system in which the metric takes its canonical form which is
the canonical form of the Lorentz metric i.e. $diag(-1,1,1,1).$%
}; hence, it does not\ exist a special subgroup and the geometric objects of
the theory are general tensors.

\subsection{The role of Geometry}

The introduction of a metric, which is a tensor or
order (0,2), in a general manifold defines an additional structure which we call
Geometry. Using geometry one is able to define a correspondence of the geometric objects with
physical entities and define the covariant derivative which is necessary in the formulation
of the Laws of Physics.  Let us see how this is done in the classical theories of Physics.

In Newtonian Physics the physical requirement is that there exist Euclidian
solids, that is objects whose points are such that their Euclidian distance
does not change as they change their state. This postulates the absolute character of the
Euclidian metric. The geometry defined in the 3d manifold $R^3$ by the Euclidian
metric is the so-called Euclidian geometry which makes the space to be the
Euclidian space $E^{3}.$ In practical terms this means that under linear
transformations which are isometries of the Euclidian metric $E^{3}$ the laws
of Physics stay the same. These transformations form a group which is called
the Galilean group. The tensors of the space which are transformed covariantly
with respect to the Galilean group are called Newtonian tensors. All Newtonian
Physics develops on the space $E^{3}$ and all Newtonian physical quantities are described
by Newtonian tensors. Furthermore all Newtonian Laws relate Newtonian tensors only.

In Special Relativity, the physical requirement is the invariance of the speed
of light. Extending this to more general than the light phenomena leads to the
postulation of the Lorentz metric and the introduction of the Minkowski space
$M^{4}$ where the Physics of Special Relativity is done. The isometry group
of the Lorentz metric is the Lorentz group (or more correctly the Poincar\'{e}
group)\ whose linear transformations are the Lorentz transformations. These
 transformations define the special relativistic tensors. It is important
to note that the Physics which is based on $M^{4}$ will be different form the
Newtonian Physics which is based on $E^{3}.$ This means that the relativistic
phenomena are not necessarily Newtonian and certainly the Newtonian are not
relativistic. For instance there are no Euclidian solids in Special
Relativity. Therefore there is no point to say that one theory is wrong and
the other correct. Simply each theory (provided it has a "touch" with
reality)\ applies to its own physical phenomena. Of course people would like
to have the "theory of everything" but this is another characteristic aspect of the
human utopia

Concerning GR the situation is different. The main interest of GR
is to develop a theory for the gravitational field while at the
same time to reduce to Special Relativity when the gravitational field is
switched off. Therefore in this theory one keeps the notion of metric and
associates it with the gravitational field. The difference with the previous
two theories is that there is no a globally defined metric therefore there
does not exist a characteristic subgroup of the MMG group for this theory.
This implies that the tensors which are used in GR are general tensors which
transform linearly and homogeneously with respect to the transformations of
the MMG\ group. Furthermore, because at each point we want the theory to
reduce to Special Relativity, it is postulated that the signature of the
metric will be that of the Lorentz metric, that is $-2.$

The result of considering a general metric is that in the manifold of GR\ one
defines a Riemannian geometry (i.e. a geometry with curvature) which is characterized by the introduction of
the notion of the Riemannian covariant derivative. The latter  is defined by the requirements
that it is symmetric (the torsion tensor vanishes) and also metrical (the covariant derivative of
the metric vanishes, or as we say, it has zero metricity). The manifold
structure endowed with the Riemannian geometry we call spacetime. Spacetime is
the background space of General Relativity where all gravitational phenomena
shall be considered.

The introduction of Riemannian geometry has many new effects which are hidden
in the spaces $E^{3}$ and $M^{4}$ whose geometry is also Riemannian but
trivial. These effects are the following.

\begin{itemize}
\item Using the metric and its derivatives one introduces many new geometric
objects which vanish in $E^{3}$and $M^{4}.$ These are the connection
coefficients, the Ricci tensor $R_{ab},$ the curvature tensor $R_{abcd}$
etc.$.$ With a collective name we shall call them metric tensors.

\item There are geometric identities which must be satisfied by the metric
tensors. These main identities are the following

A) The Ricci identity: For a tensor of type $(r,s)$ this identity is:%
\begin{equation}
T_{\phantom{.........}l_{1}...l_{s};h;k}^{j_{1}...j_{r}}%
-T_{\phantom{.........}l_{1}...l_{s};k;h}^{j_{1}...j_{r}}=\sum_{a=1}^{r}%
R_{m}^{\phantom{...}j_{a}}{}_{hk}%
T_{\phantom{.............................}l_{1}...l_{s}}^{j_{1}..j_{a-1}%
m.j_{a+1....}j_{r}}-\sum_{\beta=1}^{r}R_{m}^{\phantom{...}j_{a}}{}%
_{hk}T_{\phantom{.........}l_{1}..l_{\beta-1}ml_{b+1}...l_{s}}^{j_{1}...j_{r}}%
\end{equation}

In particular for a vector field $X^{a}$ Ricci identity reads:%
\begin{equation}
X_{a;bc}-X_{a;cb}=R_{dabc}X^{d}. \label{Key.1.a}%
\end{equation}
B) The Bianchi identities:%
\begin{align}
R_{ab[cd;e]}  &  =0\label{BI.01}\\
C_{\mbox{ \phantom ...};d}^{abcd}  &  =R^{c[a;b]}-\frac{1}{6}g^{c[a}R^{;b]}
\label{BI.02}%
\end{align}
where $C_{abcd}$ is the Weyl tensor\footnote{To be defined in the following.} In a
four dimensional Riemannian space these two identities are
equivalent\footnote{This equivalence holds for $n=4$ only. See Kundt and
Trumper Abh. Akad.Wiss. and Lit. Mains., Mat. Nat.Kl.,No \textbf{12} (1962).}
therefore in spacetime they reduce to one identity. \ These identities imply
the important contracted Bianchi identity:
\begin{equation}
R_{....;b}^{ab}-\frac{1}{2}R^{;a}=0\Leftrightarrow G_{...;b}^{ab}=0
\label{BI.03}%
\end{equation}
where:
\begin{equation}
G_{ab}=R_{ab}-\frac{1}{2}Rg_{ab} \label{BI.04}%
\end{equation}
is the Einstein tensor.

\item A third result of the geometrical structure is the possibility it
provides to introduce new geometric requirements which will act as
constraints additional to the inherent ones assumed by the Riemannian structure of spacetime.
The major type of such geometric constraints are the "symmetries" or collineations of
the metric objects. A symmetry or collineation is a relation of the form
\begin{equation}
\,L_{X}A=B \label{Key.4}%
\end{equation}
where $X$ is a vector field and $A,B$ are metric geometric objects with the
same number of indices and the same symmetry properties of the indices. The
following are examples of collineations \cite{katzin1}:%
\[%
\begin{array}
[c]{ll}%
\mathcal{L}_{\xi}g_{ab}=0 & \mbox{Killing Vector}\\
\mathcal{L}_{\xi}g_{ab}=2cg_{ab},\;c=const. & \mbox{Homothetic Killing
Vector}\\
\mathcal{L}_{\xi}g_{ab}=2{\psi}g_{ab},\;{\psi}\not =const. & \mbox{Conformal
Killing Vector}\\
\mathcal{L}_{\xi}{\Gamma}_{jk}^{i}=0 & \mbox{Affine collineation Vector}\\
\mathcal{L}_{\xi}R_{ab}=0 & \mbox{Ricci collineation}\\
\mathcal{L}_{\xi}R_{jkl}^{i}=0 & \mbox{Curvature collineation}\\
g^{ab}\mathcal{L}_{\xi}R_{ab}=0 & \mbox{Contracted Ricci collineation.}\\
\mathcal{L}_{\xi}W_{jkl}^{i}=0 & \mbox{Contracted Ricci collineation.}
\end{array}
\]

\item Other geometrical constraints.
These are mathematical requirements introduced `by hand' after the previous
two levels of simplifying assumptions have been exhausted. The purpose of
their introduction is to simplify further the equations obtained by the
previous simplifying assumptions. Their form is depends on the form of the
equations they have to simplify. For example in the case of a Conformal
Killing Vectors (CKV) such a requirement is $\psi_{;ab}=0$ because this
condition removes $\psi_{;ab}$ from the field equations. A CKV which satisfies
this condition is called a special CKV.
\end{itemize}

It is to be noted that up to now our discussion is limited to the background
space only, therefore the results apply to all physical fields introduced or,
to be more specific, to both the kinematics and the dynamics of the whatever
gravitational theory will be developed on this spacetime.

\subsection{Some useful material about collineations}

Each collineation has a different constraint `power'.
A collineation is called proper if it cannot be reduced to a `simpler' one.
For example, the CKVs contain the homothetic Killing vectors (HKV) when the
conformal function $\psi=$constant. A proper CKV is one for which $\psi$ is
not constant.

The vectors which satisfy $\mathcal{L}_{\xi}g_{ab}=0$ are called Killing
vectors (KVs) and the equation $\mathcal{L}_{\xi}g_{ab}=0$ Killing equation. In
an n-dimensional space with a non-degenerate metric there exist at most
$\frac{n(n+1)}{2}$ KVs. If this is the case then the space is called a space
of constant curvature. The KVs serve all standard geometric symmetry concepts
e.g. spherical symmetry, cylindrical symmetry etc.

The KVs form a finite dimensional Lie algebra. This algebra can be used to
classify the metrics in various classes in the sense that one is able to write
the metric in a form that takes into account the symmetry and it is written in
terms of a small number of parameters.

A well known example is the Friedmann - Robertson - Walker (FRW) spacetime
which is used in the standard cosmology. The metric of this spacetime has the
form:%
\begin{equation}
ds_{RW}^{2}=-dt^{2}+R(t)^{2}(dx^{2}+dy^{2}+dz^{2}) \label{Key.5}%
\end{equation}
where $(t,x,y,z)$ are Cartesian coordinates and \thinspace$R(t)$ is the only
unknown function of (the coordinate)\ time $t$ and can be defined by means of
the following symmetry assumptions \cite{M. Tsamparlis 1992}:

\begin{enumerate}
\item Spacetime admits a timelike gradient Conformal Killing Vector (CKV) such
that there exists another reduced metric for which the CKV\ is a KV

\item The 3-D hypersurfaces orthogonal to the CKV are spaces of constant
curvature, consequently the FRW spacetime admits six KVs.
\end{enumerate}

The collineations other than Killing vectors fix the metric to a lesser
degree, however they do act as constraints and can be used in any case as such.

\subsection{The generic collineation}

One can prove that the Lie derivative of every metric geometric object in
Riemannian\ Geometry can be expressed in terms of the quantity $\mathcal{L}%
_{\xi}g_{ab}$ and its derivatives. For instance, the following relations are true:%

\begin{align}
L_{\xi}\Gamma_{\text{ }ab}^{c}  &  =\frac{1}{2}g^{cd}\left[  \left(  L_{\xi
}g_{ad}\right)  _{;b}+\left(  L_{\xi}g_{bd}\right)  _{;a}-\left(  L_{\xi
}g_{ab}\right)  _{;d}\right] \label{Key.6}\\
L_{\xi}R_{\text{ }bcd}^{a}  &  =\left(  L_{\xi}\Gamma_{\text{ }bd}^{a}\right)
_{;c}-\left(  L_{\xi}\Gamma_{\text{ }bc}^{a}\right)  _{;d}\label{Key.7}\\
L_{\xi}R_{\text{ }ab}  &  =\left(  L_{X}\Gamma_{\text{ }ab}^{c}\right)
_{;c}-\left(  L_{X}\Gamma_{\text{ }ca}^{c}\right)  _{;b}. \label{Key.8}%
\end{align}

This observation leads us to introduce the concept of \textbf{generic
collineation} by the relation/identity:%
\begin{equation}
L_{\xi}g_{ab}=2\psi g_{ab}+2H_{ab} \label{Key.9}%
\end{equation}
where $H_{ab}$ is a symmetric traceless tensor. Then it is possible to express
every collineation in terms of the symmetry variables $\psi,H_{ab}$ and their
derivatives. This approach greatly unifies and simplifies the study of the
effects of each collineation and reveals its relative significance.

As an example we refer the following general result:%
\begin{equation}
L_{\xi}R_{ab}=-\left(  n-2\right)  \psi_{;ab}-g_{ab}\square\psi+2H_{.\left(
a;b\right)  d}^{d}-\square H_{ab}. \label{Key.10}%
\end{equation}
In terms of trace and traceless parts $L_{\xi}R_{ab}$ is written as follows:%
\begin{equation}
L_{\xi}R_{ab}=-\left(  n-2\right)  A_{ab}+2K_{ab}-\square H_{ab}+\frac{1}%
{n}g_{ab}\left[  -2\left(  n-1\right)  \square\psi+2H_{.;ab}^{ab}\right]
\label{Key.11}%
\end{equation}
where the traceless tensors:%
\begin{equation}
A_{ab}=\psi_{;ab}-\frac{1}{n}g_{ab}\square\psi~\text{and }K_{ab}=H_{.\left(
a;b\right)  d}^{d}-\frac{1}{n}g_{ab}H_{.;cd}^{cd}. \label{Key.12}%
\end{equation}
In spacetime $(n=4)$ the above formulae read:%
\begin{equation}
L_{\xi}R_{ab}=-2\psi_{;ab}-g_{ab}\square\psi+2H_{.\left(  a;b\right)  d}%
^{d}-\square H_{ab} \label{Key.13}%
\end{equation}

\begin{equation}
L_{\xi}R_{ab}=-2A_{ab}+2K_{ab}-\square H_{ab}+\frac{1}{4}g_{ab}\left(
-6\square\psi+2H_{.;ab}^{ab}\right)  \label{Key.14}%
\end{equation}

\begin{equation}
A_{ab}=\psi_{;ab}-\frac{1}{4}g_{ab}\square\psi~\text{and }K_{ab}=H_{.\left(
a;b\right)  d}^{d}-\frac{1}{4}g_{ab}H_{.;cd}^{cd}. \label{Key.15}%
\end{equation}

This is as far as one can go with the spacetime structure (i.e. the introduction of the metric
tensor). It is possible to continue making Geometry in spacetime but no
Physics. To do the latter one has to introduce more tensor fields which will
describe / correspond to the physical quantities.

\section{The observers - Kinematics}

The main purpose of all physical theories is to "explain" the real world
(whatever "real" means) to us, the observers who observe the physical
quantities and describe their evolution in the cosmos. Therefore all theories
of Physics must have two types of entities:\ Observers and observed phenomena.
In non-quantum Physics these two entities are separate in the sense that they
do not interact. In quantum Physics this is not the case and one has the so
called Uncertainty Principle which, however, will not concern us here. Both
types of entities must be described mathematically in the space of the theory
by a certain type of geometric object.

In GR the observers are represented in the same way as
in Special Relativity, that is, by a timelike world line whose tangent unit
four vector $u^{a}$ is the the four velocity of the observers. In GR there
is no way to define a special class of inertial observers because it is not
assumed to exist a characteristic group of transformations defined all over the
spacetime manifold. Instead of them, one defines the closest type of observers
which are the free falling observers whose spacetime trajectory is a geodesic
of the metric tensor considered by the specific model. All other classes of
observers in the same spacetime are characterized as accelerating observers. We
note that each gravitational model in GR has its own free falling observers
depending on the metric it adopts. Also when the gravitational field is
switched off the geodesics are straight lines in the space $M^{4}$ therefore
the observers coincide with those of Special Relativity.

The world lines of a class of observers in GR constitute the `fluid' of
observers in spacetime and this fluid is used to express the kinematic
properties of these observers in terms of this fluid flow quantities

As we have already mentioned the assumed structure of Riemannian geometry of
spacetime introduces identities among the tensor fields which act as
"internal" constraints universal to all models one may consider. In addition to
these \textquotedblright internal\textquotedblright\ identities, each model
theory introduces new geometric relations which act as \textquotedblright
external\textquotedblright\ constraints to the theory. The two types of
constraints must be satisfied by the four velocity vector of the observers of
the model theory. This fact defines the first level of the physical theory
which is called kinematics and it involves only the considered observers and the
geometric constraints (internal and external)\ imposed by the model. On a
given kinematics one is possible to build various  models by defining
a different dynamics of the theory.

The introduction of a non-null vector field in spacetime allows one to
decompose a tensor equation and an individual tensor parallel and normal to
the vector field. The vector field can be either timelike or spacelike. This
decomposition is called \textbf{1+3 decomposition} for obvious reasons. It
is of primary importance to Physics because it is covariant, in the sense that
each irreducible part transforms under coordinate transformations
independently of the other. Therefore, one can break the study the Physics
(Kinematics and Dynamics) of a tensor field or a field equation by studying the
simpler Physics of its irreducible parts.

Below we develop the basic mathematics of the 1+3 decomposition with respect
to  a non-null vector field $P^{a}$. The results of this analysis can be extended easily to
the $1+(n-1)$ decomposition in an $n-$dimensional spacetime by making the necessary adjustments.

\subsection{1+3 decomposition wrt a non-null vector field}

Consider a non-null ($P^{i}P_{i}\neq0)\ $vector field $P^{i}$ with signature
$\varepsilon(P)=P^{i}P_{i}/|P^{i}P_{i}|=\pm1,$ where the $+1$ applies to a
spacelike 4-vector and the $-1$ to a timelike 4-vector in a metric space with
metric $g_{ij}$. The projection tensor $h_{ij}(P)$ associated with $P^{i}$ is
defined by the equation:%

\begin{equation}
h_{ij}\left(  P\right)  =g_{ij}-\frac{\varepsilon\left(  P\right)  }{P^{2}%
}P_{i}P_{j} \label{d.one.1}%
\end{equation}
where $P^{2}=|P^{i}P_{i}|=\varepsilon(P)P^{i}P_{i}>0.$ In a four dimensional
manifold, it is easy to prove the properties:%
\[
h_{ij}(P)P^{j}=0,~h^{ij}(P)h_{jk}(P)=h_{k}^{i}(P),~h_{i}^{i}(P)=3.
\]
The tensor $h_{ij}\left(  P\right)  $ projects normal to $P^{a}$ and gives us
the possibility to decompose any tensor field in irreducible parts in a
direction parallel to $P^{a}$ and another normal to $P^{a},$ as given by the
following propositions.

\begin{proposition}
\qquad A general vector field $R^{i}$ \ can be 1+3 decomposed wrt $P^{i}$ as follows:%

\begin{equation}
R^{i}=\alpha P^{i}+\beta^{j}h_{j}^{i} \label{d.one.2}%
\end{equation}
where
\begin{align}
\alpha &  =\frac{\varepsilon(P)}{P}R^{i}P_{i}\\
\beta^{i}  &  =R^{i}%
\end{align}

\end{proposition}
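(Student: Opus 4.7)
The plan is to derive the stated decomposition from the single algebraic identity
$\delta^{i}_{j} = h^{i}_{j}(P) + \frac{\varepsilon(P)}{P^{2}} P^{i} P_{j}$,
which follows from the definition (\ref{d.one.1}) upon raising one index with $g^{ij}$. Everything in the proposition is then an immediate corollary of contracting this identity with $R^{j}$.

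First I would verify the identity by raising an index in (\ref{d.one.1}) and re-confirming the basic properties $h^{i}_{j}(P) P^{j} = 0$ and $h^{i}_{i}(P) = 3$ quoted just before the statement. Contracting with $R^{j}$ then gives
$R^{i} = \frac{\varepsilon(P)}{P^{2}}(P_{j}R^{j})\,P^{i} + h^{i}_{j} R^{j}$,
which is exactly (\ref{d.one.2}) once one reads off $\alpha = \frac{\varepsilon(P)}{P^{2}} R^{j} P_{j}$ and identifies the perpendicular piece as $h^{i}_{j} R^{j}$. The factor of $P$ (rather than $P^{2}$) in the displayed expression for $\alpha$ appears to be a typographical slip; the correct normalisation is forced independently by contracting (\ref{d.one.2}) with $P_{i}$ and using $P^{i}P_{i} = \varepsilon(P) P^{2}$ together with $h^{i}_{j} P_{i} = 0$, which yields $\alpha = \varepsilon(P) R^{i} P_{i}/P^{2}$.

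A short remark is in order about the written formula $\beta^{i} = R^{i}$. Since $h^{i}_{j}$ annihilates the component of any vector along $P^{i}$, the perpendicular contribution $\beta^{j} h^{i}_{j}$ is insensitive to the addition of a multiple of $P^{j}$ to $\beta^{j}$; in particular the choice $\beta^{j} = R^{j}$ delivers the same perpendicular vector $h^{i}_{j} R^{j}$ as the canonical choice $\beta^{j} = h^{j}_{k} R^{k}$, so the two conventions coincide in their effect on (\ref{d.one.2}). Uniqueness of the parallel/perpendicular split itself is then immediate: contracting any competing decomposition with $P_{i}$ determines $\alpha$ uniquely, and the residual vector, being the difference of two vectors in the image of the projector $h^{i}_{j}$, is forced to vanish.

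There is no genuine obstacle here: the entire content of the proposition is the observation that $\frac{\varepsilon(P)}{P^{2}} P^{i} P_{j}$ and $h^{i}_{j}(P)$ are complementary projectors summing to $\delta^{i}_{j}$, a fact already built into (\ref{d.one.1}). The only point requiring mild care is the bookkeeping reconciling the coefficient $\alpha$ printed in the statement with the one forced by the projector identity, which I expect to handle in a single line.
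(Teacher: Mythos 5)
Your proof is correct and is essentially identical to the paper's own argument: both rest on writing $R^{i}=R^{j}g^{i}_{j}$ and splitting $g^{i}_{j}$ into $\frac{\varepsilon(P)}{P^{2}}P^{i}P_{j}+h^{i}_{j}(P)$ via (\ref{d.one.1}), then contracting with $R^{j}$. Your observation that the printed coefficient $\alpha=\frac{\varepsilon(P)}{P}R^{i}P_{i}$ should carry $P^{2}$ in the denominator is also borne out by the paper's own displayed computation.
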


\begin{proof}
We have $R^{i}=R^{j}g_{j}^{i}.$ Using (\ref{d.one.1}) this gives:%
\begin{align*}
R^{i}  &  =R^{j}g_{j}^{i}\\
&  =R^{j}\left(  \frac{\varepsilon\left(  P\right)  }{P^{2}}P_{j}P^{i}%
+h_{j}^{i}\right) \\
&  =\frac{\varepsilon\left(  P\right)  }{P^{2}}(R^{j}P_{j})P^{i}+h_{j}%
^{i}R^{j}.%
\end{align*}

\end{proof}

\begin{proposition}
A general second order tensor $Y_{ij}$ is 1+3 decomposed wrt the vector
$P^{i}$ by means of the identity:%
\begin{equation}
Y_{ij}=\alpha P_{i}P_{j}+\varepsilon\left(  P\right)  \beta_{k}h_{j}^{k}%
P_{i}+\varepsilon\left(  P\right)  \gamma_{k}h_{i}^{k}P_{j}+\epsilon_{ij}
\label{d.one.3}%
\end{equation}
where%
\[
\alpha=\frac{1}{P^{4}}Y_{ij}P^{i}P^{j}\,\,,\,\,\beta_{i}=\frac{1}{P^{2}}%
Y_{ij}P^{j}\,\,,\,\,\gamma_{i}=\frac{1}{P^{2}}Y_{ji}P^{j}\,\,,\,\,\epsilon
_{ij}=Y_{kr}h_{i}^{k}h_{j}^{r}%
\]

\begin{proof}
We write $Y_{ij}=\allowbreak Y_{kr}g_{i}^{k}g_{j}^{r}$ and using
(\ref{d.one.1}) we have:%

\begin{align*}
Y_{ij}  &  =Y_{kr}g_{i}^{k}g_{j}^{r}\\
&  =Y_{kr}(\left(  \frac{1}{P^{2}}\varepsilon\left(  P\right)  P^{k}%
P_{i}+h_{i}^{k}\right)  \left(  \frac{1}{P^{2}}\varepsilon\left(  P\right)
P^{r}P_{j}+h_{j}^{r}\right) \\
&  =\left(  \frac{1}{P^{4}}Y_{kr}P^{k}P^{r}\right)  P_{i}P_{j}+\frac{1}{P^{2}%
}\varepsilon\left(  P\right)  \left(  Y_{kr}P^{k}\right)  P_{i}h_{j}^{r}%
+\frac{1}{P^{2}}\varepsilon\left(  P\right)  \left(  Y_{kr}P^{r}\right)
h_{i}^{k}P_{j}+Y_{kr}h_{i}^{k}h_{j}^{r}\\
&  =\alpha P_{i}P_{j}+\varepsilon\left(  P\right)  \beta_{k}h_{j}^{k}%
P_{i}+\varepsilon\left(  P\right)  \gamma_{k}h_{i}^{k}P_{j}+\epsilon_{ij}.
\end{align*}

\end{proof}
\end{proposition}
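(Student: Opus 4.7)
The plan is to mimic the strategy just used for the vector decomposition: start from the tautology $Y_{ij}=Y_{kr}\,g_{i}^{\,k}\,g_{j}^{\,r}$ and insert the 1+3 split of the Kronecker tensor that is implicit in (\ref{d.one.1}). Solving (\ref{d.one.1}) for the metric and raising one index yields $g_{i}^{\,k}=\frac{\varepsilon(P)}{P^{2}}\,P^{k}P_{i}+h_{i}^{\,k}$, which expresses the fact that any free index may be resolved into a component along $P^{a}$ and a component in the orthogonal $3$-space.

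First I would substitute this identity for both $g_{i}^{\,k}$ and $g_{j}^{\,r}$ in the tautology and expand the resulting product. This produces four terms of the schematic form (parallel$\otimes$parallel), (parallel$\otimes$projected), (projected$\otimes$parallel) and (projected$\otimes$projected). Collecting the scalar contractions of $Y$ with $P$ and with $h$, these four terms match exactly the structure claimed in (\ref{d.one.3}): one term proportional to $P_{i}P_{j}$, two mixed terms proportional to $P_{i}h_{j}^{\,r}$ and to $h_{i}^{\,k}P_{j}$, and one fully transverse term proportional to $h_{i}^{\,k}h_{j}^{\,r}$.

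Next, to read off the coefficients in closed form and at the same time confirm uniqueness, I would contract the asserted identity (\ref{d.one.3}) successively with $P^{i}P^{j}$, with $P^{i}h_{l}^{\,j}$, with $h_{l}^{\,i}P^{j}$, and with $h_{l}^{\,i}h_{m}^{\,j}$. Using $h_{ij}P^{j}=0$ and $h_{i}^{\,k}h_{k}^{\,l}=h_{i}^{\,l}$ from the previous subsection, each of these contractions kills three of the four terms on the right-hand side and so isolates one of $\alpha$, $\beta_{l}$, $\gamma_{l}$, $\epsilon_{lm}$, reproducing the closed-form expressions stated in the proposition. Comparing with the expansion of the first step then verifies that both routes yield the same identity.

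The only genuine obstacle is the bookkeeping of the signature factor $\varepsilon(P)$ and of the positive square $P^{2}=\varepsilon(P)P^{i}P_{i}$. In the parallel$\otimes$parallel term the two copies of $\varepsilon(P)$ from the two substitutions combine via $\varepsilon(P)^{2}=1$, producing the clean prefactor $1/P^{4}$ inside $\alpha$; in the two mixed terms a single $\varepsilon(P)$ survives, which is exactly the prefactor visible in (\ref{d.one.3}); and the projected$\otimes$projected term carries no such factor. Once these signs are tracked and the order of the contracted and free indices of $Y$ is respected in the definitions $\beta_{i}=Y_{ij}P^{j}/P^{2}$ and $\gamma_{i}=Y_{ji}P^{j}/P^{2}$, the derivation is completely mechanical and runs parallel to the vector case.
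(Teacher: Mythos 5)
Your proposal is correct and follows essentially the same route as the paper: insert the resolution $g_{i}^{\,k}=\frac{\varepsilon(P)}{P^{2}}P^{k}P_{i}+h_{i}^{\,k}$ into the tautology $Y_{ij}=Y_{kr}g_{i}^{\,k}g_{j}^{\,r}$, expand the product into the four parallel/projected blocks, and track the $\varepsilon(P)$ factors exactly as the paper does. The additional step of contracting with $P^{i}P^{j}$, $P^{i}h_{l}^{\,j}$, etc.\ to confirm uniqueness of the coefficients is a harmless (and tidy) supplement that the paper omits.
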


We note that a symmetric  tensor of rank 2  is specified (and equivalently specifies) in terms
of five different quantities: one scalar, two vector and one projected second
rank tensor.

\subsection{1+3 decomposition wrt a timelike unit vector field}

In the following we consider $P^{i}$ to be a normalized timelike vector field
$\left(  \varepsilon\left(  P\right)  =-1\right)  $ (e.g. a 4-velocity) and we
denote it $u^{i}$. Then equations (\ref{d.one.1}), (\ref{d.one.2}),
(\ref{d.one.3}) give the 1+3 decomposition wrt $u^{i}$ ($u^{i}u_{i}=-1):$
\begin{align}
h_{ij}\left(  u\right)   &  =g_{ij}+u_{i}u_{j}\label{d.one.4}\\
R^{i}  &  =-\left(  R^{j}u_{j}\right)  u^{i}+R^{j}h_{j}^{i}\left(  u\right)
\label{d.one.5}\\
Y_{ij}  &  =\left(  Y_{kr}u^{k}u^{r}\right)  u_{i}u_{j}-\left(  Y_{kr}%
u^{k}\right)  u_{i}h_{j}^{r}-\left(  Y_{kr}u^{r}\right)  h_{i}^{k}u_{j}%
+Y_{kr}h_{i}^{k}h_{j}^{r}. \label{d.one.6}%
\end{align}

For a \textbf{symmetric }tensor of type (0,2) decomposition (\ref{d.one.6}) is
written as follows:%
\begin{equation}
Y_{ab}=(Y^{rs}u_{r}u_{s})u_{a}u_{b}-Y^{rs}u_{r}h_{sb}u_{a}-Y^{rs}u_{s}%
h_{ra}u_{b}+\frac{1}{3}(Y^{rs}h_{rs})h_{ab}+(h_{a}^{r}h_{b}^{s}-\frac{1}%
{3}h_{ab}h^{rs})Y_{rs} \label{N.2}%
\end{equation}
that is we brake further the symmetric part in a trace and a traceless part.

We consider now various applications of the 1+3 decomposition wrt the vector
field $u^{a}.$

\subsubsection{The kinematic variables of the four-velocity}

We perform the 1+3 decomposition of the tensor $u_{i;j}$. By definition, from
(\ref{d.one.4}) we have the following identity/decomposition:%

\begin{equation}
u_{i;j}=\left(  u_{k;r}u^{k}u^{r}\right)  u_{i}u_{j}-\left(  u_{k;r}%
u^{k}\right)  u_{i}h_{j}^{r}-\left(  u_{k;r}u^{r}\right)  h_{i}^{k}%
u_{j}+u_{k;r}h_{i}^{k}h_{j}^{r}. \label{d.one.7}%
\end{equation}
Moreover,$\ u_{i;j}u^{i}=\frac{1}{2}\left(  u_{i}u^{i}\right)  _{;j}=0$ and
$u_{i;j}u^{i}=\dot{u}_{j}=h_{j}^{r}\dot{u}_{r}.$ Therefore, identity
(\ref{d.one.7}) is simplified as follows
\begin{equation}
u_{i;j}=-\dot{u}_{i}u_{j}+u_{k;r}h_{i}^{k}h_{j}^{r}. \label{d.one.8}%
\end{equation}

We continue by decomposing the space-like part $u_{k;r}h_{i}^{k}h_{j}^{r}$ in
an antisymmetric and a symmetric part as follows:
\begin{align}
u_{k;r}h_{i}^{k}h_{j}^{r}  &  =\omega_{ij}+\theta_{ij}\label{d.one.9}\\
\omega_{ij}  &  =u_{k;r}h_{[i}^{k}h_{j]}^{r}\label{d.one.10}\\
\theta ij  &  =u_{k;r}h_{(i}^{k}h_{j)}^{r}. \label{d.one.11}%
\end{align}
The symmetric part can be decomposed covariantly further to a trace and a
traceless part, that is we write:
\begin{equation}
\theta_{ij}=\sigma_{ij}+\frac{1}{3}\theta h_{ij} \label{d.one.12}%
\end{equation}
where%
\begin{align*}
\theta &  =\theta_{i}^{i}=h^{ij}u_{i;j}\\
\sigma_{ij}  &  =\theta_{ij}-\frac{1}{3}\theta h_{ij}=\left[  h_{(i}^{r}%
h_{r)}^{k}-\frac{1}{3}h^{rk}h_{ij}\right]  u_{r;k}.
\end{align*}

The $\ $term \ $\omega_{ij}$ is called the vorticity tensor of $u^{i},$
$\sigma_{ij}$ the shear tensor of $u^{i},$ $\theta$ the expansion of $u^{i}$
and $\dot{u}^{i}$ the four-acceleration of the timelike vector field $u^{i}$.
These are the kinematic variables of $u^{i}$, considering $u^{i}$ to be the
4-velocity of a relativistic fluid.

We infer by their definition that the kinematic variables satisfy the
properties:%
\begin{align*}
\omega_{ij}  &  =-\omega_{ji}\,,\,\omega_{ij}u^{j}=0\\
\sigma_{ij}  &  =\sigma_{ji}\,,\,\sigma_{i}^{i}\,=\,\sigma_{ij}u^{j}=0\\
\omega_{ij}  &  =h_{i}^{k}h_{j}^{r}\omega_{kr}\\
\sigma_{ij}  &  =h_{i}^{k}h_{j}^{r}\sigma_{kr}.
\end{align*}

The geometric meaning of each kinematic term is obtained from the study of the
integral curves of $u^{a}.$ We shall not comment further on that at this point.

We interpret the physical meaning each quantity in terms of relative motion.
In this respect we say that $\theta$ is expansion (isotropic strain),
$\sigma_{ab}$ is shear (anisotropic strain), $\omega_{ab}$ relative rotation
and $\overset{.}{u}_{a}$denotes the four-acceleration. These quantities (i.e.
$\sigma_{ab},\omega_{ab},\overset{.}{u}_{a},\theta$) \ are the fundamental
physical quantities of relativistic (and Newtonian with the necessary
adjustments) kinematics.

Two scalars of special interests are the following:%
\begin{align}
\sigma^{2}  &  =\frac{1}{2}\sigma_{ab}\sigma^{ab}\label{d.one.12.a}\\
\omega^{2}  &  =\frac{1}{2}\omega_{ab}\omega^{ab} \label{d.one.12.b}%
\end{align}

In order to demonstrate the geometric importance of the kinematic quantities
we refer the following proposition.

\begin{proposition}
Necessary and sufficient conditions for the vector field $u^{i}$ to be a
Killing vector is:
\[
\sigma_{ij}=0\,,\,\theta=0\,,\dot{u}^{i}=0
\]

\end{proposition}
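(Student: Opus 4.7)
The plan is to read the condition $\mathcal{L}_u g_{ij}=0$ through the 1+3 decomposition of $u_{i;j}$ already established in (\ref{d.one.8})--(\ref{d.one.12}), and then separate the three irreducible parts by projecting with $u^a$ and $h^a_b$. Since $\mathcal{L}_u g_{ij}=u_{i;j}+u_{j;i}$, the statement is really about when the symmetric part of $u_{i;j}$ vanishes.

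First I would collect the decomposition into a single identity. Substituting (\ref{d.one.9})--(\ref{d.one.12}) into (\ref{d.one.8}) gives
\begin{equation*}
u_{i;j}=-\dot{u}_{i}u_{j}+\omega_{ij}+\sigma_{ij}+\tfrac{1}{3}\theta\, h_{ij},
\end{equation*}
and symmetrizing kills $\omega_{ij}$, leaving
\begin{equation*}
\tfrac{1}{2}\mathcal{L}_{u}g_{ij}=u_{(i;j)}=-\dot{u}_{(i}u_{j)}+\sigma_{ij}+\tfrac{1}{3}\theta\, h_{ij}.
\end{equation*}
Sufficiency is then immediate: if $\sigma_{ij}=0$, $\theta=0$ and $\dot{u}_{i}=0$, every term on the right vanishes, so $\mathcal{L}_{u}g_{ij}=0$.

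For necessity, assume the left-hand side vanishes and extract each piece using the orthogonality relations $\dot{u}_{i}u^{i}=0$, $\sigma_{ij}u^{j}=0$, $h_{ij}u^{j}=0$, $\sigma^{i}{}_{i}=0$, $h^{i}{}_{i}=3$. Contracting with $u^{i}h^{j}{}_{k}$ picks out $\tfrac{1}{2}\dot{u}_{k}$ on the right and therefore yields $\dot{u}_{k}=0$. With the acceleration term gone, the residual equation reads $\sigma_{ij}+\tfrac{1}{3}\theta h_{ij}=0$; tracing with $h^{ij}$ gives $\theta=0$, after which $\sigma_{ij}=0$ follows.

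I do not expect any serious obstacle: the only thing to be careful about is that each kinematic piece lives in an orthogonal subspace of the space of symmetric two-tensors at a point ($uu$-part, $u$-$h$ mixed part, $h$-trace part and $h$-traceless part), so the vanishing of the sum forces the vanishing of each summand. The role of $u^{a}u_{a}=-1$, which was used already to derive (\ref{d.one.8}), is what guarantees the clean separation.
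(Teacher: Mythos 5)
Your proposal is correct and follows the same route as the paper: both write $L_{u}g_{ij}=2u_{(i;j)}=2\sigma_{ij}+\tfrac{2}{3}\theta h_{ij}-2\dot{u}_{(i}u_{j)}$ and read off the equivalence. The paper simply states "combining these two relations we find" the result, whereas you spell out the necessity direction by the $u^{i}h^{j}{}_{k}$ projection and the $h^{ij}$ trace — a welcome but not substantively different elaboration.
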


\begin{proof}
We have:%
\begin{equation}
L_{u}g_{ij}=2u_{\left(  i;j\right)  }=2\sigma_{ij}+\frac{2}{3}\theta
h_{ij}-2\dot{u}_{(i}u_{j)} \label{Ldd.5}%
\end{equation}
In order $u^{i}$ to be a Killing vector it should $\ $satisfy the
condition\ $L_{u}g_{ij}=0.$ Combining these two relations we find:%
\[
\sigma_{ij}=0\,,\,\theta=0\,,\dot{u}^{i}=0
\]

\end{proof}

\section{The propagation and the constraint equations}

As we remarked above the vector field of the four-velocity decomposes both the
geometric objects of the theory as well as the covariant equations among them.
The Riemannian structure of spacetime has two major geometric identities: The
Ricci identity and the Bianchi identities. The 1+3 decomposition of these
identities produces the necessary equations which must be satisfied by all the
kinematic quantities involved.

The 1+3 decomposition of the Ricci identity leads to two sets of equations
each set containing nine equations. The first set contains the derivatives of
the kinematic quantities along $u^{a}$ which we call the propagation
equations. The second set of equations is called the constraint equations
\cite{ellis}. The two sets of equations have as follows:

\textbf{Propagation equations}\footnote{Note that the propagation equation of
$\omega_{ab}$ (\ref{CPE.22}) contains only kinematic terms and it is
independent of $R_{ab}$ hence of the dynamical variables to be introduced
later on.}:
\begin{equation}
h_{a}^{b}\dot{\omega}^{a}=(\sigma_{d}^{a}-\frac{2}{3}\theta h_{d}^{a}%
)\omega^{d}+\frac{1}{2}\eta^{abcd}u_{b}\dot{u}_{c;d}\mbox{ \ \
(three
equations)} \label{CPE.22}%
\end{equation}%
\begin{equation}
\dot{\theta}+\frac{1}{3}\theta^{2}+2(\sigma^{2}-\omega^{2})=-R_{ab}u^{a}%
u^{b}+\dot{u}^{a}{}_{;a}\text{ \ (one equation) } \label{CPE.23}%
\end{equation}%
\begin{align}
-E_{st}+\frac{1}{2}\left(  h_{s}^{a}h_{t}^{b}-\frac{1}{3}h_{st}h^{ab}\right)
R_{ab}\text{ }  &  =h_{s}^{a}h_{t}^{b}\left[  \dot{\sigma}_{ab}-\dot
{u}_{(a;b)}\right]  +\sigma_{sc}\sigma_{\text{ \thinspace}t}^{c}+\mbox{(five
equations)}\nonumber\\
&  +\frac{2}{3}\sigma_{st}\theta+\omega_{s}\omega_{t}-\dot{u}_{s}\dot{u}%
_{t}-\frac{1}{3}\left(  2\sigma^{2}+\omega^{2}-\dot{u}_{;b}^{b}\right)
h_{st}. \label{CPE.24}%
\end{align}

\textbf{Constraint equations:}%

\begin{equation}
h_{b}^{a}\omega_{;a}^{b}=\dot{u}^{a}\omega_{a}\mbox{ \ (one
equation)} \label{CPE.21}%
\end{equation}%
\begin{equation}
h_{s}^{c}\left[  \frac{2}{3}\theta,_{c}-h^{ab}\sigma_{ca;b}-\eta_{camn}%
u^{a}\left(  \omega^{m;n}+2\omega^{m}\dot{u}^{n}\right)  \right]  =-h_{s}%
^{c}R_{cd}u^{d\text{ \ }}\mbox{(three equations)} \label{CPE.26}%
\end{equation}%
\begin{equation}
-h_{(s}^{a}h_{l)}^{b}\left[  \sigma_{b}^{\text{ \ }c;d}+\omega_{b}^{\text{
\ }c;d}\right]  \eta_{arcd}u^{r}+2\dot{u}_{(s}\omega_{l)}=H_{sl}%
\mbox{ \ (five
equations)} \label{CPE.27}%
\end{equation}

\section{The 1+3 decomposition of the Bianchi identity}

The 1+3 decomposition of the Bianchi identity (\ref{BI.02}) involves the Weyl tensor.

It is well known that the Riemann tensor can be decomposed in the following
irreducible parts as follows:%
\begin{equation}
R_{abcd}=C_{abcd}+\frac{1}{2}\left(  g_{ac}R_{bd}+g_{bd}R_{ac}-g_{ad}%
R_{bc}-g_{bc}R_{ad}\right)  -\frac{1}{6}Rg_{abcd}\label{CPE.22a}%
\end{equation}
where:%
\begin{equation}
g_{abcd}=g_{ac}g_{bd}-g_{ad}g_{bc}\label{CPE.22c}%
\end{equation}
and $C_{abcd}$ \ is the Weyl tensor. The  Weyl tensor has the same symmetries of indices
as the curvature tensor but all its traces vanish. Equation (\ref{CPE.22a}) is
a mathematical identity. The Weyl tensor is decomposed further in terms of the
electric part $E_{ab}$ and the magnetic part $H_{ab}$ wrt the vector field
$u^{a}$ as follows:%
\begin{equation}
C_{abcd}=(g_{abrs}g_{cdmt}-\eta_{abrs}\eta_{cdmt})u^{r}u^{m}E^{st}%
-(\eta_{abrs}g_{cdmt}+g_{abrs}\eta_{cdmt})u^{r}u^{m}H^{st}.\label{CPE.22b}%
\end{equation}
where
\begin{equation}
E_{ac}=C_{abcd}u^{b}u^{d},\text{ \ }H_{ac}=\frac{1}{2}\eta_{am.}^{\text{
\ \ \ \ }kl}C_{klbt}u^{m}u^{t}.\label{CPE.22d}%
\end{equation}
The tensors $E_{ac},H_{ac}$ are symmetric, traceless and satisfy the property:%
\begin{equation}
E_{ac}u^{c}=H_{ac}u^{c}=0.\label{CPE.22e}%
\end{equation}

If in (\ref{BI.02}) we substitute the Weyl tensor $C^{abcd}$ in terms of the
electric and the magnetic part $E^{ab},H^{ab}$ we obtain four identities
which have a form similar to Maxwell equations for the electric and the
magnetic field:

$\nabla E:$%
\begin{align}
h_{a}^{t}h_{s}^{d}E_{\mbox{\phantom {...}
};d}^{as}-\eta^{tbpq}u_{b}\sigma_{p}^{\mbox{ \ }d}H_{qd}+3H_{\mbox{
\ }s}^{t}\omega^{s}  &  =\frac{1}{3}h^{tb}\mu_{;b}-\frac{1}{2}h_{c}^{t}%
\pi_{~\ ;b}^{cb}-\frac{3}{2}\omega_{.b}^{t}q^{b}+\label{BI.05}\\
&  +\frac{1}{2}\sigma_{.b}^{t}q^{b}+\frac{1}{2}\pi_{.b}^{t}\dot{u}^{b}%
-\frac{1}{3}\theta q^{t}\nonumber
\end{align}

$\nabla H:$%
\begin{align}
h_{a}^{t}h_{s}^{d}H_{\mbox{\phantom {...}};d}^{as}+\eta^{tbpq}u_{b}\sigma
_{.p}^{d}E_{qd}-3E_{.s}^{t}\omega^{s}  &  =\left(  \mu+p\right)  \omega
^{t}+\frac{1}{2}\eta^{tbsf}u_{b}q_{s;f}+\nonumber\\
&  +\frac{1}{2}\eta^{tbsf}u_{b}\pi_{sc}\left(  \omega_{f}^{.c}+\sigma_{f}%
^{.c}\right)  \label{BI.06}%
\end{align}

$\dot{E}_{ab}:$%
\begin{align}
\left[
\begin{array}
[c]{c}%
h_{a}^{m}h_{c}^{t}\dot{E}^{ac}+h_{a}^{(m}\eta^{t)rsd}u_{r}H_{.s;d}^{a}%
-2H_{q}^{.(t}\eta^{m)bpq}u_{b}\dot{u}_{p}+\\
+h^{mt}\sigma^{ab}E_{ab}+\theta E^{mt}-3E_{s}^{(m}\sigma^{t)s}-E_{s}%
^{(m}\omega^{t)s}%
\end{array}
\right]   &  =-\frac{1}{2}\left(  \mu+p\right)  \sigma^{tm}-\dot{u}^{(t}%
q^{m)}-\frac{1}{2}h_{a}^{t}h_{c}^{m}q^{\left(  a;c\right)  }\nonumber\\
&  -\frac{1}{2}h_{a}^{t}h_{c}^{m}\dot{\pi}^{ac}-\frac{1}{2}\pi^{b(m}\left(
\omega_{b}^{.t)}+\sigma_{b}^{.t)}\right)  -\frac{1}{6}\pi^{tm}\theta
\nonumber\\
&  + \frac{1}{6}h^{mt}\left(  q_{.;a}^{a}+\dot{u}_{a}q^{a}+\pi^{ab}\sigma
_{ab}\right)  \label{BI.07}%
\end{align}

$\dot{H}_{ab}:$%
\begin{align}
\left[
\begin{array}
[c]{c}%
h^{ma}h^{tc}\dot{H}_{ac}-h_{a}^{(m}\eta^{t)rsd}E_{.s;d}^{a}+2E_{q}^{(t}%
\eta^{m)bpq}u_{b}\dot{u}_{p}+\\
+h^{mt}\sigma^{ab}H_{ab}+\theta H^{mt}-2H_{s}^{.(m}\sigma^{t)s}-H_{s}%
^{.(m}\omega^{t)s}%
\end{array}
\right]   &  =\frac{1}{2}\sigma_{c}^{(t}\eta^{m)bcf}u_{b}q_{f}-\frac{1}%
{2}h_{c}^{(t}\eta^{m)bef}u_{b}\pi_{.e;f}^{c}\nonumber\\
&  +\frac{1}{2}\left(  h^{mt}\omega_{c}q^{c}-3\omega^{(m}q^{t)}\right)
\label{BI.08}%
\end{align}

The contracted Bianchi identity $G_{..;b}^{ab}=0$ is contained in the above identities.

\section{Propagation of the kinematic quantities along a collineation
vector}

The propagation equations provide the derivative of the kinematic quantities
along the four-velocity $u^{a}$ of the observers due to the internal structure
of spacetime. However when a collineation is assumed by a given model then
this is an external geometric constraint which must also be satisfied. This
makes necessary the knowledge of the change of the kinematic quantities along
the vector field generating the collineation. Because the collineations are
defined in terms of the Lie derivative (Lie transport) we are interested in
the quantities $L_{\xi}\{u^{a},\omega_{ab},\sigma_{ab},\theta,\dot{u}^{a}\}.$
We give the following general result.

\begin{proposition}
Let $X^{a}$ be a non-null vector field with index $\varepsilon(X)=\pm1$ i.e.
$X^{a}X_{a}=\varepsilon(X)X^{2}$ $(X>0)$ \ and let $\xi^{a}$ be an arbitrary
vector field with collineation parameters(see \cite{Tsamparlis 1992})
$\psi,H_{ab}$ that is:%
\begin{equation}
\xi_{(a;b)}=\psi g_{ab}+H_{ab}=\frac{1}{2}L_{\xi}g_{ab},\qquad L_{\xi}%
g^{ab}=-2\psi g^{ab}-2H^{ab}. \label{Ldd.0}%
\end{equation}
Then $L_{\xi}X^{a}$ is 1+3 decomposed wrt the four-velocity $u^{a}$ as
follows:%
\begin{equation}
L_{\xi}X^{a}=\left[  (\ln X)_{,b}\xi^{b}-\psi-\frac{\varepsilon(X)}{X^{2}%
}H_{cb}X^{c}X^{b}\right]  X^{a}+V^{a}(X) \label{Ldd.4a}%
\end{equation}
where $V^{a}(X)=h_{b}^{a}(X)(L_{\xi}X^{b})\,\ $is a vector field normal to
$X^{a}.$ For the covariant quantity $L_{\xi}X_{a}$ holds:%
\begin{equation}
L_{\xi}X_{a}=\left[  (\ln X)_{,b}\xi^{b}+\psi+\frac{\varepsilon(X)}{X^{2}%
}H_{cb}X^{c}X^{b}\right]  X_{a}+\hat{V}_{a}(X) \label{Ldd.4e}%
\end{equation}
in which%
\begin{equation}
\hat{V}_{a}(X)=2h_{a}^{b}(X)H_{bc}X^{c}+V_{a}\left(  X\right)  .
\label{Ldd.4f.11}%
\end{equation}
Note that in general $\hat{V}_{a}(X)\neq V_{a}(X)$
\end{proposition}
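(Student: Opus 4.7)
The strategy is to apply the generic $1+3$ decomposition of a vector (\ref{d.one.2}) to $L_{\xi}X^{a}$ itself, with $X^{a}$ playing the role of the privileged non-null direction, and then to fix the single unknown scalar coefficient by Lie-differentiating the normalization identity $X^{a}X_{a}=\varepsilon(X)X^{2}$. The covariant identity (\ref{Ldd.4e}) is then derived by combining the contravariant result with the Leibniz rule for $L_{\xi}$ applied to $X_{a}=g_{ab}X^{b}$.

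Applying (\ref{d.one.2}) to $R^{a}=L_{\xi}X^{a}$ and $P^{a}=X^{a}$ gives at once
\begin{equation*}
L_{\xi}X^{a}=\alpha X^{a}+h_{b}^{a}(X)\,L_{\xi}X^{b}=\alpha X^{a}+V^{a}(X),\qquad \alpha=\frac{\varepsilon(X)}{X^{2}}\,X_{b}\,L_{\xi}X^{b},
\end{equation*}
so only the scalar $X_{b}L_{\xi}X^{b}$ must be identified. For this I would Lie-differentiate $X^{a}X_{a}=\varepsilon(X)X^{2}$. The right-hand side is a scalar built from $X$, giving $2\varepsilon(X)X^{2}(\ln X)_{,b}\xi^{b}$. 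The left-hand side, expanded as $g_{bc}X^{b}X^{c}$ and treated by Leibniz, yields $(L_{\xi}g_{bc})X^{b}X^{c}+2X_{b}L_{\xi}X^{b}$. Inserting the collineation hypothesis (\ref{Ldd.0}) and solving produces
\begin{equation*}
X_{b}\,L_{\xi}X^{b}=\varepsilon(X)X^{2}\left[(\ln X)_{,b}\xi^{b}-\psi-\frac{\varepsilon(X)}{X^{2}}H_{cb}X^{c}X^{b}\right],
\end{equation*}
so $\alpha$ is exactly the bracket in (\ref{Ldd.4a}), which proves the contravariant formula.

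For the covariant version I would write $L_{\xi}X_{a}=(L_{\xi}g_{ab})X^{b}+g_{ab}L_{\xi}X^{b}$; using (\ref{Ldd.0}) the first term is $2\psi X_{a}+2H_{ab}X^{b}$, and lowering the already-established expression for $L_{\xi}X^{b}$ supplies the second. I would then split $2H_{ab}X^{b}$ into parts parallel and orthogonal to $X_{a}$ via $\delta_{a}^{c}=h_{a}^{c}(X)+\varepsilon(X)X^{-2}X_{a}X^{c}$. The parallel contribution combines with the $-\varepsilon(X)X^{-2}H_{cb}X^{c}X^{b}$ term coming from $\alpha$ and, crucially, flips its sign, producing the $+\varepsilon(X)X^{-2}H_{cb}X^{c}X^{b}$ required by (\ref{Ldd.4e}), together with the extra $+\psi$ (not $-\psi$). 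The orthogonal remainder is $2h_{a}^{b}(X)H_{bc}X^{c}$, which adds to $V_{a}(X)$ to deliver precisely $\hat{V}_{a}(X)$ as defined in (\ref{Ldd.4f.11}).

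\textbf{Main obstacle.} There is no conceptual obstruction; the argument relies only on the decomposition identity (\ref{d.one.2}) and the definition (\ref{Ldd.0}) of the collineation parameters. The sole care needed is bookkeeping of the factors $\varepsilon(X)$ and $X^{-2}$ and, more importantly, keeping straight the distinction between $V_{a}(X)$ (the lowered contravariant normal component) and $\hat{V}_{a}(X)$ (the normal component of the covariant Lie derivative). These differ exactly because $L_{\xi}$ does not commute with index-lowering unless $\xi^{a}$ is Killing; the non-commutator is $(L_{\xi}g_{ab})X^{b}=2\psi X_{a}+2H_{ab}X^{b}$, and its traceless projected piece $2h_{a}^{b}(X)H_{bc}X^{c}$ is precisely what makes $\hat{V}_{a}(X)\neq V_{a}(X)$ in general.
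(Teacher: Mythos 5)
Your proposal is correct and coincides with the route the paper itself takes: the paper omits a formal proof of this proposition, but the decisive step you use --- Lie-differentiating the normalization $g_{ab}X^{a}X^{b}=\varepsilon(X)X^{2}$ together with $L_{\xi}g_{ab}=2\psi g_{ab}+2H_{ab}$ to extract the component of $L_{\xi}X^{a}$ along $X^{a}$ --- is exactly the computation the paper performs later (in the normalized case) at equation (\ref{Ldd.4e.3}), and your bookkeeping of the $\varepsilon(X)$, $X^{-2}$ factors and of the splitting of $2H_{ab}X^{b}$ that produces $\hat{V}_{a}(X)=2h_{a}^{b}(X)H_{bc}X^{c}+V_{a}(X)$ checks out. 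No gaps.
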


In case where~$X^{a}$ is normalized, i.e. $X^{2}=1$ formulae (\ref{Ldd.4a})
and (\ref{Ldd.4e}) reduce to:%
\begin{align}
L_{\xi}X^{a}  &  =-\left[  \psi+\varepsilon(X)H_{cb}X^{c}X^{b}\right]
X^{a}+V^{a}(X)\label{Ldd.4f}\\
L_{\xi}X_{a}  &  =\left[  \psi+\varepsilon(X)H_{cb}X^{c}X^{b}\right]
X_{a}+\hat{V}_{a}(X). \label{Ldd.4fa}%
\end{align}
In the special case $X^{a}=u^{a}$ it follows%
\begin{align}
L_{\xi}u^{a}  &  =-(\psi-H_{cd}X^{c}X^{d})u^{a}+V^{a}(u)\label{Ldd.4f.b}\\
L_{\xi}u_{a}  &  =(\psi-H_{cd}X^{c}X^{d})u_{a}+\hat{V}_{a}(u).
\label{Ldd.4f.c}%
\end{align}

Some important results are collected in the following proposition

\begin{proposition}
(i)\quad$\hat{V}_{a}(X)=V^{a}(X)$ \ iff $h^{ab}(X)H_{bc}X^{c}=0$ iff
$H_{bc}X^{c}=aX_{b}$ that is, $X^{a}$ is an eigenvector of $H_{bc}.$

(ii) \quad$\hat{V}^{a}(X)=V^{a}(X)$ for all non-null $X^{a}$ iff $\xi^{a}$ is
at most a CKV.

(iii) For $X^{a}=u^{a}$ we have $\hat{V}^{a}(u)=V^{a}(u)$ iff either $u^{a}$
is an eigenvector of $H_{bc}$ or $\xi^{a}$ is at most a CKV. Furthermore in
that later case holds:%
\begin{align}
L_{\xi}u^{a}  &  =-\psi u^{a}+V^{a}(u)\label{Ldd.4f.10}\\
L_{\xi}u_{a}  &  =\psi u_{a}+V_{a}(u). \label{Ldd.4f.12}%
\end{align}

\end{proposition}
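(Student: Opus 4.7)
The plan is to deduce all three parts from the single defining identity (\ref{Ldd.4f.11}), $\hat{V}_a(X) = 2 h_a^{\,b}(X) H_{bc} X^{c} + V_a(X)$.

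\textbf{Part (i).} The first equivalence is immediate from (\ref{Ldd.4f.11}): the identity $\hat{V}_a(X)=V_a(X)$ holds iff the extra projected term $h_a^{\,b}(X)H_{bc}X^{c}$ vanishes. For the second equivalence, I apply the 1+3 decomposition (\ref{d.one.2}) to the vector $R^a:=H^a{}_c X^c$ along $X^a$: the projected piece vanishes iff $R^a$ is parallel to $X^a$, i.e.\ $H_{bc}X^{c}=a\,X_b$, which is exactly the eigenvector condition.

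\textbf{Part (ii).} The ``if'' direction is trivial, since $\xi^a$ being at most a CKV forces $H_{ab}=0$ in the split (\ref{Ldd.0}), so the extra term in (\ref{Ldd.4f.11}) vanishes for every $X^a$. For ``only if'', assume (i) holds for every non-null $X^a$. I pick linearly independent non-null vectors $X,Y$ whose sum is also non-null (possible by genericity in Lorentzian signature, since the complement of the null cone is open and dense). Linearity of $H$ combined with the three eigenvector relations $HX=\lambda_X X$, $HY=\lambda_Y Y$, $H(X+Y)=\lambda_{X+Y}(X+Y)$ forces $\lambda_X=\lambda_Y=\lambda_{X+Y}$ by independence of $X,Y$. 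Repeating shows all eigenvalues coincide with a single scalar field $\lambda$, so $(H_{bc}-\lambda g_{bc})X^c=0$ on an open dense set of $X$, hence the tensorial identity $H_{bc}=\lambda g_{bc}$ holds. Since $H_{ab}$ is traceless by its definition in (\ref{Key.9}), $\lambda=0$ and $H_{ab}\equiv 0$, which is precisely the CKV condition by (\ref{Ldd.0}).

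\textbf{Part (iii).} Specializing (i) to $X^a=u^a$ immediately yields $\hat{V}^a(u)=V^a(u)$ iff $u^a$ is an eigenvector of $H_{bc}$; the alternative ``$\xi^a$ at most a CKV'' is a distinguished subcase ($H_{ab}=0$, so $u^a$ is a trivial eigenvector with eigenvalue zero), listed separately to highlight the two physical scenarios. For the ``furthermore'' clause, when $\xi^a$ is at most a CKV one has $H_{ab}=0$, so in particular $H_{cd}u^c u^d=0$ and $\hat{V}_a(u)=V_a(u)$; substituting into the general formulas (\ref{Ldd.4f.b})--(\ref{Ldd.4f.c}) collapses them to the stated simplified expressions.

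The main subtlety lies in the polarization argument of part (ii), specifically in guaranteeing that one can always select $X$, $Y$ and $X+Y$ simultaneously non-null. This is a standard genericity fact, but it is the one step that requires a geometric (as opposed to purely algebraic) input. Everything else reduces to direct manipulation of (\ref{Ldd.4f.11}) and the tracelessness of $H_{ab}$.
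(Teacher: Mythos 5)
Your proposal is correct. The paper itself omits the proof of this proposition (``These results are easily established and we omit the proof''), so there is no authorial argument to compare against; your derivation from the defining relation (\ref{Ldd.4f.11}), with the 1+3 decomposition identifying ``projected part vanishes'' with ``eigenvector of $H_{bc}$,'' is exactly the intended route. The only step that genuinely needs care is the one you flag in part (ii) -- the polarization argument forcing $H_{bc}=\lambda g_{bc}$ and then $\lambda=0$ by tracelessness -- and your handling of it (choosing $X$, $Y$, $X+Y$ all non-null by genericity, or equivalently replacing $Y$ by $tY$ for all but finitely many $t$) is sound; likewise your observation that the ``eigenvector or CKV'' disjunction in (iii) is logically redundant, while the displayed formulae (\ref{Ldd.4f.10})--(\ref{Ldd.4f.12}) require the CKV case specifically (an eigenvector with nonzero eigenvalue would shift $\psi$), is an accurate reading of the statement.
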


These results are easily established and we omit the proof.

\subsection{The case $X^{2}=1$}

In the following we assume that $X^{a}$ is normalized, that is $X^{a}%
X_{a}=\varepsilon\left(  X\right)  $, where$~\varepsilon(X)=\pm1$,~ and we
compute the quantity $V_{a}(u)$ in terms of the \textquotedblleft
kinematic\textquotedblright\ quantities of $X^{a}.$

We find:%
\begin{equation}
L_{\xi}X_{a}=\left[  2\omega(X)_{ab}+\varepsilon(X)\overset{\ast}{(X_{a}}%
X_{b}-\overset{\ast}{X_{b}}X_{a})\right]  \xi^{b}+(X_{b}\xi^{b})_{;a}
\label{Ldd.4e.1}%
\end{equation}
where a star \textquotedblleft*\textquotedblright\ over a symbol indicates
derivation along $X^{a}$ i.e. $\overset{\ast}{X_{a}}=X_{a;b}X^{b}.$

We also write $2\omega(X)_{ab}=$ $h_{a}^{c}(X)h_{b}^{d}(X)(X_{c;d}-X_{d;c}).$
We have according to the general 1+3 decomposition formula:
\begin{equation}
X_{a;b}=h(X)_{a}^{c}h(X)_{b}^{d}X_{a;b}+\varepsilon(X)\overset{\ast}{X_{a}%
}X_{b}. \label{Ldd.4f.13}%
\end{equation}

We continue with the 1+3 decomposition of $L_{\xi}X_{a}$ along $X^{a}$. For
the parallel component we find\footnote{We remark that ${\overset{\ast}{\xi}}^{b}$
does not in general coincides with ${\overset{\ast}{\xi}}_{\,b}$}:%
\begin{equation}
X^{a}L_{\xi}X_{a}=-(\overset{\ast}{X}_{b}\xi^{b})+(X_{b}\xi^{b})^{\ast}%
=X_{b}\overset{\ast}{\xi}^{b} \label{Ldd.4e.2}%
\end{equation}
This implies the formula\footnote{Compare with (\ref{Ldd.4fa}). Also note that
$L_{\xi}X_{a}=\varepsilon(X)(X^{b}L_{\xi}X_{b})X_{a}+h_{a}^{b}L_{\xi}X_{b}$.}:%
\begin{equation}
L_{\xi}X_{a}=\varepsilon(X)\left[  -(\overset{\ast}{X}_{b}\xi^{b})+(X_{b}%
\xi^{b})^{\ast}\right]  +\hat{V}_{a}(X). \label{Ldd.4e.2a}%
\end{equation}

By comparing (\ref{Ldd.4e.1}) and (\ref{Ldd.4e.2a}) we find:%
\[
\hat{V}_{a}(X)=2\omega(X)_{ab}\xi^{b}+\varepsilon(X)\overset{\ast}{X}%
_{a}(X_{b}\xi^{b})+(X_{b}\xi^{b})_{;a}-\varepsilon(X)(X_{b}\xi^{b})^{\ast
}X_{a}.
\]
Noting that:%
\[
(X_{b}\xi^{b})_{;a}-\varepsilon(X)(X_{b}\xi^{b})^{\ast}X_{a}=h_{a}^{b}%
(X_{c}\xi^{c})_{;b}%
\]
we obtain the final result:%
\begin{equation}
\hat{V}_{a}(X)=2\omega(X)_{ab}\xi^{b}+\varepsilon(X)\overset{\ast}{X_{a}%
}(X_{b}\xi^{b})+h_{a}^{b}(X_{c}\xi^{c})_{;b}. \label{Ldd.4e.6}%
\end{equation}

It is possible to express the parallel component of the $L_{\xi}X_{a}$ in
terms of the collineation components. Indeed by Lie differentiation of:%
\[
g^{ab}X_{a}X_{b}=\varepsilon(X)
\]
we find:
\begin{align}
(L_{\xi}g^{ab})X_{a}X_{b}+2g^{ab}X_{a}L_{\xi}X_{b}  &  =0\Rightarrow
\nonumber\\
(-2\psi g^{ab}-2H^{ab})X_{a}X_{b}+2g^{ab}X_{a}L_{\xi}X_{b}  &  =0\Rightarrow
\nonumber\\
X^{b}L_{\xi}X_{b}  &  =\varepsilon(X)\psi+H^{ab}X_{a}X_{b}. \label{Ldd.4e.3}%
\end{align}
From (\ref{Ldd.4e.2}) and (\ref{Ldd.4e.3}) follows:
\begin{equation}
\varepsilon(X)\psi+H^{ab}X_{a}X_{b}=-(\overset{\ast}{X}_{b}\xi^{b})+(X_{b}%
\xi^{b})^{\ast}=X_{b}{\overset{\ast}{\xi}}^{b}. \label{Ldd.4e.4}%
\end{equation}
This relation implies the identity/decomposition:
\begin{equation}
L_{\xi}X_{a}=\left[  \psi+\varepsilon(X)H^{cd}X_{c}X_{d}\right]  X_{a}%
+2\omega(X)_{ab}\xi^{b}+\varepsilon(X)\overset{\ast}{X}_{a}(X_{b}\xi
^{b})+h_{a}^{b}(X_{c}\xi^{c})_{;b}. \label{Ldd.4e.5}%
\end{equation}

Next, we compute the Lie derivative $L_{\xi}X^{a}$. We find:%
\begin{equation}
L_{\xi}X^{a}=\left[  -\psi+\varepsilon(X)H^{cd}X_{c}X_{d}\right]
X^{a}-2H^{ab}X_{b}+2\omega(X)_{.\;b}^{a}\xi^{b}+\varepsilon(X)\overset{\ast
}{X^{a}}(X_{b}\xi^{b})+h^{ab}(X_{c}\xi^{c})_{;b}. \label{Ldd.4e.7}%
\end{equation}
But ($X^{2}=1)$:%
\[
H^{ab}X_{b}=\varepsilon(X)(H^{cd}X_{c}X_{d})X^{a}+h_{c}^{a}H^{cd}X_{d}%
\]
therefore:%
\begin{equation}
L_{\xi}X^{a}=\left[  -\psi-\varepsilon(X)H^{cd}X_{c}X_{d}\right]  X^{a}%
-2h_{c}^{a}H^{cd}X_{d}+2\omega(X)_{.\;b}^{a}\xi^{b}+\varepsilon(X)\overset
{\ast}{X^{a}}(X_{b}\xi^{b})+h^{ab}(X_{c}\xi^{c})_{;b}. \label{Ldd.4e.8}%
\end{equation}
Finally, by comparing with (\ref{Ldd.4fa}) we find the following constraint
equation%
\begin{equation}
V_{a}(X)=-2h_{c}^{a}H^{cd}X_{d}+2\omega(X)_{.\;b}^{a}\xi^{b}+\varepsilon
(X)\overset{\ast}{X^{a}}(X_{b}\xi^{b})+h^{ab}(X_{c}\xi^{c})_{;b}.
\label{Ldd.4e.7b}%
\end{equation}

\subsection{The special case $X^{a}=u^{a}$}

An important special case is $X^{a}=u^{a},$ that is, $X^{a}$ is unit and
timelike (the four-velocity). We write $L_{\xi}u_{a}$ and $L_{\xi}u^{a}~$given
by formulae (\ref{Ldd.4f.b}) and (\ref{Ldd.4f.c}) in terms of the kinematic
quantities of $u^{a}.$

From (\ref{Ldd.4f.b}) follows:
\begin{equation}
L_{\xi}u_{a}=\left[  \psi-H^{cd}u_{c}u_{d}\right]  u_{a}+2\omega_{ab}\xi
^{b}-\dot{u}_{a}\,(u_{b}\xi^{b})+h_{a}^{b}(u_{c}\xi^{c})_{;b} \label{K1.7}%
\end{equation}
and from (\ref{Ldd.4f.c}):%
\begin{equation}
L_{\xi}u^{a}=-\left[  \psi-H^{cd}u_{c}u_{d}\right]  u^{a}-2h_{b}^{a}%
H^{bc}u_{c}+2\omega_{.\;b}^{a}\xi^{b}-\dot{u}^{a}(u_{b}\xi^{b})+h^{ab}%
(u_{c}\xi^{c})_{;b}. \label{K1.7a}%
\end{equation}

These imply that:%
\begin{align}
\hat{V}_{a}(u)  &  =2\omega_{ab}\xi^{b}-\dot{u}_{a}\,(u_{b}\xi^{b})+h_{a}%
^{b}(u_{c}\xi^{c})_{;b}\label{K1.7aa}\\
V^{a}(u)  &  =-2h_{b}^{a}H^{bc}u_{c}+2\omega_{.\;b}^{a}\xi^{b}-\dot{u}%
^{a}(u_{b}\xi^{b})+h^{ab}(u_{c}\xi^{c})_{;b}. \label{K1.7ab}%
\end{align}
From (\ref{K1.7}) we draw the following conclusions:

1. If $u_{a}\xi^{a}$ is an acceleration potential (that is the acceleration
potential is of the form $a=-(u_{b}\xi^{b})^{-1})$ then for any collineation
$\xi^{a}$:
\begin{equation}
L_{\xi}u_{a}=\left[  \psi-H^{cd}u_{c}u_{d}\right]  u_{a}+2\omega_{ab}\xi^{b}.
\label{K1.8}%
\end{equation}

One special case of this is when $u_{a}\xi^{a}=0$ i.e. the symmetry vector is
normal to the flow vector $u^{a}.$(See \cite{Herrera et all (1984)}).

2. If $\xi^{a}$ is a KV we have:%
\begin{equation}
L_{\xi}u_{a}=2\omega_{ab}\xi^{b}-\dot{u}_{a}(u_{b}\xi^{b})+h_{a}^{b}(u_{c}%
\xi^{c})_{;b} \label{K1.14}%
\end{equation}
from which we conclude:%
\begin{equation}
L_{\xi}u_{a}=0\Longleftrightarrow2\omega_{ab}\xi^{b}-\dot{u}_{a}(u_{b}\xi
^{b})+h_{a}^{b}(u_{c}\xi^{c})_{;b}=0. \label{K1.15}%
\end{equation}

A\ CKV\ is inherited\footnote{The symmetry is inherited if $L_{\xi}%
u^{a}=\lambda u^{a}$.} by the four-velocity field $u^{a}$ if $u_{b}\xi^{b}=0$
and the fluid is either irrotational (i.e. $\omega_{\alpha b}=0)$ or the
vorticity vector $\omega^{a}\shortparallel\xi^{a}.$ (See \cite{Herrera et all
(1984)})

3. For a CKV\ the inheritance of the symmetry by $u^{a}$ (i.e.$L_{\xi}%
u_{a}=\psi u_{a}$ or $L_{\xi}u^{a}=-\psi u^{a})$ is equivalent to the identity
$h_{a}^{b}L_{\xi}dx^{a}=0$ (See \cite{Maartens Mason Tsamparlis 1985}).
Obviously for a KV/HKV/SCKV\ this result remains true.

We note that the requirement of surface forming of $\xi^{a}$ with a unit
timelike Killing vector (a static spacetime) results in the kinematic
constraint (\ref{K1.15}) which is by no means trivial. For example when
$\xi^{a}u_{a}=0$ then $\omega_{a} \xi^{a}=0$ i.e. $\omega_{a} // \xi^{a}.$

\section{The 1+3 decomposition of the Lie derivative $L_{\xi}X_{a;b}$ wrt
$X^{a}$}

This decomposition is useful because allows us to compute the Lie derivative
of the kinematic quantities along the collineation vector $\xi^{a}.$ We have
the following identity for any pair of vector fields\footnote{This identity
gives the commutation of the Lie and the covariant derivative in a Riemannian
space. It can be found in \cite{Yano 1956}.} $X^{a},\xi^{a}:$%
\begin{equation}
L_{\xi}X_{a;b}=(L_{\xi}X_{a})_{;b}-(L_{\xi}\Gamma_{ab}^{c})X_{c} \label{LDX.1}%
\end{equation}
and also the identity:%
\begin{equation}
L_{\xi}\Gamma_{ab}^{c}=\frac{1}{2}g^{cd}\left[  (L_{\xi}g_{da});_{b}+(L_{\xi
}g_{db});_{a}-(L_{\xi}g_{ab});_{d}\right]  . \label{LDX.2}%
\end{equation}

Assuming $\xi^{a}$ to be a collineation, the second identity gives:%
\[
L_{\xi}\Gamma_{ab}^{c}=g^{cd}\left[  \psi,_{b}g_{da}+\psi,_{a}g_{db}-\psi
,_{d}g_{ab}+\text{ \ \ }H_{da;b}+H_{db;a}-H_{ab;d}\right]
\]
therefore:%
\begin{equation}
L_{\xi}X_{a;b}=(L_{\xi}X_{a})_{;b}-\left[  \psi,_{b}g_{da}+\psi,_{a}%
g_{db}-\psi,_{d}g_{ab}+H_{da;b}+H_{db;a}-H_{ab;d}\right]  X^{d}.
\label{LDX.2.1}%
\end{equation}

A different expression is found as follows:%
\begin{align}
L_{\xi}\Gamma_{ab}^{c}  &  =\frac{1}{2}g^{cd}\left[  (L_{\xi}g_{da}%
)_{;b}+(L_{\xi}g_{db})_{;a}-(L_{\xi}g_{ab})_{;d}\right] \nonumber\\
&  =g^{cd}\left[  (\xi_{(d;a);b}+(\xi_{(d;b)})_{a}-(\xi_{(a;b)})_{;d}\right]
\nonumber\\
&  =g^{cd}\left[  \xi_{d;(ab)}+\xi_{a;[db]}+\xi_{b;[da]}\right] \nonumber\\
&  =g^{cd}\left[  \xi_{d;(ab)}+\frac{1}{2}R_{tadb}\xi^{t}+\frac{1}{2}%
R_{tbda}\xi^{t}\right] \nonumber\\
&  =g^{cd}\left[  \xi_{d;(ab)}+R_{t(a|d|b)}\xi^{t}\right]  . \label{LDX.2.a}%
\end{align}
Therefore,
\begin{equation}
L_{\xi}X_{a;b}=(L_{\xi}X_{a})_{;b}-g^{cd}\left[  \xi_{d;(ab)}+R_{t(a|d|b)}%
\xi^{t}\right]  X_{c} \label{LDX.2.b}%
\end{equation}
where we have defined the Riemann tensor with the Ricci identity:%
\begin{equation}
2\xi_{a;[bc]}=R_{tabc}\xi^{t}. \label{LDX.2.b.1}%
\end{equation}

\subsection{The case $X^{2}=1$}

We consider the case $X^{2}=1$ and find from (\ref{Ldd.4f}):%
\begin{align}
(L_{\xi}X_{a})_{;b}  &  =\left[  \psi+\varepsilon(X)H_{cd}X^{c}X^{d}%
]X_{a}+\hat{V}_{a}(X)\right]  ;_{b}\nonumber\\
&  =\left[  \psi+\varepsilon(X)H_{cd}X^{c}X^{d}\right]  _{;b}X_{a}+\left[
\psi+\varepsilon(X)H_{cd}X^{c}X^{d}\right]  X_{a;b}+\hat{V}_{a}(X)_{;b}
\label{LDX.2.c}%
\end{align}
where:%
\begin{equation}
\hat{V}_{a}(X)=2\omega(X)_{ab}\xi^{b}+\varepsilon(X)\overset{\ast}{X_{a}%
}(X_{b}\xi^{b})+h_{a}^{b}(X_{c}\xi^{c})_{;b}. \label{LDX.2.d}%
\end{equation}
To save writing we set:%
\begin{equation}
K(X)=\psi+\varepsilon(X)H_{cd}X^{c}X^{d} \label{LDX.2.e}%
\end{equation}
and have:%
\[
(L_{\xi}X_{a})_{;b}=K(X)_{;b}X_{a}+K(X)X_{a;b}+\hat{V}_{a}(X)_{;b}.
\]
Then using (\ref{LDX.2.c}) we get:%
\begin{align}
L_{\xi}X_{a;b}  &  =\left[  K(X)_{;b}X_{a}+K(X)X_{a;b}+\hat{V}_{a}%
(X)_{;b}\right]  +\label{LDX.2.f}\\
&  -\left[  \psi,_{b}g_{da}+\psi,_{a}g_{db}-\psi,_{d}g_{ab}+H_{da;b}%
+H_{db;a}-H_{ab;d}\right]  X^{d}\nonumber
\end{align}
and also:
\begin{equation}
L_{\xi}X_{a;b}=\left[  K(X)_{;b}X_{a}+K(X)X_{a;b}+\hat{V}_{a}(X)_{;b}\right]
-\left[  \xi_{d;(ab)}+R_{t(a|d|b)}\xi^{t}\right]  X^{d}. \label{LDX.2.g}%
\end{equation}

\subsection{The case $X^{a}$ is a unit timelike vector field, $i.e.$ $X^{a}=u^{a}$}

When $X^{a}=u^{a}$ from (\ref{LDX.2.f}) we find:%
\begin{align}
L_{\xi}u_{a;b}  &  =K(u)_{;b}u_{a}+K(u)u_{a;b}+\hat{V}_{a}(u)_{;b}%
+\label{LDX.4}\\
&  -\left[  \psi,_{b}g_{da}+\psi,_{a}g_{db}-\psi,_{d}g_{ab}+H_{da;b}%
+H_{db;a}-H_{ab;d}\right]  u^{d}.
\end{align}
where%
\begin{equation}
K=\psi-H_{cd}u^{c}u^{d},\text{ }\hat{V}_{a}=\hat{V}_{a}(u). \label{LDX.4.a}%
\end{equation}

We consider the 1+3 decomposition $u_{a;b}=\sigma_{ab}+\omega_{ab}+\frac{1}%
{2}\theta h_{ab}-\dot{u}_{a}u_{b}$ and the lhs of (\ref{LDX.4}) becomes:%
\[
L_{\xi}\sigma_{ab}+L_{\xi}\omega_{ab}+\frac{1}{2}L_{\xi}\theta h_{ab}+\frac
{1}{2}\theta L_{\xi}h_{ab}-(L_{\xi}\dot{u}_{a})u_{b}-\dot{u}_{a}L_{\xi}u_{b}%
\]
therefore we can compute the quantities $L_{\xi}\sigma_{ab},L_{\xi}\omega
_{ab},L_{\xi}\theta,L_{\xi}\dot{u}_{a}$ in terms of the collineation
parameters $\psi,H_{ab}$ by taking the irreducible parts of equation
(\ref{LDX.4}). Before we do that we need to calculate the Lie derivative of
the projection tensor $h_{ab}.$

\subsection{Calculation of $L_{\xi}h_{ab}$}

For the Lie derivative of the projection tensor we find the results:%
\begin{align}
L_{\xi}h_{ab}(X)  &  =L_{\xi}\left(  g_{ab}-\frac{\varepsilon(X)}{X^{2}}%
X_{a}X_{b}\right)  =2\psi h_{ab}\left(  X\right)  +2H_{ab}-2\frac{1}{X^{4}%
}\left(  H_{cd}X^{c}X^{d}\right)  X_{(a}X_{b)}+\label{LDHX.1}\\
&  ~\ \ -2\frac{\varepsilon(X)}{X^{2}}X_{(a}\hat{V}_{b)}(X)-2\frac
{\varepsilon(X)}{X^{2}}\left[  (\ln X)_{,c}\xi^{c}\right]  X_{(a}%
X_{b)}\nonumber
\end{align}
\begin{align}
L_{\xi}h_{b}^{a}\left(  X\right)   &  =L_{\xi}\left(  g_{b}^{a}-\frac
{\varepsilon(X)}{X^{2}}X^{a}X_{b}\right) \nonumber\\
&  =-\frac{\varepsilon(X)}{X^{2}}\left[  2(\ln X)_{,c}\xi^{c}X^{a}X_{b}%
+V^{a}(X)X_{b}+X^{a}\hat{V}_{b}(X)\right]  . \label{LDHX.2}%
\end{align}
In the case where the vector field is unit $\left(  X^{2}=1\right)  $
equations (\ref{LDHX.1}) and (\ref{LDHX.2}) reduce to:%
\begin{equation}
L_{\xi}h_{ab}(X)=2\psi h_{ab}\left(  X\right)  +2H_{ab}-2\left(  H_{cd}%
X^{c}X^{d}\right)  X_{(a}X_{b)}-2\varepsilon\left(  X\right)  X_{(a)}\hat
{V}_{b)}(X) \label{LDHX.3}%
\end{equation}%
\begin{equation}
L_{\xi}h_{b}^{a}\left(  X\right)  =-\varepsilon\left(  X\right)
[V^{a}(X)X_{b}+X^{a}\hat{V}_{b}(X)]. \label{LDHX.4}%
\end{equation}
In the special case $X^{a}$ is the 4-velocity $u^{a}$ the above relations become:%

\begin{equation}
L_{\xi}h_{ab}=2\psi h_{ab}+2H_{ab}-2\left(  H_{cd}u^{c}u^{d}\right)
u_{(a}u_{b)}+2u_{(a}\hat{V}_{b)} \label{LDHX.5}%
\end{equation}%
\begin{equation}
L_{\xi}h_{b}^{a}=u^{a}\hat{V}_{b}+V^{a}u_{b} \label{LDHX.6}%
\end{equation}

Replacing $\hat{V}_{b},V^{a}$ \ from (\ref{K1.7aa}), (\ref{K1.7ab}) we find:%
\begin{align}
L_{\xi}h_{ab}  &  =2\psi h_{ab}+2H_{ab}-2\left(  H_{cd}u^{c}u^{d}\right)
u_{(a}u_{b)}+2u_{(a}[2\omega_{b)c}\xi^{c}-\dot{u}_{b)}\,(u_{c}\xi^{c}%
)+h_{b)}^{d}(u_{c}\xi^{c})_{;d}]\label{LDHX.6a}\\
L_{\xi}h_{b}^{a}  &  =u^{a}[2\omega_{bc}\xi^{c}-\dot{u}_{b}\,(u_{c}\xi
^{c})+h_{b}^{d}(u_{c}\xi^{c})_{;d}]+[-2h_{d}^{a}H^{dc}u_{c}+2\omega_{.\;c}%
^{a}\xi^{c}-\dot{u}^{a}(u_{c}\xi^{c})+h^{ad}(u_{c}\xi^{c})_{;d}]u_{b}.
\label{LDHX.6b}%
\end{align}
Concerning $h^{ab}$ we find:%
\begin{align}
L_{\xi}h^{ab}  &  =-g^{ac}g^{bd}L_{\xi}h_{cd}\nonumber\\
&  =-2\psi h^{ab}-2H^{ab}-g^{ac}g^{bd}\left[  -2\left(  H_{mn}u^{m}%
u^{n}\right)  u_{c}u_{d}+2u_{(c}[2\omega_{d)m}\xi^{m}-\dot{u}_{d)}\,(u_{n}%
\xi^{n})+h_{d)}^{m}(u_{n}\xi^{n})_{;m}]\right]  \label{LDHX.6e}%
\end{align}

\begin{proposition}
If $\xi^{a}$ is a spacelike CKV the following formulae are true:%
\begin{align}
L_{\xi}h_{ab}  &  =2\psi h_{ab}+4u_{(a}\omega_{b)c}\xi^{c},\label{LDHX.6c}\\
L_{\xi}h_{b}^{a}  &  =2(u^{a}\omega_{bc}+\omega_{.\;c}^{a}u_{b})\xi
^{c},\label{LDHX.6d}\\
L_{\xi}h^{ab}  &  =-2\psi h^{ab}-2H^{ab}-4g^{ac}g^{bd}u_{(c}\omega_{d)m}%
\xi^{m}. \label{LDHX.6f}%
\end{align}

\end{proposition}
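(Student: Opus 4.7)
The plan is to instantiate the general identities (\ref{LDHX.5}), (\ref{LDHX.6}), and (\ref{LDHX.6e}), together with the expressions (\ref{K1.7aa}) and (\ref{K1.7ab}) for $\hat V_b(u)$ and $V^a(u)$, in the particular case where $\xi^a$ is a CKV. By definition of a CKV one has $L_\xi g_{ab}=2\psi g_{ab}$, i.e.\ the traceless collineation tensor $H_{ab}$ vanishes identically. Substituting $H_{ab}=0$ into (\ref{LDHX.5}) kills both the $2H_{ab}$ term and the $-2(H_{cd}u^c u^d)u_{(a}u_{b)}$ term simultaneously, while in (\ref{K1.7ab}) it removes the $-2h^a_d H^{dc}u_c$ contribution to $V^a(u)$.

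Step two is to exploit the ``spacelike'' hypothesis, which I read as the orthogonality $u_c\xi^c=0$ (the symmetry vector lies in the local rest space of the $u^a$-observers). This is the natural interpretation, since otherwise the extra pieces $-\dot u_a(u_b\xi^b)$ and $h_a^b(u_c\xi^c)_{;b}$ appearing in $\hat V_b(u)$ would survive and prevent any collapse to the stated compact form. Under this hypothesis $\hat V_b(u)$ reduces to $2\omega_{bc}\xi^c$ and $V^a(u)$ reduces to $2\omega^{a}{}_{.c}\xi^c$, each of which is automatically spatial.

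Step three is then a purely mechanical substitution. Plugging these reductions into (\ref{LDHX.5}) produces $L_\xi h_{ab}=2\psi h_{ab}+4u_{(a}\omega_{b)c}\xi^c$, which is (\ref{LDHX.6c}); plugging them into (\ref{LDHX.6}) yields (\ref{LDHX.6d}); and plugging them into (\ref{LDHX.6e}) gives (\ref{LDHX.6f}) after collecting the symmetrised pair under $g^{ac}g^{bd}u_{(c}\omega_{d)m}\xi^m$.

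The main difficulty is conceptual rather than computational: the residual $-2H^{ab}$ on the right-hand side of (\ref{LDHX.6f}) is inconsistent with a strict CKV reading, under which it must vanish. I would therefore check whether this term is being carried as a formal bookkeeping remnant from the unspecialised identity (\ref{LDHX.6e}), or whether the author is tacitly allowing a broader proper/affine-type symmetry in which $H_{ab}$ need not be zero even after the conformal piece $\psi$ has been extracted. With that point of interpretation settled, none of the three equations requires any work beyond the two substitutions already described.
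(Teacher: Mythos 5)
Your proof is correct and is exactly the substitution the paper intends (the proposition is stated without proof immediately after the general formulae (\ref{LDHX.5}), (\ref{LDHX.6}), (\ref{LDHX.6e}) with $\hat{V}_b$, $V^a$ already inserted): put $H_{ab}=0$ for the CKV and $u_{c}\xi^{c}=0$ for ``spacelike'', so that $\hat{V}_{b}=2\omega_{bc}\xi^{c}$ and $V^{a}=2\omega^{a}{}_{c}\xi^{c}$, and the three identities collapse to the stated forms. Your two side remarks are also well taken: the orthogonality $u_{c}\xi^{c}=0$ is an unstated but genuinely necessary hypothesis (spacelike in the sense $\xi^{a}\xi_{a}>0$ alone does not kill the terms $-\dot{u}_{a}(u_{b}\xi^{b})+h_{a}^{b}(u_{c}\xi^{c})_{;b}$), and the residual $-2H^{ab}$ in (\ref{LDHX.6f}) is a bookkeeping leftover from (\ref{LDHX.6e}) that vanishes for a CKV and should be dropped.
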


\subsection{Calculation of $L_{\xi}\omega_{ab},~~L_{\xi}\sigma_{ab},~~L_{\xi
}\theta,~~L_{\xi}\dot{u}_{a}$}

We compute now the quantities $L_{\xi}\sigma_{ab},L_{\xi}\omega_{ab},L_{\xi
}\theta,L_{\xi}\dot{u}_{a}$ in terms of the collineation parameters
$\psi,H_{ab}.$ We have\footnote{Recall that
\begin{equation}
L_{\xi}u_{a}=\left[  \psi-H^{cd}u_{c}u_{d}\right]  u_{a}+2\omega_{ab}\xi
^{b}-\dot{u}_{a}\,(u_{b}\xi^{b})+h_{a}^{b}(u_{c}\xi^{c})_{;b}%
\end{equation}
\par
and from (\ref{Ldd.4e.7}):%
\begin{equation}
L_{\xi}u^{a}=-\left[  \psi+H^{cd}u_{c}u_{d}\right]  u^{a}-2h_{b}^{a}%
H^{bc}u_{c}+2\omega_{.\;b}^{a}\xi^{b}-\dot{u}^{a}(u_{b}\xi^{b})+h^{ab}%
(u_{c}\xi^{c})_{;b}. \label{K1.7ae}%
\end{equation}
\par
where%
\begin{align}
\hat{V}_{a}(u)  &  =2\omega_{ab}\xi^{b}-\dot{u}_{a}\,(u_{b}\xi^{b})+h_{a}%
^{b}(u_{c}\xi^{c})_{;b}\\
V^{a}(u)  &  =-2h_{b}^{a}H^{bc}u_{c}+2\omega_{.\;b}^{a}\xi^{b}-\dot{u}%
^{a}(u_{b}\xi^{b})+h^{ab}(u_{c}\xi^{c})_{;b}.
\end{align}
} $(\hat{V}=\hat{V}(u),V=V(u)):$%
\begin{align}
L_{\xi}\left(  h_{a}^{c}h_{b}^{d}u_{c;d}\right)   &  =h_{a}^{c}h_{b}%
^{d}\left(  L_{\xi}u_{c;d}\right)  +h_{a}^{c}(L_{\xi}h_{b}^{d})u_{c;d}%
+h_{b}^{d}(L_{\xi}h_{a}^{c})u_{c;d}\nonumber\\
&  =h_{a}^{c}h_{b}^{d}\left[  \left[  \psi-H_{mn}u^{m}u^{n}\right]  _{;d}%
u_{c}+\left[  \psi-H_{mn}u^{m}u^{n}\right]  u_{d;c}+\hat{V}_{c;d}\right]
\nonumber\\
&  -h_{a}^{c}h_{b}^{d}\left[  \psi,_{c}g_{ed}+\psi,_{d}g_{ec}-\psi,_{e}%
g_{cd}+H_{ec;d}+H_{ed;c}-H_{cd;e}\right]  u^{e}\nonumber\\
&  +h_{a}^{c}\left[  u^{d}\hat{V}_{b}+V^{d}u_{b}\right]  u_{c;d}+h_{b}%
^{d}\left[  u^{c}\hat{V}_{a}+V^{c}u_{a}\right]  u_{c;d}\nonumber\\
&  =h_{a}^{c}h_{b}^{d}\left(  \psi-H_{mn}u^{m}u^{n}\right)  u_{c;d}+h_{a}%
^{c}h_{b}^{d}\hat{V}_{c;d}\label{LDHX.7}\\
&  +h_{a}^{c}u_{b}u_{c;d}V^{d}+h_{b}^{c}u_{a}u_{d;c}V^{d}+\dot{u}_{a}\hat
{V}_{b}\nonumber\\
&  +\dot{\psi}h_{ab}-h_{a}^{c}h_{b}^{d}\left[  H_{ec;d}+H_{ed;c}%
-H_{cd;e}\right]  u^{e}.\nonumber
\end{align}
We consider the symmetric and the antisymmetric parts of $u_{a;b}.\,$\ For the
antisymmetric part we find:%
\begin{align}
L_{\xi}\omega_{ab}  &  =h_{a}^{c}h_{b}^{d}L_{\xi}u_{[c;d]}\Rightarrow
\nonumber\\
L_{\xi}\omega_{ab}  &  =\left(  \psi-H_{mn}u^{m}u^{n}\right)  \omega
_{ab}+h_{a}^{c}h_{b}^{d}\hat{V}_{[c;d]}+\dot{u}_{[a}\hat{V}_{b]}-2u_{[a}%
\omega_{b]c}\hat{V}^{c} \label{LDHX.8}%
\end{align}

For the symmetric part $\theta_{ab}=u_{(a;b)}$ we find:%
\begin{align}
L_{\xi}\theta_{ab}  &  =L_{\xi}\left(  h_{a}^{c}h_{b}^{d}u_{(c;d)}\right)
\Rightarrow\nonumber\\
L_{\xi}\theta_{ab}  &  =\left(  \psi-H_{mn}u^{m}u^{n}\right)  \theta
_{ab}+h_{a}^{c}h_{b}^{d}\hat{V}_{(c;d)}+2u_{(a}\theta_{b)c}V^{c}+\dot{u}%
_{(a}\hat{V}_{b)}\nonumber\\
&  +\dot{\psi}h_{ab}-h_{(a}^{c}h_{b)}^{d}\left[  H_{ec;d}+H_{ed;c}%
-H_{cd;e}\right]  u^{e}. \label{LDHX.9a}%
\end{align}

We decompose $\theta_{ab}$ in trace and traceless part as follows
\begin{equation}
\theta_{ab}=\sigma_{ab}+\frac{1}{3}h_{ab}\theta.
\end{equation}

For the trace $\theta~$we find:%
\begin{align}
&  L_{\xi}\theta=-\left(  \psi+H_{mn}u^{m}u^{n}\right)  \theta-2H^{ab}%
\theta_{ab}+h^{cd}\hat{V}_{(c;d)}+2\omega_{ab}\dot{u}^{a}\xi^{b}-(\dot{u}%
_{a}\dot{u}^{a})\,(u_{b}\xi^{b})\nonumber\\
&  +\dot{u}^{a}(u_{c}\xi^{c})_{;a}+3\dot{\psi}-h^{cd}[H_{ec;d}+H_{ed;c}%
-H_{cd;e}]u^{e}, \label{LDHX.9c}%
\end{align}
The term$~2H^{ab}\theta_{ab}$ becomes
\[
2H^{ab}\theta_{ab}=2H^{ab}\sigma_{ab}+\frac{2}{3}H^{ab}h_{ab}\theta
=2H^{ab}\sigma_{ab}+\frac{2}{3}H^{mn}u_{m}u_{n}\theta.
\]
Therefore,
\begin{align}
L_{\xi}\theta &  =-\left(  \psi+\frac{5}{3}H_{mn}u^{m}u^{n}\right)
\theta-2H^{ab}\sigma_{ab}+h^{cd}\hat{V}_{(c;d)}+2\omega_{ab}\dot{u}^{a}\xi
^{b}\nonumber\\
&  -(\dot{u}_{a}\dot{u}^{a})\,(u_{b}\xi^{b})+\dot{u}^{a}(u_{c}\xi^{c}%
)_{;a}+3\dot{\psi}-h^{cd}[H_{ec;d}+H_{ed;c}-H_{cd;e}]u^{e}. \label{LDHX.9d}%
\end{align}

Concerning the four-acceleration~$\dot{u}_{a}$ it follows%
\begin{equation}
L_{\xi}\dot{u}_{a}=\dot{K}(u)\text{ }u_{a}+\hat{V}_{a}(u)_{;b}u^{b}%
+V^{b}(u)u_{a;b}+\psi,_{a}-\left[  H_{da;b}+H_{db;a}-H_{ab;d}\right]
u^{d}u^{b}%
\end{equation}
where $K(u)=\psi-H_{mn}u^{m}u^{n}.$

Finally for the shear $\sigma_{ab.}$ we have%
\begin{align}
L_{\xi}\sigma_{ab}  &  =K(u)\sigma_{ab}-\frac{1}{3}\left(  \frac{2}{3}%
H_{mn}u^{m}u^{n}\right)  \theta h_{ab}+h_{a}^{c}h_{b}^{d}\hat{V}%
_{(c;d)}+2u_{(a}\theta_{b)c}V^{c}+\dot{u}_{(a}\hat{V}_{b)}\nonumber\\
&  -h_{(a}^{c}h_{b)}^{d}\left[  H_{ec;d}+H_{ed;c}-H_{cd;e}\right]  u^{e}%
-\frac{1}{3}\left[  -2H_{ab}-2\left(  H_{cd}u^{c}u^{d}\right)  u_{(a}%
u_{b)}+2u_{(a}\hat{V}_{b)}\right]  \theta\nonumber\\
&  -\frac{1}{3}\left[
\begin{array}
[c]{c}%
-2H^{mn}\sigma_{mn}+h^{mn}\hat{V}_{(m;n)}+2\omega_{mn}\dot{u}^{m}\xi^{n}%
-(\dot{u}_{m}\dot{u}^{m})\,(u_{n}\xi^{n})\\
\\
+\dot{u}^{m}(u_{n}\xi^{n})_{;m}-h^{mn}[H_{em;n}+H_{en;m}-H_{mn;e}]u^{e}%
\end{array}
\right]  h_{ab} \label{LDHX.12}%
\end{align}

\section{The matter - Dynamics}

\label{The matter - Dynamics}

General Relativity studies the physics in spacetime in the same way it is done
in Special Relativity, that is, each physical field is described by a
symmetric tensor $T_{ab}$ which is called the energy momentum tensor of the
specific field. If the field is described by a Lagrangian $L$ then its energy
momentum tensor in a model theory with metric tensor $g_{ab}$ is computed from
the relation%
\begin{equation}
T_{ab}=\frac{\delta L}{\delta g_{ab}}.%
\end{equation}
In case in a given situation there are various fields interacting with the
gravitational field then the system of the fields is described by the sum of
all the energy momentum tensors of all the fields involved, while the
interaction with the gravitational field is described by the "conservation
equation\ "
\begin{equation}
_{total}T_{\phantom{...};b}^{ab}=0. \label{Key.19}%
\end{equation}
Of course in this approach each field still has its own field equations which
have to be satisfied in the model spacetime considered. For example a scalar
field $\phi$ given by the Action Integral%
\[
S=\int L\left(  g_{\mu\nu},\phi,\phi_{;k}\right)  d^{4}x
\]
which leads to the Klein Gordon equation
\begin{equation}
D_{i}\left(  \frac{\partial L}{\partial\phi_{;i}}\right)  -\frac{\partial
L}{\partial\phi}=0.
\end{equation}
In GR the gravitational field is assumed to be created by all types of matter
in spacetime{\LARGE .} The energy momentum tensor of matter is the total
energy momentum tensor $_{total}T^{ab}.$ In GR\ it is also assumed that the
totality of physical fields (including the gravitational field)\ formulate the
geometry (that is the metric $g_{ab})$ of spacetime via Einstein field
equations:%
\begin{equation}
G_{ab}=kT_{ab} \label{Key.20}%
\end{equation}
where $G_{ab}$ is the Einstein tensor (of the particular $g_{ab}$ which is
unknown!). The conservation equation (\ref{Key.19}) leads to the geometric
condition $G_{;b}^{ab}=0$ \ which is the second Bianchi identity, hence no new
constraints are introduced. The constant $k$ is a universal constant because
it relates two entities of a different nature (i.e. geometry and the physical
fields) and it is called the Einstein gravitational constant. \ By choosing
appropriate units (called physical units) one sets without loss of generality
$k=1.$ Furthermore, by demanding that GR agrees with the Newtonian theory in
the solar system one relates $k$ with the Newtonian gravitational constant $G$
with the formula $k=\frac{8\pi G}{c^{2}}$. \

In this picture so far there is no observational status and we have a kind of
theory which is universal for all observers and all types and combinations of
matter. This is the scenario of General Relativity \cite{Saridakis Tsamparlis
1991}. We note that Einstein field equations (\ref{Key.20}) are independent of
the observer and they relate the geometry of the background space (which is
not specified!) with the matter content of the universe (which is also not
specified). Therefore they cannot be solved because one knows neither the
metric, therefore one cannot compute the $G_{ab}$ nor the mater i.e. the
energy momentum tensor $T_{ab}$. In other words equation (\ref{Key.20}) is
used only to frame the general set up (i.e. the scenario) of the theory and
cannot give any information until a class of observers and the matter content
of spacetime are specified. This is the reason that GR\ is a theory of
Physics which generates gravitational models instead of a theory giving a
unique model, as it is the case with the Newtonian gravitational theory.

Before we consider the observational aspects of GR, that is, definite
gravitational models as seen by a specified class of observers, we must
address the more fundamental question, which is\footnote{The question then is:
Which class of observers?
\par
The ideal choice would be the intrinsic observers but only `God' knows who are
they. What it is left to us is to chose an arbitrary class of observers by
means of a timelike unit vector field and adding some more simplifying
assumptions write down (\ref{Key.20}) in terms of some set of differential
equations whose solutions will give us an indication of spacetime as seen by
these observers. The experiment and the observations will show us how close to
the intrinsic observers we are and how close to `reality' our simplifying
assumptions are. In conclusion we cannot solve the field equations but we can
use them as the vehicle to make up scientific pictures of the gravitational
field and get some answers which we hope they will be as close as possible to
our measurements (=reality!). The ultimate truth is hidden within the
intrinsic observers with whom we have no touch or communication.
}

 \emph{To what degree the matter content and the geometry of spacetime are
interconnected as a result of the field equations (\ref{Key.20})?}

The answer to this question has two parts. The first is due to the inherent
identities of the Riemannian geometry of spacetime and the second is the result
of the assumed collineations of spacetime. Concerning the first we see that
due to the contracted Bianchi identity (\ref{BI.03}) \ Einstein field
equations (\ref{Key.20})\ imply that $T_{ab}$ is symmetric\footnote{Or at
least only the symmetric part of $T_{ab}$ enters the field equations and
determines the gravitational field.} and that that the conservation equations
(\ref{Key.19}) are satisfied as a constraint / identity independently of the
observers which are used to build a gravitational model.

Concerning the identities from the collineations assumed by a specific model
we note that $L_{\xi}G_{ab}$ can be expressed in terms of the generic
collineation $L_{\xi}g_{ab}$ and subsequently in terns of the collineation
parameters $\psi,H_{ab}.$ The general expression in terms of the trace and the
traceless part is:%
\begin{align}
L_{\xi}G_{ab}  &  =-\left(  n-2\right)  A_{ab}+2K_{ab}-\square H_{ab}%
-RH_{ab}+\nonumber\\
&  +\frac{1}{n}g_{ab}\left[  \left(  n-1\right)  \left(  n-2\right)
\square\psi-\left(  n-2\right)  H_{.;cd}^{cd}+nR_{cd}H^{cd}\right]
\label{Key.22a}%
\end{align}
where:%
\begin{equation}
A_{ab}=\psi_{;ab}-\frac{1}{n}g_{ab}\square\psi~\text{and }K_{ab}=H_{.\left(
a;b\right)  d}^{d}-\frac{1}{n}g_{ab}H_{.;cd}^{cd} \label{Key.22}%
\end{equation}

Then Einstein field equations (\ref{Key.20}) relate $L_{\xi}T_{ab}$ with the
collineation parameters $\psi,H_{ab}$ which implies that the `symmetries' of
$T_{ab}$ as a two index tensor are related to the symmetries of the metric
$g_{ab}.$ In this manner when we make a symmetry assumption in spacetime we
simultaneously impose a constraint in the form of a symmetry to the energy
momentum tensor, therefore we restrict the possible forms of matter this model
spacetime can support. In a sense by an external assumption at the level of
symmetry we constraint both the geometry and the matter content of spacetime.

To bring all the above at a level which can be used in practice, that is
observation, we must introduce a class of observers $u^{a}.$ The choice of
these observers is our choice and there is no guarantee whatsoever that they
are, or they will be, the intrinsic observers of spacetime. However this is not
to discourage us because it is the most we can do. The introduction of
observers immediately introduces a kinematics in spacetime and all the
considerations of the previous sections apply.

\section{The role of observers in GR}

As it has been remarked Einstein field equations provide the canvas on which
the various gravitational models of GR are created. The construction of a
gravitational model requires two necessary elements/actions:

a. Additional geometric assumptions which are made beyond the internal
geometric identities of the Riemannian structure of spacetime and

b. The choice of a class of observers, that is a timelike unit four vector
field $u^{a}$ which defines the four-velocity of the observers.

Note that in this scenario the energy momentum tensor $T_{ab}$ of matter and
the other fields are not effected. What it changes is the 1+3 decomposition of
$T_{ab}$ in irreducible parts which encounter the observation by the specific
observers of the physical variables corresponding to each irreducible part.

\subsection{The 1+3 decomposition of the energy momentum tensor: The dynamic
variables}

We apply the general formula (\ref{d.one.3}) in the case $Y_{ab}=T_{ab}$ where
$T_{ab}=T_{ba}$ is the energy momentum tensor. This defines the irreducible
parts (tensors):%
\begin{align}
\mu &  =T_{ab}u^{a}u^{b}\label{TE.1}\\
p  &  =\frac{1}{3}h^{ab}T_{ab}\label{TE.2}\\
q^{a}  &  =h^{ab}T_{bc}u^{c}\label{TE.3}\\
\pi_{ab}  &  =(h_{a}^{r}h_{b}^{s}-\frac{1}{3}h_{ab}h^{rs})T_{rs} \label{TE.4}%
\end{align}
and we have the covariant decomposition/identity:
\begin{equation}
T_{ab}=\mu u_{a}u_{b}+ph_{ab}+2q_{(a}u_{b)}+\pi_{ab}. \label{TE.5}%
\end{equation}

We note that in this decomposition $T_{ab}$ is described by two scalar fields
$(\mu,p)$ \ one spacelike vector $(q^{a},q_{a}u^{a}=0)$ \ and a traceless
symmetric 2-tensor $(\pi_{ab},g_{ab}\pi^{ab}=0).$

These quantities we call the physical variables and assume that they represent
the mass density, the isotropic pressure, the heat flux and the traceless
stress tensor respectively \textbf{as measured by the observers} $u^{a}$. Due
to this decomposition and the assumed physical interpretation we consider the
\ types of `gravitating fluids' given in the Table \ref{table01}.%

%TCIMACRO{\TeXButton{B}{\begin{table}[tbp] \centering}}%
%BeginExpansion
\begin{table}[tbp] \centering
%EndExpansion
\caption{Types of Energy - Momentum Tensors
}%
\begin{tabular}
[c]{cccccc}\hline\hline
$\mathbf{\mu}$ & $\mathbf{p}$ & $\mathbf{q}^{a}$ & $\mathbf{\pi}^{ab}$ &
$\mathbf{T}_{ab}$ & \textbf{Type of fluid}\\\hline
$0$ & $0$ & $0$ & $0$ & $0$ & Empty space\\
$\neq0$ & $0$ & $0$ & $0$ & $T_{ab}=\mu u_{a}u_{b}$ & Dust\\
$\neq0$ & $\neq0$ & $0$ & $0$ & $T_{ab}=\mu u_{a}u_{b}+ph_{ab}$ & Perfect
fluid\\
$\neq0$ & $\neq0$ & $\neq0$ & $0$ & $T_{ab}=\mu u_{a}u_{b}+ph_{ab}%
+2q_{(a}u_{b)}$ & Isotropic non-perfect fluid\\
$\neq0$ & $\neq0$ & $0$ & $\neq0$ & $T_{ab}=\mu u_{a}u_{b}+ph_{ab}+\pi_{ab}.$
& Anisotropic fluid without heat flux\\
$\neq0$ & $\neq0$ & $\neq0$ & $\neq0$ & $T_{ab}=\mu u_{a}u_{b}+ph_{ab}%
+2q_{(a}u_{b)}+\pi_{ab}.$ & General anisotropic fluid\\\hline\hline
\end{tabular}
\label{table01}%
%TCIMACRO{\TeXButton{E}{\end{table}}}%
%BeginExpansion
\end{table}%
%EndExpansion

\subsection{The 1+3 decomposition of the conservation equations}

The conservation equations follow from the contracted Bianchi identity and
Einstein field equations$.$ By 1+3 decomposing $T_{;b}^{ab}=0$ wrt $u^{a}$ we
find the conservation equations as two sets of equations, one set resulting
form the projection along the direction $u^{a}$ \ and one normal to $u^{a}.$
We replace $T^{ab}$ form (\ref{TE.5}) \ in terms of the physical variables and
find the two conservation equations\footnote{We write as usually the derivative
along $u^{a}$ by a dot, e.g. $\dot{A}^{a}=A_{;b}^{a}u^{b}.$}:%
\begin{equation}
\dot{\mu}+(\mu+p)\theta+q_{;b}^{b}+q^{a}\dot{u}_{a}+\pi^{ab}\sigma_{ab}=0.
\label{TE.6}%
\end{equation}%
\begin{equation}
(\mu+p)\dot{u}_{a}+h_{a}^{c}(p_{,c}+\dot{q}_{c}+\pi_{\text{ }c\text{ };d}%
^{d})+q^{c}\left(  \omega_{ac}+\sigma_{ac}+\frac{4}{3}\theta h_{ac}\right)
=0. \label{TE.7}%
\end{equation}

\section{The physical role of the propagation and the constraint equations}

The propagation and the constraint equations have been written in terms of the
kinematic variables and the Ricci tensor. However with the introduction of the
physical variables by the 1+3 decomposition of the energy momentum tensor we
can use Einstein equations and replace $R_{ab}$ in terms of the physical
variables. Then we have the complete physical role of the propagation and the
constraint equations.

Einstein field equations provide%
\begin{equation}
R_{ab}=T_{ab}-\frac{1}{2}g_{ab}T+\Lambda g_{ab}. \label{PE.7}%
\end{equation}
where $T_{ab}=\mu u_{a}u_{b}+ph_{ab}+2q_{(a}u_{b)}+\pi_{ab}$ \ by (\ref{TE.5})
and the trace:%
\begin{equation}
T=T^{a}{}_{a}=-\mu+3p. \label{PE.8}%
\end{equation}
Replacing in (\ref{PE.7}) we find $R_{ab}$ in terms of the physical variables:%
\begin{equation}
R_{ab}=\mu u_{a}u_{b}+ph_{ab}+2q_{(a}u_{b)}+\pi_{ab}+\frac{1}{2}g_{ab}%
(\mu-3p+2\Lambda) \label{PE.9}%
\end{equation}
or:%
\begin{equation}
R_{ab}=(\mu+p)u_{a}u_{b}+\frac{1}{2}(\mu-p+2\Lambda)g_{ab}+2q_{(a}u_{b)}%
+\pi_{ab}. \label{PE.9a}%
\end{equation}
From (\ref{PE.9}) \ follows:%
\begin{align}
R_{ab}u^{a}u^{b}  &  =\mu-\frac{1}{2}(\mu-3p)-\Lambda=\frac{1}{2}\mu+\frac
{3}{2}p-\Lambda\label{PE.10}\\
R_{ab}u^{a}  &  =-\mu u_{b}-\frac{1}{2}q_{b}+\frac{1}{2}u_{b}(\mu
-3p+2\Lambda)\label{PE.11}\\
R  &  =R^{a}{}_{a}=\mu-3p+4\Lambda\label{PE.12}%
\end{align}
where $R=g^{ab}R_{ab}$ is the curvature scalar.

We\ consider now the constraint and propagation equations and we divide them
in two sets. One set contains the ones which do not contain the Curvature
tensor and the Ricci tensor, therefore they are independent of the field
equations (and consequently the matter content of the universe) and they are
purely kinematic equations. The second set contains the equations which depend
on the Ricci tensor and the field equations and are the dynamical propagation
and constraint equations.

\subsection{The dynamical propagation and constraint equations}

The dynamical propagation equations\footnote{The remaining propagation
equation (\ref{CPE.22}) giving the propagation of the vorticity, does not
involve the Ricci tensor hence it is a kinematical equation.} are equation
(\ref{CPE.23}) and equation \ (\ref{CPE.24}). Replacing the Ricci tensor from
(\ref{PE.9}) we find for the first:%
\begin{equation}
\dot{\theta}+\frac{1}{3}\theta^{2}+2(\sigma^{2}-\omega^{2})-\dot{u}^{a}{}%
_{;a}=-\frac{1}{2}(\mu+3p)+\Lambda. \label{CPE.23a}%
\end{equation}
This is known as the\textbf{\ Raychaudhuri equation}.

Concerning the propagation equation \ \ (\ref{CPE.24}) \ we have:%
\begin{align*}
&  h_{s}^{a}h_{t}^{b}\left[  \dot{\sigma}_{ab}-\dot{u}_{(a;b)}\right]
+\sigma_{(s|c|}\sigma_{\text{ \thinspace}t)}^{c}+\frac{2}{3}\sigma_{st}%
\theta+\omega_{s}\omega_{t}-\dot{u}_{s}\dot{u}_{t}-\frac{1}{3}\left(
2\sigma^{2}+\omega^{2}-\dot{u}_{;b}^{b}\right)  h_{st}\\
&  =-E_{st}+\frac{1}{2}\left(  h_{s}^{a}h_{t}^{b}-\frac{1}{3}h_{st}%
h^{ab}\right)  R_{ab}\\
&  =-E_{st}+\frac{1}{2}\left(  h_{s}^{a}h_{t}^{b}-\frac{1}{3}h_{st}%
h^{ab}\right)  \left(  \mu u_{a}u_{b}+ph_{ab}+2q_{(a}u_{b)}+\pi_{ab}+\frac
{1}{2}g_{ab}(\mu-3p+2\Lambda)\right) \\
&  =-E_{st}+\frac{1}{2}\left(  h_{s}^{a}h_{t}^{b}-\frac{1}{3}h_{st}%
h^{ab}\right)  \pi_{ab}=-E_{st}+\frac{1}{2}\pi_{st}.
\end{align*}
Therefore \ in terms of the physical variables the propagation equation
(\ref{CPE.24}) reads:%
\begin{equation}
h_{s}^{a}h_{t}^{b}\left[  \dot{\sigma}_{ab}-\dot{u}_{(a;b)}\right]
+\sigma_{(s|c|}\sigma_{\text{ \thinspace}t)}^{c}+\frac{2}{3}\sigma_{st}%
\theta+\omega_{s}\omega_{t}-\dot{u}_{s}\dot{u}_{t}-\frac{1}{3}\left(
2\sigma^{2}+\omega^{2}-\dot{u}_{;b}^{b}\right)  h_{st}=-E_{st}+\frac{1}{2}%
\pi_{st}. \label{CPE.23b}%
\end{equation}
This equation gives the propagation of shear.

Moreover, the only dynamic constraint equation is equation (\ref{CPE.26}). We
find:%
\begin{equation}
h_{s}^{c}\left[  \frac{2}{3}\theta,_{c}-h^{ab}\sigma_{ac;b}-\eta_{camn}%
u^{a}\left(  \omega^{m;n}+2\omega^{m}\dot{u}^{n}\right)  \right]  =q_{b}.
\label{CPE.26a}%
\end{equation}

\section{The dynamic role of collineations}

As we remarked in section \ref{The matter - Dynamics} the collineation
parameters restrict the possible forms of $T_{ab.}$ Indeed Einstein field
equations and (\ref{Key.22a}) give:%
\begin{align}
L_{\xi}T_{ab}  &  =-\left(  n-2\right)  A_{ab}+2K_{ab}-\square H_{ab}%
-RH_{ab}+\nonumber\\
&  +\frac{1}{n}g_{ab}\left[  \left(  n-1\right)  \left(  n-2\right)
\square\psi-\left(  n-2\right)  H_{.;cd}^{cd}+nR_{cd}H^{cd}\right]
\label{Key.27}%
\end{align}
where the tensors $A_{ab},K_{ab}$ are given in (\ref{Key.22}). From this
equation it is possible to give the effect of a collineation directly to the
physical parameters of the observers $u^{a}.$ Indeed if we replace in
(\ref{Key.27}) $T_{ab}$ in $L_{\xi}T_{ab}$ using (\ref{TE.5}) and take the
irreducible parts, we express the Lie derivative along the collineation vector
of the dynamical variables in terms of the collineation parameters. This form
of the field equations will relate directly the physical and the geometric
variables thus enabling one to draw direct conclusions of the symmetry
assumptions at the level of dynamics. Furthermore the approach is completely
general and the results hold for all types of matter, all observers and all collineations.

From (\ref{PE.9a}) we decompose the $L_{\xi}R_{ab}$ in irreducible parts with
respect to the 1+3 decomposition defined by $u^{a}:$%
\begin{align}
L_{\xi}R_{ab}  &  =\left[  \frac{1}{2}\left(  \mu+3p\right)  ^{\circ
}-2\mathring{q}_{c}u^{c}+\mathring{\pi}_{cd}u^{c}u^{d}-\left(  \mu
+3p-2\Lambda\right)  \left(  \dot{\xi}_{c}u^{c}\right)  -2q_{c}\dot{\xi}%
^{c}\right]  u_{a}u_{b}+\nonumber\\
& \nonumber\\
& \nonumber\\
&  -2\left[
\begin{array}
[c]{c}%
-\left(  \mu+p\right)  \mathring{u}_{d}h_{c}^{d}-\mathring{q}_{d}h_{c}%
^{d}+\mathring{\pi}_{de}u^{d}h_{c}^{e}-\frac{1}{2}\left(  \mu+3p-2\Lambda
\right)  u_{c}\xi_{;d}^{c}h_{e}^{d}+\\
\\
-q_{c}\xi_{;d}^{c}h_{e}^{d}+\frac{1}{2}\left(  \mu-p+2\Lambda\right)  \dot
{\xi}_{d}h_{e}^{d}+\left(  q_{d}h_{e}^{d}\right)  \left(  \dot{\xi}^{c}%
u_{c}\right)  -h_{e}^{d}\pi_{cd}\dot{\xi}^{c}%
\end{array}
\right]  u_{(a}h_{e)}^{c}+\nonumber\\
& \nonumber\\
&  +2\left[  \frac{3}{2}\left(  \mu-p\right)  ^{\circ}+2q_{c}\mathring{u}%
^{c}+\mathring{\pi}_{cd}h^{cd}+\left(  \mu-p+2\Lambda\right)  (4\psi+\dot{\xi
}^{c}u_{c})+2u_{c}\xi_{;d}^{c}q^{d}+2\pi_{cd}H^{cd}\right]  h_{ab}\nonumber\\
& \nonumber\\
&  +2\left[
\begin{array}
[c]{c}%
\frac{1}{2}\left(  \mu-p\right)  ^{\circ}h_{ef}+2q_{(c}\mathring{u}_{d)}%
h_{e}^{c}h_{f}^{d}+\mathring{\pi}_{cd}h_{e}^{c}h_{f}^{d}+\\
\\
+\left(  \mu-p+2\Lambda\right)  \xi_{d;r}h_{e}^{d}h_{f}^{r}+2q_{r}u_{c}%
\xi_{;d}^{c}h_{e}^{d}h_{f}^{r}+2\pi_{cr}\xi_{;d}^{c}h_{e}^{d}h_{f}^{r}%
\end{array}
\right]  \left[  h_{(a}^{e}h_{b)}^{f}-\frac{1}{3}h_{ab}h^{ef}\right]
\label{Key.28}%
\end{align}
where $\circ$ means covariant derivation along $\xi^{a}$ e.g. $\mathring{\mu
}=\mu_{;a}\xi^{a}.$Also from (\ref{Key.13}) we have $($for $n=4!)$:%
\begin{equation}
L_{\xi}R_{ab}=-2\psi_{;ab}-g_{ab}\square\psi+2H_{.\left(  a;b\right)  d}%
^{d}-\square H_{ab}. \label{Key28.a}%
\end{equation}
We 1+3 decompose this expression and equate with (\ref{Key.28}) from which we
find the field equations in the form of Lie derivative of physical variables
in terms of kinematic variables and the collineation parameters.

To do that we 1+3 decompose first $\psi_{;ab}.$ We write:%
\begin{equation}
\psi_{;ab}=\lambda_{\psi}u_{a}u_{b}+p_{\psi}h_{ab}+2q_{\psi(a}u_{b)}+\pi_{\psi
ab} \label{Key.29}%
\end{equation}
where:%
\begin{equation}
\mu_{\psi}=\psi_{;ab}u^{a}u^{b},\;p_{\psi}=\frac{1}{3}\psi_{;ab}%
h^{ab},\;q_{\psi a}=-\psi_{;bc}h_{a}^{b}u^{c},\;\pi_{\psi ab}=(h_{a}^{r}%
h_{b}^{s}-\frac{1}{3}h_{ab}h^{rs})\psi_{;rs}. \label{Key.30}%
\end{equation}

We also have:%

\begin{equation}
\square\psi=\psi_{;ab}g^{ab}=-\mu_{\psi}+3p_{\psi}. \label{Key.31}%
\end{equation}
It is also possible to 1+3 decompose the collineation tensor $H_{ab}$ as
follows:%
\begin{equation}
H_{ab}=\mu_{H}u_{a}u_{b}+p_{H}h_{ab}-2q_{H(a}u_{b)}+\pi_{Hab} \label{Key.32}%
\end{equation}
where:\bigskip%
\begin{equation}
\mu_{H}=H_{ab}u^{a}u^{b},\;p_{H}=\frac{1}{3}H_{ab}h^{ab},\;q_{Ha}=-H_{bc}%
h_{a}^{b}u^{c},\;\pi_{Hab}=(h_{a}^{r}h_{b}^{s}-\frac{1}{3}h_{ab}h^{rs})H_{rs}.
\label{Key.33}%
\end{equation}
We compute:%
\begin{align*}
-2\psi_{;ab}-g_{ab}\square\psi &  =-2\left(  \mu_{\psi}u_{a}u_{b}+p_{\psi
}h_{ab}+2q_{\psi(a}u_{b)}+\pi_{\psi ab}\right)  +\left(  \mu_{\psi}-3p_{\psi
}\right)  (-u_{a}u_{b}+h_{ab})\\
&  =-3(\mu_{\psi}-p_{\psi})u_{a}u_{b}+(\mu_{\psi}-5p_{\psi})h_{ab}-4q_{\psi
(a}u_{b)}-2\pi_{\psi ab}.
\end{align*}
Concerning the second term in the rhs of (\ref{Key28.a}) we define:%
\begin{equation}
K_{ab}=2H_{.\left(  a;b\right)  d}^{d}-\square H_{ab} \label{Key.34}%
\end{equation}
and we write:%
\begin{equation}
K_{ab}=\mu_{K}u_{a}u_{b}+p_{K}h_{ab}-2q_{K(a}u_{b)}+\pi_{Kab} \label{Key.35}%
\end{equation}
where:%
\begin{equation}
\mu_{K}=K_{ab}u^{a}u^{b},\;p_{K}=\frac{1}{3}K_{ab}h^{ab},\;q_{Ka}=-K_{bc}%
h_{a}^{b}u^{c},\;\pi_{Kab}=(h_{a}^{r}h_{b}^{s}-\frac{1}{3}h_{ab}h^{rs})K_{rs}.
\label{Key.36}%
\end{equation}
Then we have:%
\begin{equation}
L_{\xi}R_{ab}=(-3\mu_{\psi}+3p_{\psi}+\mu_{K})u_{a}u_{b}+(\mu_{\psi}-5p_{\psi
}+p_{K})h_{ab}+2(-2q_{\psi(a}+q_{K(a})u_{b)}-2\pi_{\psi ab}+\pi_{Kab}.
\label{Key.37}%
\end{equation}

Comparing with (\ref{Key.28}) we write the field equations in the following
form:%
\begin{align}
\frac{1}{2}\left(  \mu+3p\right)  ^{\circ}-2\mathring{q}_{c}u^{c}%
+\mathring{\pi}_{cd}u^{c}u^{d}-\left(  \mu+3p-2\Lambda\right)  \left(
\dot{\xi}_{c}u^{c}\right)  -2q_{c}\dot{\xi}^{c}  &  =-3\mu_{\psi}+3p_{\psi
}+\mu_{K}\label{Key.38}\\
& \nonumber
\end{align}%
\begin{align}
\left[
\begin{array}
[c]{c}%
-\left(  \mu+p\right)  \mathring{u}_{d}h_{e}^{d}-\mathring{q}_{d}h_{e}%
^{d}+\mathring{\pi}_{dc}u^{d}h_{e}^{c}-\frac{1}{2}\left(  \mu+3p-2\Lambda
\right)  u_{c}\xi_{;d}^{c}h_{e}^{d}+\\
\\
-q_{c}\xi_{;d}^{c}h_{e}^{d}+\frac{1}{2}\left(  \mu-p+2\Lambda\right)  \dot
{\xi}_{d}h_{e}^{d}+\left(  q_{d}h_{e}^{d}\right)  \left(  \dot{\xi}^{c}%
u_{c}\right)  -h_{e}^{d}\pi_{cd}\dot{\xi}^{c}%
\end{array}
\right]   &  =-(-2q_{\psi e}+q_{Ke})\label{Key.39}\\
& \nonumber
\end{align}%
\begin{align}
&  \frac{3}{2}\left(  \mu-p\right)  ^{\circ}+2q_{c}\mathring{u}^{c}%
+\mathring{\pi}_{cd}h^{cd}+\left(  \mu-p+2\Lambda\right)  (4\psi+\dot{\xi}%
^{c}u_{c})+2u_{c}\xi_{;d}^{c}q^{d}+2\pi_{cd}H^{cd}\nonumber\\
&  =\frac{3}{2}(\mu_{\psi}-5p_{\psi}+p_{K}) \label{Key.40}%
\end{align}

\begin{align}
2\left[
\begin{array}
[c]{c}%
\frac{1}{2}\left(  \mu-p\right)  ^{\circ}h_{ef}+2q_{(c}\mathring{u}_{d)}%
h_{e}^{c}h_{f}^{d}+\mathring{\pi}_{cd}h_{e}^{c}h_{f}^{d}+\\
\\
+\left(  \mu-p+2\Lambda\right)  \xi_{d;r}h_{e}^{d}h_{f}^{r}+2q_{r}u_{c}%
\xi_{;d}^{c}h_{e}^{d}h_{f}^{r}+2\pi_{cr}\xi_{;d}^{c}h_{e}^{d}h_{f}^{r}%
\end{array}
\right]   &  & \left[  h_{(a}^{e}h_{b)}^{f}-\frac{1}{3}h_{ab}h^{ef}\right]
\nonumber\\
&  & =-2\pi_{\psi ab}+\pi_{Kab}. \label{Key.41}%
\end{align}

It is possible to simplify these equations.

We note the relations:%
\begin{align*}
\dot{\xi}^{c}u_{c}  &  =\xi_{c;d}u^{c}u^{d}=-\psi+H_{cd}u^{c}u^{d}\\
q^{d}\left(  u_{c}\xi_{;d}^{c}+\dot{\xi}_{d}\right)   &  =q^{d}(u^{c}\xi
_{c;d}+\xi_{d;c}u^{c})=2q^{d}u^{c}H_{cd}%
\end{align*}
Then equation (\ref{Key.38}) is written as follows:%
\begin{align*}
LHS  &  =\frac{1}{2}\left\{  \left(  \mu+3p\right)  ^{\circ}-4\mathring{q}%
_{c}u^{c}+2\mathring{\pi}_{cd}u^{c}u^{d}-2\left(  \mu+3p-2\Lambda\right)
\left(  \dot{\xi}_{c}u^{c}\right)  -4q_{c}\dot{\xi}^{c}\right\} \\
&  =\frac{1}{2}\left\{  \left(  \mu+3p\right)  ^{\circ}+4q_{c}\mathring{u}%
^{c}+2\mathring{\pi}_{cd}u^{c}u^{d}-2\left(  \mu+3p-2\Lambda\right)  \left(
-\psi+H_{cd}u^{c}u^{d}\right)  -4q_{c}\dot{\xi}^{c}\right\} \\
&  =\frac{1}{2}\left\{  \left(  \mu+3p\right)  ^{\circ}+4q_{c}(\mathring
{u}^{c}-\dot{\xi}^{c})+2\mathring{\pi}_{cd}u^{c}u^{d}-2\left(  \mu
+3p-2\Lambda\right)  \left(  -\psi+H_{cd}u^{c}u^{d}\right)  \right\} \\
&  =\frac{1}{2}\left\{  \left(  \mu+3p\right)  ^{\circ}+4q_{c}L_{\xi}%
u^{c}+2\mathring{\pi}_{cd}u^{c}u^{d}-2\left(  \mu+3p-2\Lambda\right)  \left(
-\psi+H_{cd}u^{c}u^{d}\right)  \right\}
\end{align*}
hence (\ref{Key.38}) becomes:%
\begin{equation}
\left(  \mu+3p\right)  ^{\circ}+4q_{c}L_{\xi}u^{c}+2\mathring{\pi}_{cd}%
u^{c}u^{d}-2\left(  \mu+3p-2\Lambda\right)  \left(  -\psi+H_{cd}u^{c}%
u^{d}\right)  =2(-3\mu_{\psi}+3p_{\psi}+\mu_{K}) \label{Key.38a}%
\end{equation}

Similarly, for expression (\ref{Key.40}) we have:%
\begin{align*}
&  LHS\\
&  =2\left[  \frac{3}{2}\left(  \mu-p\right)  ^{\circ}+2q_{c}\mathring{u}%
^{c}+\mathring{\pi}_{cd}h^{cd}+\left(  \mu-p+2\Lambda\right)  (4\psi+\dot{\xi
}^{c}u_{c})+2u_{c}\xi_{;d}^{c}q^{d}+2\pi_{cd}H^{cd}\right] \\
&  =2\left[  \frac{3}{2}\left(  \mu-p\right)  ^{\circ}+2q_{c}\mathring{u}%
^{c}+\mathring{\pi}_{cd}u^{c}u^{d}+\left(  \mu-p+2\Lambda\right)
(3\psi+H_{cd}u^{c}u^{d})+2u^{c}q^{d}(\psi g_{cd}+H_{cd}-\xi_{d;c})+2\pi
_{cd}H^{cd}\right] \\
&  =2\left[  \frac{3}{2}\left(  \mu-p\right)  ^{\circ}+2q_{c}(\mathring{u}%
^{c}-\dot{\xi}^{c})+\mathring{\pi}_{cd}u^{c}u^{d}+\left(  \mu-p+2\Lambda
\right)  (3\psi+H_{cd}u^{c}u^{d})+2u^{c}q^{d}H_{cd}+2\pi_{cd}H^{cd}\right] \\
&  =3\left(  \mu-p\right)  ^{\circ}+4q_{c}L_{\xi}u^{c}+2\mathring{\pi}%
_{cd}u^{c}u^{d}+2\left(  \mu-p+2\Lambda\right)  (3\psi+H_{cd}u^{c}%
u^{d})+42u^{c}q^{d}H_{cd}+4\pi_{cd}H^{cd},
\end{align*}
that is,
\begin{align}
&  3\left(  \mu-p\right)  ^{\circ}+4q_{c}L_{\xi}u^{c}+2\mathring{\pi}%
_{cd}u^{c}u^{d}+2\left(  \mu-p+2\Lambda\right)  (3\psi+H_{cd}u^{c}%
u^{d})+4H_{cd}u^{c}q^{d}+4\pi_{cd}H^{cd}\nonumber\\
&  =3(\mu_{\psi}-5p_{\psi}+p_{K}). \label{Key.40a}%
\end{align}
Adding the two new equations we get:%
\begin{align*}
LHS  &  =\left(  \mu+3p\right)  ^{\circ}+3\left(  \mu-p\right)  ^{\circ
}+8q_{c}L_{\xi}u^{c}+4\mathring{\pi}_{cd}u^{c}u^{d}\\
&  +2\left(  3\mu-3p+6\Lambda+\mu+3p-2\Lambda\right)  \psi\\
&  +2\left(  \mu-p+2\Lambda-\mu-3p+2\Lambda\right)  H_{cd}u^{c}u^{d}\\
&  +4H_{cd}u^{c}q^{d}+4\pi_{cd}H^{cd}\\
&  =4\mathring{\mu}+8q_{c}L_{\xi}u^{c}+4\mathring{\pi}_{cd}u^{c}u^{d}%
+8(\mu+\Lambda)\psi\\
&  +8(-p+\Lambda)H_{cd}u^{c}u^{d}+4H_{cd}u^{c}q^{d}+4\pi_{cd}H^{cd}%
\end{align*}%
\[
RHS=3(\mu_{\psi}-5p_{\psi}+p_{K})+2(-3\mu_{\psi}+3p_{\psi}+\mu_{K}%
)=-3\mu_{\psi}-9p_{\psi}+3p_{K}+2\mu_{K},
\]
that is,%
\begin{align*}
&  4\mathring{\mu}+8q_{c}L_{\xi}u^{c}+4\mathring{\pi}_{cd}u^{c}u^{d}%
+8(\mu+\Lambda)\psi\\
&  +8(-p+\Lambda)H_{cd}u^{c}u^{d}+4H_{cd}u^{c}q^{d}+4\pi_{cd}H^{cd}\\
&  =-3\mu_{\psi}-9p_{\psi}+3p_{K}+2\mu_{K}%
\end{align*}
or%
\begin{align}
\mathring{\mu}  &  =\frac{1}{4}\left(  -3\mu_{\psi}-9p_{\psi}+3p_{K}+2\mu
_{K}\right)  -2q_{c}L_{\xi}u^{c}-\mathring{\pi}_{cd}u^{c}u^{d}-2(\mu
+\Lambda)\psi\nonumber\\
&  +2(p-\Lambda)H_{cd}u^{c}u^{d}-H_{cd}u^{c}q^{d}-\pi_{cd}H^{cd}.
\label{Key.42}%
\end{align}

The last equation expresses the derivative of $\mu$ along the collineation
vector $\xi^{a}.$ To find $\mathring{p}$ we use (\ref{Key.38a}) and replace
$\mathring{\mu}.$ We have:%
\begin{align*}
&  3\mathring{p}+\frac{1}{4}\left(  -3\mu_{\psi}-9p_{\psi}+3p_{K}+2\mu
_{K}\right)  -2(\mu+\Lambda)\psi\\
&  +2(p-\Lambda)H_{cd}u^{c}u^{d}-H_{cd}u^{c}q^{d}-\pi_{cd}H^{cd}\\
&  +2q_{c}L_{\xi}u^{c}+\mathring{\pi}_{cd}u^{c}u^{d}-2\left(  \mu
+3p-2\Lambda\right)  \left(  -\psi+H_{cd}u^{c}u^{d}\right) \\
&  =2(-3\mu_{\psi}+3p_{\psi}+\mu_{K})
\end{align*}

\begin{align*}
&  3\mathring{p}-H_{cd}u^{c}q^{d}-\pi_{cd}H^{cd}\\
&  +2q_{c}L_{\xi}u^{c}+\mathring{\pi}_{cd}u^{c}u^{d}+\frac{1}{4}\left(
-3\mu_{\psi}-9p_{\psi}+3p_{K}+2\mu_{K}\right) \\
&  +2\left(  \mu+3p-2\Lambda-\mu-\Lambda\right)  \psi+2(p-\Lambda
-\mu-3p+2\Lambda)H_{cd}u^{c}u^{d}\\
&  =2(-3\mu_{\psi}+3p_{\psi}+\mu_{K})
\end{align*}%
\begin{align*}
&  3\mathring{p}-H_{cd}u^{c}q^{d}-\pi_{cd}H^{cd}\\
&  +2q_{c}L_{\xi}u^{c}+\mathring{\pi}_{cd}u^{c}u^{d}\\
&  +6(p-\Lambda)\psi-2(2p+\mu-\Lambda)H_{cd}u^{c}u^{d}\\
&  =2(-3\mu_{\psi}+3p_{\psi}+\mu_{K})-\frac{1}{4}\left(  -3\mu_{\psi}%
-9p_{\psi}+3p_{K}+2\mu_{K}\right)
\end{align*}%
\begin{align*}
&  3\mathring{p}-H_{cd}u^{c}q^{d}-\pi_{cd}H^{cd}+2q_{c}L_{\xi}u^{c}%
+\mathring{\pi}_{cd}u^{c}u^{d}\\
&  +6(p-\Lambda)\psi-2(2p+\mu-\Lambda)H_{cd}u^{c}u^{d}\\
&  =\frac{3}{4}\left(  7\mu_{\psi}+11p_{\psi}+2\mu_{K}-p_{K}\right)
\end{align*}%
\begin{align}
3\mathring{p}  &  =H_{cd}u^{c}q^{d}+\pi_{cd}H^{cd}-2q_{c}L_{\xi}%
u^{c}-\mathring{\pi}_{cd}u^{c}u^{d}\nonumber\\
&  -6(p-\Lambda)\psi+2(2p+\mu-\Lambda)H_{cd}u^{c}u^{d}\nonumber\\
&  +\frac{1}{4}\left(  -3\mu_{\psi}-9p_{\psi}+3p_{K}+2\mu_{K}\right)  .
\label{Key.43}%
\end{align}

We concentrate now on (\ref{Key.39}). We have
\begin{align*}
LHS  &  =-\left(  \mu+p\right)  \mathring{u}_{d}h_{e}^{d}-\mathring{q}%
_{d}h_{e}^{d}+\mathring{\pi}_{dc}u^{d}h_{e}^{c}-\frac{1}{2}\left(
\mu+3p-2\Lambda\right)  u_{c}\xi_{;d}^{c}h_{e}^{d}\\
&  -q_{c}\xi_{;d}^{c}h_{e}^{d}+\frac{1}{2}\left(  \mu-p+2\Lambda\right)
\dot{\xi}_{d}h_{e}^{d}+\left(  q_{d}h_{e}^{d}\right)  \left(  \dot{\xi}%
^{c}u_{c}\right)  -h_{e}^{d}\pi_{cd}\dot{\xi}^{c}%
\end{align*}
The term:%
\begin{align*}
&  -\frac{1}{2}\left(  \mu+3p-2\Lambda\right)  u_{c}\xi_{;d}^{c}h_{e}%
^{d}+\frac{1}{2}\left(  \mu-p+2\Lambda\right)  \dot{\xi}_{d}h_{e}^{d}\\
&  =-\frac{1}{2}\left(  \mu+3p-2\Lambda\right)  H_{cd}u^{c}h_{e}^{d}+\frac
{1}{2}\left(  \mu+3p-2\Lambda\right)  \dot{\xi}_{d}h_{e}^{d}+\frac{1}%
{2}\left(  \mu-p+2\Lambda\right)  \dot{\xi}_{d}h_{e}^{d}\\
&  =-\frac{1}{2}\left(  \mu+3p-2\Lambda\right)  H_{cd}u^{c}h_{e}^{d}%
+(\mu+p)\dot{\xi}_{d}h_{e}^{d}%
\end{align*}
hence:%
\begin{align*}
LHS  &  =-(\mu+p)h_{ed}L_{\xi}u^{d}-\left[  \mathring{q}_{d}h_{e}^{d}+q_{c}%
\xi_{;d}^{c}h_{e}^{d}\right] \\
&  +\mathring{\pi}_{dc}u^{d}h_{e}^{c}-h_{e}^{d}\pi_{cd}\dot{\xi}^{c}.
\end{align*}

The term:%
\begin{align*}
\mathring{q}_{d}h_{e}^{d}+q_{c}\xi_{;d}^{c}h_{e}^{d}  &  =\mathring{q}%
_{d}h_{e}^{d}+q_{c}(\psi\delta_{d}^{c}+H_{d}^{c}-\xi_{d}^{..;c})h_{e}^{d}\\
&  =(\mathring{q}_{d}-\xi_{d}^{..;c}q_{c})h_{e}^{d}+\psi q_{e}+H_{cd}%
q^{c}h_{e}^{d}\\
&  =h_{e}^{d}L_{\xi}q_{d}+\psi q_{e}+H_{cd}q^{c}h_{e}^{d}%
\end{align*}
Also the Lie derivative:%
\[
L_{\xi}\pi_{cd}=\pi_{cd;f}\xi^{f}+\pi_{fd}\xi_{;c}^{f}+\pi_{cf}\xi_{;d}%
^{f}=\mathring{\pi}_{cd}+\pi_{fd}\xi_{;c}^{f}+\pi_{cf}\xi_{;d}^{f}%
\]
hence:%
\[
u^{d}h_{e}^{c}L_{\xi}\pi_{cd}=\mathring{\pi}_{cd}u^{d}h_{e}^{c}+\pi_{cf}%
\dot{\xi}^{f}h_{e}^{c}%
\]

Therefore:%
\[
LHS=-(\mu+p)h_{ed}L_{\xi}u^{d}-h_{e}^{d}L_{\xi}q_{d}+u^{d}h_{e}^{c}L_{\xi}%
\pi_{cd}-\psi q_{e}-H_{cd}q^{c}h_{e}^{d}%
\]
and equation (\ref{Key.39}) becomes:%
\[
-(\mu+p)h_{ed}L_{\xi}u^{d}-h_{de}L_{\xi}q^{d}+u^{d}h_{e}^{c}L_{\xi}\pi
_{cd}-\psi q_{e}-H_{cd}q^{c}h_{e}^{d}=-(-2q_{\psi e}+q_{Ke})
\]
or:%
\begin{equation}
-h_{de}L_{\xi}q^{d}+u^{d}h_{e}^{c}L_{\xi}\pi_{cd}=(\mu+p)h_{ed}L_{\xi}%
u^{d}+\psi q_{e}+H_{cd}q^{c}h_{e}^{d}-(-2q_{\psi e}+q_{Ke}) \label{Key.39a}%
\end{equation}

We continue with equation (\ref{Key.41}).

We have:%
\begin{align*}
&  LHS =2\left[  h_{(a}^{e}h_{b)}^{f}-\frac{1}{3}h_{ab}h^{ef}\right] \\
&  \left[  \frac{1}{2}\left(  \mu-p\right)  ^{\circ}h_{ef}+2q_{(c}\mathring
{u}_{d)}h_{e}^{c}h_{f}^{d}+\mathring{\pi}_{cd}h_{e}^{c}h_{f}^{d}+\left(
\mu-p+2\Lambda\right)  (\psi g_{dr}+H_{dr})h_{e}^{d}h_{f}^{r}+2q_{r}u_{c}%
\xi_{;d}^{c}h_{e}^{d}h_{f}^{r}+2\pi_{cr}\xi_{;d}^{c}h_{e}^{d}h_{f}^{r}\right]
\end{align*}

The terms:%
\[
2\left[  h_{(a}^{e}h_{b)}^{f}-\frac{1}{3}h_{ab}h^{ef}\right]  \frac{1}%
{2}\left(  \mu-p\right)  ^{\circ}h_{ef}=0
\]%
\begin{align*}
&  \left[  h_{(a}^{e}h_{b)}^{f}-\frac{1}{3}h_{ab}h^{ef}\right]  \left(
\mu-p+2\Lambda\right)  \xi_{d;r}h_{e}^{d}h_{f}^{r}\\
&  =\left[  h_{(a}^{e}h_{b)}^{f}-\frac{1}{3}h_{ab}h^{ef}\right]  \left(
\mu-p+2\Lambda\right)  (\psi g_{dr}+H_{dr})h_{e}^{d}h_{f}^{r}\\
&  =\left(  \mu-p+2\Lambda\right)  \left[  h_{(a}^{e}h_{b)}^{f}-\frac{1}%
{3}h_{ab}h^{ef}\right]  H_{dr}h_{e}^{d}h_{f}^{r}\\
&  =\left(  \mu-p+2\Lambda\right)  \left[  h_{(a}^{e}h_{b)}^{f}-\frac{1}%
{3}h_{ab}h^{ef}\right]  H_{ef}%
\end{align*}
i.e. the traceless part of the projection of the traceless tensor $H_{ab}.$
The term:%
\begin{align*}
&  \left[  h_{(a}^{e}h_{b)}^{f}-\frac{1}{3}h_{ab}h^{ef}\right]  2q_{r}u^{c}%
\xi_{c;d}h_{e}^{d}h_{f}^{r}\\
&  =\left[  h_{(a}^{e}h_{b)}^{f}-\frac{1}{3}h_{ab}h^{ef}\right]  2q_{r}%
u^{c}(\psi g_{cd}+H_{cd})h_{e}^{d}h_{f}^{r}\\
&  =\left[  h_{(a}^{e}h_{b)}^{f}-\frac{1}{3}h_{ab}h^{ef}\right]  2q_{r}%
u^{c}H_{cd}h_{e}^{d}h_{f}^{r}\\
&  =\left[  h_{(a}^{e}h_{b)}^{f}-\frac{1}{3}h_{ab}h^{ef}\right]  2q_{f}%
u^{c}H_{cd}h_{e}^{d}\\
&  =2q_{(a}h_{b)}^{d}H_{cd}u^{c}-\frac{2}{3}\left(  H_{cd}u^{c}q^{d}\right)
h_{ab}.
\end{align*}
The term:%
\begin{align*}
&  \left[  h_{(a}^{e}h_{b)}^{f}-\frac{1}{3}h_{ab}h^{ef}\right]  2\pi_{cr}%
\xi_{;d}^{c}h_{e}^{d}h_{f}^{r}\\
&  =2\left[  h_{(a}^{e}h_{b)}^{f}-\frac{1}{3}h_{ab}h^{ef}\right]  \pi_{.r}%
^{c}(\psi g_{cd}+H_{cd})h_{e}^{d}h_{f}^{r}\\
&  =2\left[  h_{(a}^{e}h_{b)}^{f}-\frac{1}{3}h_{ab}h^{ef}\right]  (\psi
\pi_{ef}+\pi_{.f}^{c}H_{cd}h_{e}^{d})\\
&  =2\left[  h_{(a}^{e}h_{b)}^{f}-\frac{1}{3}h_{ab}h^{ef}\right]  \psi\pi
_{ef}+2\left[  h_{(a}^{e}h_{b)}^{f}-\frac{1}{3}h_{ab}h^{ef}\right]  \pi
_{.f}^{c}H_{cd}h_{e}^{d}\\
&  =2\left[  h_{(a}^{e}h_{b)}^{f}-\frac{1}{3}h_{ab}h^{ef}\right]  \pi_{.f}%
^{c}H_{cd}h_{e}^{d}\\
&  =2h_{(a}^{e}h_{b)}^{f}\pi_{.f}^{c}H_{cd}h_{e}^{d}-\frac{2}{3}h_{ab}%
(\pi^{cd}H_{cd})
\end{align*}
Therefore we have:%
\begin{align}
&  2q_{(c}\mathring{u}_{d)}h_{e}^{c}h_{f}^{d}+\mathring{\pi}_{cd}h_{e}%
^{c}h_{f}^{d}+\left(  \mu-p+2\Lambda\right)  \left[  h_{(a}^{e}h_{b)}%
^{f}-\frac{1}{3}h_{ab}h^{ef}\right]  H_{ef}\nonumber\\
&  +2q_{(a}h_{b)}^{d}H_{cd}u^{c}-\frac{2}{3}\left(  H_{cd}u^{c}q^{d}\right)
h_{ab}\nonumber\\
&  +2h_{(a}^{e}h_{b)}^{f}\pi_{.f}^{c}H_{cd}h_{e}^{d}-\frac{2}{3}h_{ab}%
(\pi^{cd}H_{cd})\nonumber\\
&  =\frac{1}{2}\left[  \pi_{\psi ab}+\pi_{Kab}\right]  \label{Key.41a}%
\end{align}

The gravitational field equations are the (\ref{Key.38a}),(\ref{Key.39a}%
),(\ref{Key.40a}),(\ref{Key.41a}). With these field equations we have
completed the scenario for the generic General Relativistic model and we pass
to the well known gravitational model Bianchi I\ model.

\section{Example: The Bianchi I\ model}

In the following we assume the following conventions:

Greek indices take the space values $1,2,3$ and Latin indices the space-time
values $0,1,2,3$. We define the sign of the curvature tensor from the identity
$A_{;bc}^{a}-A_{;cb}^{a}=R_{bcd}^{a}A^{d}$ or $A_{a;bc}-A_{a;cb}=R_{dabc}%
A^{d}.$ In terms of the connection coefficients $R_{bcd}^{a}=\Gamma_{db}%
^{a},_{c}-\Gamma_{cb}^{a},_{d}+\Gamma_{cf}^{a}\Gamma_{db}^{f}-\Gamma_{df}%
^{a}\Gamma_{cb}^{f}.$

\subsection{Geometric assumptions defining the model}

A diagonal Bianchi I space-time is a spatially homogeneous space-time which
admits an Abelian group of isometries $G_{3}$, acting on spacelike
hypersurfaces, generated by the spacelike KVs $\mathbf{\xi}_{1}=\partial
_{x},\mathbf{\xi}_{2}=\partial_{y},\mathbf{\xi}_{3}=\partial_{z}$ and a
timelike gradient KV the $u^{a}=\frac{\partial}{\partial t}$ which is normal
to the homogeneous 3-d hypersurfaces.

In synchronous coordinates $\{t,x,y,z\}$ the above assumptions imply that the
metric of this spacetime is:%

\begin{equation}
ds^{2}=-dt^{2}+A_{\mu}^{2}(t)(dx^{\mu})^{2} \label{Key.45}%
\end{equation}
where the metric functions $A_{1}(t),A_{2}(t),A_{3}(t)$ are functions of the
time coordinate only. When two of the functions $A_{\mu}(t)$ are equal (e.g.
$A_{2}=A_{3}$) the Bianchi I space-time reduces to the important class of
plane symmetric space-times (a special class of the Locally Rotational
Symmetric space-times \cite{Ellis1},\cite{Stewart-Ellis} which admit a $G_{4}$
group of isometries acting multiply transitively on the spacelike
hypersurfaces of homogeneity generated by the vectors $\mathbf{\xi}%
_{1},\mathbf{\xi}_{2},\mathbf{\xi}_{3}$ and $\mathbf{\xi}_{4}=x^{2}%
\partial_{3}-x^{3}\partial_{2})$.

For economy of writing in the following we write $A_{\mu}$ instead of $A_{\mu
}^{2}(t).$ Furthermore we shall be interested only in \emph{proper diagonal}
Bianchi I space-times (which in the following shall be referred simply as
Bianchi I\ space-times), hence all metric functions are assumed to be
different and the dimension of the group of isometries acting on the spacelike
hypersurfaces is three.

The implications of the additional geometric assumptions (i.e. the symmetries we assumed) are:

1. The computation of the Ricci tensor%

\begin{equation}
R_{tt}=-{\frac{\ddot{A}_{1}A_{2}A_{3}+\ddot{A}_{2}A_{1}A_{3}+\ddot{A}_{3}%
A_{1}A_{2}}{A_{1}A_{2}A_{3}}} \label{Key.46}%
\end{equation}

\begin{equation}
R_{xx}={\frac{A_{1}\left(  \ddot{A}_{1}A_{2}A_{3}+\dot{A}_{1}\dot{A}_{2}%
A_{3}+\dot{A}_{1}\dot{A}_{3}A_{2}\right)  }{A_{2}A_{3}}} \label{Key.47}%
\end{equation}

\begin{equation}
R_{yy}={\frac{A_{2}\left(  \ddot{A}_{2}A_{1}A_{3}+\dot{A}_{1}\dot{A}_{2}%
A_{3}+\dot{A}_{2}\dot{A}_{3}A_{1}\right)  }{A_{1}A_{3}}} \label{Key.48}%
\end{equation}

\begin{equation}
R_{zz}={\frac{A_{3}\left(  \ddot{A}_{3}A_{1}A_{2}+\dot{A}_{1}\dot{A}_{3}%
A_{2}+\dot{A}_{2}\dot{A}_{3}A_{1}\right)  }{A_{1}A_{2}}} \label{Key.49}%
\end{equation}

From the Ricci tensor we compute the Einstein tensor and then use Einstein
field equations to write the energy momentum tensor in terms of the metric
functions $A_{1}(t),A_{2}(t),A_{3}(t)$ and their derivatives. However that
does not mean that we are able to discuss anything about the physical
variables (energy density, isotropic pressure etc.) because in order to do
that we need to have observers. The expression of $T_{ab}$ we find for
$T_{ab}$ is the matter content of this spacetime, the same for \textbf{all}
observers in the Bianchi I\ background.

2. The computation of the Weyl tensor

The Weyl tensor defined in (\ref{CPE.22b}) is important because it is involved
in the second Bianchi identity. More specifically from this tensor one
computes the electric and the magnetic parts $E^{ab},H^{ab}$ which enter into
the 1+3 decomposition of the second Bianchi identity given by equations
(\ref{BI.05}) - (\ref{BI.08}). We compute:%

\begin{equation}
E_{xx}=\ {\frac{A_{1}\left(  -2\,\ddot{A}_{1}A_{2}A_{3}+\dot{A}_{1}\dot{A}%
_{2}A_{3}+\dot{A}_{1}\dot{A}_{3}A_{2}+\ddot{A}_{2}A_{1}A_{3}+\ddot{A}_{3}%
A_{1}A_{2}-2\,\dot{A}_{2}\dot{A}_{3}A_{1}\right)  }{6A_{2}A_{3}}}
\label{Key.49a}%
\end{equation}

\begin{equation}
E_{yy}={\frac{A_{2}\left(  -2\,\ddot{A}_{2}A_{1}A_{3}+\dot{A}_{1}\dot{A}%
_{2}A_{3}+\dot{A}_{2}\dot{A}_{3}A_{1}+\ddot{A}_{1}A_{2}A_{3}+\ddot{A}_{3}%
A_{1}A_{2}-2\,\dot{A}_{1}\dot{A}_{3}A_{2}\right)  }{6A_{1}A_{3}}}
\label{Key.49b}%
\end{equation}

\begin{equation}
E_{zz}=-\,{\frac{A_{3}\left(  2\,\ddot{A}_{3}A_{1}A_{2}-\dot{A}_{1}\dot{A}%
_{3}A_{2}-\dot{A}_{2}\dot{A}_{3}A_{1}-\ddot{A}_{1}A_{2}A_{3}-\ddot{A}_{2}%
A_{1}A_{3}+2\,\dot{A}_{1}\dot{A}_{2}A_{3}\right)  }{6A_{1}A_{2}}}
\label{Key.49c}%
\end{equation}
and the magnetic part \thinspace$H_{ab}=0.$

\textbf{Observers.}

The choice of observers is open and independent of the choice of the assumed
symmetries (i.e. the model) spacetime. However as it has been noted the
kinematic quantities they define must satisfy the propagation, the constraint
equations and the Bianchi second identity.

We choose the observers (this is one choice, any other would do provided it
satisfies the aforementioned identities) to be the ones defined by the time
coordinate $t$ i.e. we take $u^{a}=\delta_{0}^{a}$ in the synchronous
coordinate system.

The implications of this choice of observers are:

1. Kinematics

We 1+3 decompose $u_{a;b}$ and find the kinematic variables:%
\begin{align}
\theta &  =\left[  \ln(A_{1}A_{2}A_{3})\right]  ^{\cdot}\label{Key.50}\\
\omega_{ab}  &  =0,\dot{u}^{a}=0\label{Key.51}\\
\sigma_{ab}  &  =\frac{1}{3}diag\left(  0,A_{1}^{2}\left[  \ln\left(
\frac{A_{1}^{2}}{A_{2}A_{3}}\right)  \right]  ^{\cdot},A_{2}^{2}\left[
\ln\left(  \frac{A_{2}^{2}}{A_{3}A_{1}}\right)  \right]  ^{\cdot},A_{3}%
^{2}\left[  \ln\left(  \frac{A_{3}^{2}}{A_{1}A_{2}}\right)  \right]  ^{\cdot
}\right)  \label{Key.52}%
\end{align}

a. The propagation equations give\footnote{Note that $\sigma^{2}=\frac{1}%
{2}\sigma_{ab}\sigma^{ab}.$ For the case we are considering we calculate:%
\[
\sigma^{2}=\frac{1}{3}\left[
%TCIMACRO{\dsum \limits_{I=1}^{3}}%
%BeginExpansion
{\displaystyle\sum\limits_{I=1}^{3}}
%EndExpansion
\left(  \frac{\dot{A}_{I}}{A_{I}}\right)  ^{2}-%
%TCIMACRO{\dsum \limits_{I\neq J=1}^{3}}%
%BeginExpansion
{\displaystyle\sum\limits_{I\neq J=1}^{3}}
%EndExpansion
\frac{\dot{A}_{I}}{A_{I}}\frac{\dot{A}_{J}}{A_{J}}\right]
\]
}:%
\begin{equation}
\dot{\theta}+\frac{1}{3}\theta^{2}+2\sigma^{2}=\frac{\ddot{A}_{1}A_{2}%
A_{3}+\ddot{A}_{2}A_{1}A_{3}+\ddot{A}_{3}A_{1}A_{2}}{A_{1}A_{2}A_{3}}
\label{Key.54}%
\end{equation}%
\begin{equation}
-E_{st}=h_{s}^{a}h_{t}^{b}\dot{\sigma}_{ab}+\sigma_{sc}\sigma_{\text{
\thinspace}t}^{c}+\frac{2}{3}\sigma_{st}\theta-\frac{2}{3}2\sigma^{2}h_{st}
\label{Key.55}%
\end{equation}

b. The constraint equations give:%
\begin{equation}
\frac{2}{3}h_{s}^{c}\theta,_{c}=h_{s}^{c}h^{ab}\sigma_{ac;b}\text{(three
equations)} \label{Key.56}%
\end{equation}%
\begin{equation}
0=-h_{(s}^{a}h_{t)}^{b}\sigma_{b}^{\text{ \ }c;d}\eta_{arcd}u^{r}\text{(five
equations)} \label{Key.57}%
\end{equation}

Equations (\ref{Key.54}) - (\ref{Key.57}) must be satisfied identically by the
kinematic quantities. It is easy to show that this is true for equations
(\ref{Key.54}) and (\ref{Key.55}). Equation (\ref{Key.56}) is trivially
satisfied because $\theta,\sigma_{ab}$ are functions of $t$ only. Equation
(\ref{Key.57}) is also trivially satisfied because it contains derivatives of
the components of $\sigma_{ab}$ along the space coordinates only (due to the
term $\eta_{arcd}u^{r})$. We conclude that the propagation and the constraint
equations do not give any new conditions on the metric functions
$A_{1}(t),A_{2}(t),A_{3}(t).$

c. The Bianchi identities give the derivatives of $E_{st},H_{st}$ therefore
they do not add new constraints on the metric functions. They are only
compatibility conditions.

d. The propagation of the kinematic quantities along the symmetry vectors.

The fact that $\xi_{\mu}^{a}$ ($\mu=1,2,3)$ are Killing vectors (hence
$\psi=0,H_{ab}=0)$ provides:%
\begin{equation}
V^{a}(u)=\hat{V}_{a}(u)=0 \label{Key.58}%
\end{equation}
From (\ref{LDHX.9d}) and (\ref{LDHX.12}) we have taking into account the above
results:%
\[
L_{\xi}\theta=0\Rightarrow\theta=\theta(t)
\]

\begin{equation}
L_{\xi}\sigma_{ab}=0\Rightarrow\sigma_{ab}=\sigma_{ab}(t).\nonumber
\end{equation}

These equations give nothing new because we have already computed
$\theta,\sigma_{ab}$ and have found that they are functions of $t$ only.

\textbf{Dynamics}

a. We compute the physical parameters for the chosen observers. We find:%

\begin{equation}
\mu={\frac{\dot{A}_{1}\dot{A}_{2}A_{3}+\dot{A}_{1}\dot{A}_{3}A_{2}+\dot{A}%
_{2}\dot{A}_{3}A_{1}}{A_{1}A_{2}A_{3}}=\frac{\dot{A}_{1}\dot{A}_{2}}%
{A_{1}A_{2}}+}\frac{\dot{A}_{1}\dot{A}_{3}}{A_{1}A_{3}}+\frac{\dot{A}_{2}%
\dot{A}_{3}}{A_{2}A_{3}} \label{Key.59}%
\end{equation}

\begin{equation}
p=-{\frac{2\,\ddot{A}_{2}A_{1}A_{3}+2\,\ddot{A}_{3}A_{1}A_{2}+\dot{A}_{2}%
\dot{A}_{3}A_{1}+2\,\ddot{A}_{1}A_{2}A_{3}+\dot{A}_{1}\dot{A}_{3}A_{2}+\dot
{A}_{1}\dot{A}_{2}A_{3}}{3A_{1}A_{2}A_{3}}} \label{Key.60}%
\end{equation}

\begin{equation}
\pi_{xx}=-{\frac{A_{1}\left(  -2\,\ddot{A}_{1}A_{2}A_{3}+\ddot{A}_{2}%
A_{1}A_{3}+\ddot{A}_{3}A_{1}A_{2}+2\,\dot{A}_{2}\dot{A}_{3}A_{1}-\dot{A}%
_{3}\dot{A}_{1}A_{2}-\dot{A}_{1}\dot{A}_{2}A_{3}\right)  }{3A_{2}A_{3}}}
\label{Key.61}%
\end{equation}

\begin{equation}
\pi_{yy}=-{\frac{A_{2}\left(  -2\,\ddot{A}_{2}A_{1}A_{3}+\ddot{A}_{3}%
A_{1}A_{2}+\ddot{A}_{1}A_{2}A_{3}+2\,\dot{A}_{1}\dot{A}_{3}A_{2}-\dot{A}%
_{2}\dot{A}_{3}A_{1}-\dot{A}_{1}\dot{A}_{2}A_{3}\right)  }{3A_{1}A_{3}}}
\label{Key.62}%
\end{equation}

\begin{equation}
\pi_{zz}=-{\frac{A_{3}\left(  -2\,\ddot{A}_{3}A_{1}A_{2}+\ddot{A}_{2}%
A_{1}A_{3}+\ddot{A}_{1}A_{2}A_{3}+2\,\dot{A}_{1}\dot{A}_{2}A_{3}-\dot{A}%
_{2}\dot{A}_{3}A_{1}-\dot{A}_{1}\dot{A}_{3}A_{2}\right)  }{3A_{1}A_{2}}}
\label{Key.63}%
\end{equation}

The momentum transfer vector $q^{a}=0$\footnote{This is expected from the
symmetries of the metric and the non-degeneracy of the $T_{ab.}$ Note that
both $R_{ab}$ and $T_{ab}$ are of the same form as the metric. This is to be
expected because they can be considered as metrics (a metric is a symmetric
tensor of type (0,2)\ and nothing more or less) and they admit the same KVs
with the metric.}.

We note that the second equation can be written:%
\begin{equation}
\frac{\ddot{A}_{1}}{A_{1}}+\frac{\ddot{A}_{2}}{A_{2}}+\frac{\ddot{A}_{3}%
}{A_{3}}=-\frac{1}{2}(\mu+3p). \label{Key.63a}%
\end{equation}
We introduce the notation:%
\begin{align*}
I_{2}  &  =\frac{\ddot{A}_{1}}{A_{1}}+\frac{\ddot{A}_{2}}{A_{2}}+\frac
{\ddot{A}_{3}}{A_{3}}=-\frac{1}{2}(\mu+3p)\\
I_{1}  &  ={\frac{\dot{A}_{1}\dot{A}_{2}}{A_{1}A_{2}}+}\frac{\dot{A}_{1}%
\dot{A}_{3}}{A_{1}A_{3}}+\frac{\dot{A}_{2}\dot{A}_{3}}{A_{2}A_{3}}=\mu.
\end{align*}
and we have:%
\begin{align}
\mu &  =I_{1}\label{Key.85}\\
p  &  =-\frac{1}{3}\left(  2I_{2}+I_{1}\right) \label{Key.86}\\
\pi_{xx}  &  =\frac{A_{1}^{2}}{3}\left(  3\frac{\ddot{A}_{1}}{A_{1}}%
-3\frac{\dot{A}_{2}\dot{A}_{3}}{A_{2}A_{3}}-I_{2}+I_{1}\right)  \label{Key.87}%
\\
\pi_{yy}  &  =\frac{A_{2}^{2}}{3}\left(  3\frac{\ddot{A}_{2}}{A_{2}}%
-3\frac{\dot{A}_{1}\dot{A}_{3}}{A_{1}A_{3}}-I_{2}+I_{1}\right)  \label{Key.88}%
\\
\pi_{zz}  &  =\frac{A_{3}^{2}}{3}\left(  3\frac{\ddot{A}_{3}}{A_{3}}%
-3\frac{\dot{A}_{1}\dot{A}_{2}}{A_{1}A_{2}}-I_{2}+I_{1}\right)  \label{Key.89}%
\end{align}
b. Conservation equations.

From (\ref{TE.6}) and (\ref{TE.7}) we have for Bianchi I\ spacetime and the
observers we selected the conservation equations:
\begin{equation}
\dot{\mu}+(\mu+p)\theta+\pi^{ab}\sigma_{ab}=0 \label{Key.64}%
\end{equation}%
\begin{equation}
h_{a}^{c}(p_{,c}+\pi_{\text{ }c\text{ };d}^{d})=0. \label{Key.65}%
\end{equation}
Equation (\ref{Key.65}) is trivially satisfied because all quantities are
functions of $t.$ Equation (\ref{Key.64}) is also trivially satisfied if we
replace the expressions of $\mu,p,\theta,\pi_{ab}$ from the corresponding expressions.

We see that there are no field equations to solve! Indeed we have solved them
in terms of three arbitrary functions which are the metric functions
$A_{1}(t),A_{2}(t),A_{3}(t)!$ Therefore for the observers $u^{a}=\delta
_{0}^{a}$ we have solved the problem completely\footnote{For another class of
observers we could have more constraint equations which would have to be
solved. Obviously in this case all dynamic and kinematic variables will (in
general)\ be different.}.

We are free to select special solutions from the three parameter family of
solutions we have found by imposing extra additional requirements. In the
following section we make \emph{one such} requirement and consider those
Bianchi I spacetime which \underline{\emph{for the observers we have chosen}}
give rise to a special type of mater which we call string fluid. Needless to
say that one could consider other requirements and select other types of
matter for the \emph{same spacetime} and the \emph{same observers}. Every
specification / condition on the metric functions $A_{1}(t),A_{2}(t),A_{3}(t)$
will produce a model Bianchi I\ spacetime (physical or not).

\section{The string fluid}

The connection between strings and vortices is well known \cite{Ray (1978)},
\cite{Lund F Regge T (1976)}, \cite{Letelier 1980}, \cite{Letelier 1981}. In
particular a geometric or Nambu string is a two-dimensional timelike surface
in spacetime. Letelier \cite{Letelier 1983} has considered a fluid represented
by a combination of geometric strings with particles attached to them so that
both have the same four velocity. He called such a fluid a string fluid and he
studied the gravitational field it produces in given spacetime backgrounds. In
a series of papers various authors \cite{Yavuz Yilmaz (1997)}, \cite{Yilmaz
2001}, \cite{Baysal Camci et all 2002}, \cite{Baysal Yilmaz 2002}, \cite{U
Camci 2002}, \cite{Sharif M Sheikh U 2005} \cite{M. Tsamparlis 2006} have
considered various types of collineations for a string fluid and derived the
conditions which must be satisfied in order the string fluid to admit a given collineation.

In this work we consider the string fluid with particles attached to the
strings which for some observers $u^{a}$ is described by the energy momentum
tensor\footnote{This expression is found form equation (2.28a) of
\cite{Letelier 1980} if we set $\rho=-\sigma$and $\pi\rightarrow-\pi.$}
\cite{Letelier 1980},\cite{Letelier 1981}:
\begin{equation}
T_{ab}=\rho(u_{a}u_{b}-n_{a}n_{b})+qp_{ab} \label{Key.66}%
\end{equation}
where:

- $\rho=\rho_{p}+\rho_{s}$ is the sum of the mass density of the strings
$(\rho_{s})$ and the mass density of the particles $(\rho_{p}),$

- $u^{a}$ is the common four velocity $(u^{a}u_{a}=-1)$ of the string and the
attached particle

- $n^{a}$ is a unit spacelike vector ($n^{a}n_{a}=1)$ normal to $u^{a}$
$(u^{a}n_{a}=0),$ which specifies the direction of \ the string (and the
direction of anisotropy of the string fluid)

- $q$ is a parameter contributing to the dynamic and kinematic properties of
the string

- $p_{ab}=h_{ab}-n_{a}n_{b}$ is the screen projection operator defined by the
vectors \ $u^{a},n^{a}$

By rewriting the energy momentum tensor as:
\begin{equation}
T_{ab}=\rho u_{a}u_{b}+\frac{1}{3}(2q-\rho)h_{ab}+(q+\rho)(\frac{1}{3}%
h_{ab}-n_{a}n_{b}) \label{Key.67}%
\end{equation}
(or otherwise) we compute its 1+3 decomposition. It follows that for a string
fluid:%
\begin{equation}
\mu=\rho,\;\;p=\frac{1}{3}(2q-\rho),\;\;q^{a}=0,\;\;\pi_{ab}=(q+\rho)(\frac
{1}{3}h_{ab}-n_{a}n_{b}). \label{Key.68}%
\end{equation}

We conclude that a string fluid is an anisotropic fluid with vanishing heat
flux. Furthermore we note that $n^{a}$ is an eigenvector of the anisotropic
stress tensor $\pi_{ab}$ with eigenvalue $-\frac{2}{3}(q+\rho).$ We assume
$q+\rho\neq0$ otherwise the string fluid reduces to a perfect fluid with
energy momentum tensor $T_{ab}=qg_{ab}$. This fluid has the unphysical
equation of state $\mu+p=0$.

From the above we note that the structure of the energy momentum tensor is
compatible with the general expressions (\ref{Key.59}) - (\ref{Key.63})
therefore the model of a string fluid we considered it is possible to be
described in a Bianchi I spacetime background with the comoving observers
$u^{a}=\delta_{0}^{a}$ we considered in the previous section. This is
equivalent to say that the additional symmetry assumptions we did is compatible
in the Bianchi I\ model spacetime  with the form of the energy momentum
tensor (\ref{Key.66}). This does not mean that there are not different sets of
additional assumptions which are compatible with matter of the form
(\ref{Key.66}). Bianchi I\ is just one.

To find the specific metric functions $A_{1}(t),A_{2}(t),A_{3}(t)$ which
correspond (or select) the string fluid model we equate the dynamical
parameters of (\ref{Key.59}) - (\ref{Key.63}) with those of (\ref{Key.68}). We
assume the string direction to be along the $x-axis$ i.e. we take ($A_{1}%
\neq0)$:%
\begin{equation}
n^{a}=\frac{1}{\sqrt{A_{1}}}\delta_{1}^{a}. \label{Key.69}%
\end{equation}
From the mass density and the pressure we find:%
\begin{equation}
\mu+3p=2q\Rightarrow I_{2}=-q \label{Key.71a}%
\end{equation}%
\[
\pi_{ab}=(q+\rho)(\frac{1}{3}h_{ab}-A_{1}^{2}\delta_{a}^{1}\delta_{b}%
^{1})=(\rho-I_{2})(\frac{1}{3}h_{ab}-A_{1}^{2}\delta_{a}^{1}\delta_{b}^{1}).
\]
In the coordinates we use:%
\begin{equation}
h_{ab}=diag(0,A_{1}^{2},A_{2}^{2}.A_{3}^{2}) \label{Key.70}%
\end{equation}
replacing in the expression of $\pi_{ab}$ we find:%
\begin{equation}
\pi_{ab}=\frac{1}{3}(\rho-I_{2})diag(0,-2A_{1}^{2},A_{2}^{2},A_{3}^{2}).
\label{Key.80}%
\end{equation}
Equating the two expressions of $\pi_{ab}$ (\ref{Key.87})- (\ref{Key.89}) and
(\ref{Key.80}) we find the field equations:%

\begin{align*}
\frac{\ddot{A}_{1}}{A_{1}}-\frac{\dot{A}_{2}\dot{A}_{3}}{A_{2}A_{3}}  &
=-\rho+I_{2}=-(\rho+q)\\
\frac{\ddot{A}_{2}}{A_{2}}-\frac{\dot{A}_{1}\dot{A}_{3}}{A_{1}A_{3}}  &  =0\\
\frac{\ddot{A}_{3}}{A_{3}}-\frac{\dot{A}_{1}\dot{A}_{2}}{A_{1}A_{2}}  &  =0
\end{align*}

The last three equations are dependent (one follows form the other two).
Eventually we have the following system of four simultaneous equations for the
five unknowns $A_{1}(t),A_{2}(t),A_{3}(t)$,$q,\rho:$%

\begin{equation}
{\frac{\dot{A}_{1}\dot{A}_{2}}{A_{1}A_{2}}+}\frac{\dot{A}_{1}\dot{A}_{3}%
}{A_{1}A_{3}}+\frac{\dot{A}_{2}\dot{A}_{3}}{A_{2}A_{3}}=\rho\label{Key.72}%
\end{equation}%
\begin{equation}
\frac{\ddot{A}_{1}}{A_{1}}+\frac{\ddot{A}_{2}}{A_{2}}+\frac{\ddot{A}_{3}%
}{A_{3}}=-q \label{Key.73}%
\end{equation}%
\begin{equation}
\frac{\ddot{A}_{2}}{A_{2}}-\frac{\dot{A}_{1}\dot{A}_{3}}{A_{1}A_{3}}=0
\label{Key.77}%
\end{equation}%
\begin{equation}
\frac{\ddot{A}_{3}}{A_{3}}-\frac{\dot{A}_{1}\dot{A}_{2}}{A_{1}A_{2}}=0.
\label{Key.78}%
\end{equation}
We have still the freedom to specify one more condition. This
condition could be an equation of state (in the board sense).

Note that the kinematic variables are not effected. Therefore when we take the
equation of state and we determine the functions $A_{1}(t),A_{2}(t),A_{3}(t)$
then we can compute the kinematic variables $\theta,\sigma_{ab}$ and draw
conclusions on the kinematics of the string fluid.

\subsection{An alternative energy momentum tensor}

Instead of the energy momentum tensor (\ref{Key.66}) Lattelier \cite{Letelier
1983} considered the energy momentum tensor:%
\begin{equation}
T_{ab}=\rho u_{a}u_{b}-\lambda n_{a}n_{b}. \label{Key.79}%
\end{equation}

This tensor reduces to the one we have considered if we set $q=0$ and
$\rho=\lambda.$ In general it is different to the one we have discussed above.
However our analysis applies the same. Considering the same observers and the
same direction $n^{a}$ we have the following 1+3 decomposition of $T_{ab}:$%
\begin{equation}
T_{ab}=\rho u_{a}u_{b}-\frac{1}{3}\lambda h_{ab}+\lambda(\frac{1}{3}%
h_{ab}-n_{a}n_{b}) \label{Key.90}%
\end{equation}
from which follows:%
\begin{align}
\mu &  =\rho,p=-\frac{1}{3}\lambda,q_{a}=0\label{Key.91}\\
\pi_{ab}  &  =\lambda(\frac{1}{3}h_{ab}-n_{a}n_{b})=\frac{\lambda}%
{3}(0,-2A_{1}^{2},A_{2}^{2},A_{3}^{2}). \label{Key.92}%
\end{align}
The:%
\[
I_{2}=-\frac{1}{2}(\mu+3p)=-\frac{1}{2}(\rho-\lambda)
\]
Equating the two expressions of $\pi_{ab}$ (\ref{Key.87})- (\ref{Key.89}) and
(\ref{Key.92}) we find the field equations:%

\begin{align*}
\frac{\ddot{A}_{1}}{A_{1}}-\frac{\dot{A}_{2}\dot{A}_{3}}{A_{2}A_{3}}  &
=-\lambda+I_{2}=-\frac{1}{6}(5\lambda+3\rho)\\
\frac{\ddot{A}_{2}}{A_{2}}-\frac{\dot{A}_{1}\dot{A}_{3}}{A_{1}A_{3}}  &
=I_{2}=\frac{1}{2}(\frac{\lambda}{3}-\rho)\\
\frac{\ddot{A}_{3}}{A_{3}}-\frac{\dot{A}_{1}\dot{A}_{2}}{A_{1}A_{2}}  &
=\frac{\ddot{A}_{2}}{A_{2}}-\frac{\dot{A}_{1}\dot{A}_{3}}{A_{1}A_{3}}.
\end{align*}

The last three equations are dependent (one follows form the other two).
Eventually we have again the following system of four simultaneous equations
for the five unknowns $A_{1}(t),A_{2}(t),A_{3}(t)$,$\lambda(t),\rho(t):$%

\begin{equation}
{\frac{\dot{A}_{1}\dot{A}_{2}}{A_{1}A_{2}}+}\frac{\dot{A}_{1}\dot{A}_{3}%
}{A_{1}A_{3}}+\frac{\dot{A}_{2}\dot{A}_{3}}{A_{2}A_{3}}=\rho\label{Key.93}%
\end{equation}%
\begin{equation}
\frac{\ddot{A}_{1}}{A_{1}}+\frac{\ddot{A}_{2}}{A_{2}}+\frac{\ddot{A}_{3}%
}{A_{3}}=\frac{1}{2}(\lambda-\rho) \label{Key.94}%
\end{equation}%
\begin{equation}
\frac{\ddot{A}_{2}}{A_{2}}-\frac{\dot{A}_{1}\dot{A}_{3}}{A_{1}A_{3}}=\frac
{1}{2}(\lambda-\rho) \label{Key.95}%
\end{equation}%
\begin{equation}
\frac{\ddot{A}_{3}}{A_{3}}-\frac{\dot{A}_{1}\dot{A}_{2}}{A_{1}A_{2}}=\frac
{1}{2}(\lambda-\rho) \label{Key.96}%
\end{equation}
We note that in this choice one has again the freedom to consider an extra
condition / equation of state. The solution of the field equations will be the
starting point of making Physics with this model.

\section{The 1+1+2 decomposition wrt a double congruence}

It is possible in a gravitational model one has in addition to the four velocity $u^{a}$
an additional  non-null vector field $n^{a}$ which is not parallel to $u^{a},$ as it is the
case for example with the string fluid. The existence of two characteristic
vector fields in spacetime introduces the concept of the double congruence
which leads to the finer 1+1+2 decomposition of tensor fields in spacetime.
This decomposition introduces new "kinematic" variables for the vector field
$n^{a}$ and new dynamical fields from the 1+1+2 decomposition of the energy
momentum tensor $T_{ab}.$

Consider two unit vector fields $u^{i}$ and $n^{i}$, with signatures
$u^{i}u_{i}=\varepsilon\left(  u\right)  $ , $n^{i}n_{i}=\varepsilon\left(
n\right)  $ so that $u^{i}n_{i}=\phi$ where $\phi\neq\pm1$\ (i.e. $u^{i}$ and
$n^{i}$ \ are not parallel) and define the \textbf{screen projection tensor}
$P_{ij}\left(  u,n\right)  $ by the formula:
\begin{equation}
P_{ij}\left(  u,n\right)  =g_{ij}+\frac{1}{\Delta}[\varepsilon(n)u_{i}%
u_{j}+\varepsilon\left(  u\right)  n_{i}n_{j}-\phi(u_{i}n_{j}+n_{i}u_{j})],
\label{d.two.1}%
\end{equation}
where $\Delta=\phi^{2}-\varepsilon\left(  u\right)  \varepsilon\left(
n\right)  .$

It is easy to show the properties:
\begin{align*}
P_{ij}  &  =P_{ji},P_{ij}P_{k}^{j}=P_{ik\,},,\,P_{i}^{i}=2\text{ \ }\\
&  \text{(}P_{ij}\text{ is a symmetric metric in the 2-space spanned by
(}u^{i},n^{i}))\\
P_{ij}u^{i}  &  =P_{ij}n^{i}=0\,\ \\
&  \text{(}P_{ij}\text{ projects normal to both (}u^{i},n^{i})\text{ that is
the \textquotedblleft screen space\textquotedblright%
\ spanned\footnote{Necessary an sufficient condition for the vector fields
$(u^{i},n^{i})$ to span a 2-dimensional space is that $L_{u}n^{i}%
=Au^{i}+Bn^{i}$ \ where $A,B$ \ are scalars. } by (}u^{i},n^{i}).
\end{align*}

When the vector fields $u^{i},n^{i}$ are timelike ($\varepsilon\left(
u\right)  =-1)$ and spacelike ($\varepsilon\left(  n\right)  =+1)$
respectively, the above formula becomes:%
\begin{equation}
P_{ij}\left(  u,n\right)  =g_{ij}-\frac{1}{1+\phi^{2}}[-u_{i}u_{j}+n_{i}%
n_{j}+\phi(u_{i}n_{j}+n_{i}u_{j})].\label{d.two.1a}%
\end{equation}
Furthermore, when $u^{i},n^{i}$ are perpendicular, then $\phi=0,$ \ and the
formula for $P_{ij}\left(  u,n\right)  $ reduces further to%
\begin{equation}
P_{ij}\left(  u,n\right)  =g_{ij}+u_{i}u_{j}-n_{i}n_{j}=h_{ij}-n_{i}%
n_{j}\label{d.two.1aa}%
\end{equation}
where $h_{ij}=g_{ij}+u_{i}u_{j}~$is the projection tensor for $u^{i}.$ The two
unit vector fields $\{u^{i},n^{i}\}$ define a double congruence which with the
use of the projection tensor $P_{ij}\left(  u,n\right)  $ define the 1+1+2
decomposition of geometric objects in spacetime. We start with the following

\begin{proposition}
A vector field (null or not) $R^{i}$ is 1+1+2 decomposed wrt the double
congruence $\{u^{i},n^{i}\}$ by means of the following identity:%
\begin{equation}
R^{i}=-\frac{1}{\Delta}\left[  (\varepsilon\left(  n\right)  R_{u}-\phi
R_{n})u^{i}+(\varepsilon\left(  u\right)  R_{n}-\phi R_{u})n^{i}\right]
+P_{j}^{i}R^{j}\label{d.two.3}%
\end{equation}
where $R_{u}=R^{i}u_{i}$ and $R_{n}=R^{i}n_{i}.$

\begin{proof}
Assume that:%
\[
R^{i}=Au^{i}+Bn^{i}+P_{j}^{i}R^{j}.%
\]
Contract with $u^{i},n^{i}$ \ to get the system of equations:%
\begin{align*}
\varepsilon\left(  u\right)  A+\phi B  &  =R_{u}\\
\phi A+\varepsilon\left(  n\right)  B  &  =R_{n}%
\end{align*}
where $R_{u}=R^{i}u_{i}$ and $R_{n}=R^{i}n_{i}.$ The determinant of the system
is \ $\varepsilon\left(  u\right)  \varepsilon\left(  n\right)  -\phi
^{2}=-\Delta.$ The solution is:%
\begin{align*}
A  &  =-\frac{1}{\Delta}(\varepsilon\left(  n\right)  R_{u}-\phi R_{n})\\
B  &  =-\frac{1}{\Delta}(\varepsilon\left(  u\right)  R_{n}-\phi R_{u}).
\end{align*}

\end{proof}
\end{proposition}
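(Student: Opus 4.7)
The plan is to exploit the fact that spacetime decomposes as an orthogonal direct sum of the 2-plane spanned by $\{u^i,n^i\}$ and the screen 2-space, so that any vector admits a unique decomposition into a piece in each factor. The screen piece is automatically $P^i_j R^j$ by the listed properties of the screen projector (symmetric, idempotent, and $P_{ij}u^i = P_{ij}n^i = 0$). So the entire task reduces to identifying the coefficients of $u^i$ and $n^i$ in the remaining in-plane piece.

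Concretely, I would make the ansatz
\begin{equation*}
R^{i}=Au^{i}+Bn^{i}+P^{i}_{\,j}R^{j},
\end{equation*}
which is justified by the decomposition of the identity tensor implicit in the definition (\ref{d.two.1}), namely
\begin{equation*}
g^{i}_{\,j}=-\frac{1}{\Delta}\bigl[\varepsilon(n)u^{i}u_{j}+\varepsilon(u)n^{i}n_{j}-\phi(u^{i}n_{j}+n^{i}u_{j})\bigr]+P^{i}_{\,j},
\end{equation*}
which is easy to verify from (\ref{d.two.1}) by rearrangement. Contracting the ansatz with $u_i$ and with $n_i$ and using $u^iu_i=\varepsilon(u)$, $n^in_i=\varepsilon(n)$, $u^in_i=\phi$, together with the screen-orthogonality $P^i_{\,j}u_i=P^i_{\,j}n_i=0$, produces the linear system
\begin{align*}
\varepsilon(u)A+\phi B &=R_u,\\
\phi A+\varepsilon(n)B &=R_n.
\end{align*}

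The determinant is $\varepsilon(u)\varepsilon(n)-\phi^{2}=-\Delta$, which is nonzero precisely because of the assumption $\phi\neq\pm1$ that keeps $u^i$ and $n^i$ independent, so Cramer's rule gives the unique solution
\begin{equation*}
A=-\frac{1}{\Delta}\bigl(\varepsilon(n)R_u-\phi R_n\bigr),\qquad B=-\frac{1}{\Delta}\bigl(\varepsilon(u)R_n-\phi R_u\bigr),
\end{equation*}
which is exactly the claimed formula (\ref{d.two.3}). There is no real obstacle here; the argument is pure linear algebra on a 2-dimensional subspace with nondegenerate Gram matrix. The only thing to be careful about is bookkeeping of the signs $\varepsilon(u),\varepsilon(n)$ and making sure the identity-resolution used to introduce $P^i_{\,j}$ actually follows from the definition (\ref{d.two.1}), but this is a one-line check.
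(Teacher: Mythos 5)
Your proof is correct and follows essentially the same route as the paper: the same ansatz $R^{i}=Au^{i}+Bn^{i}+P^{i}_{\,j}R^{j}$, the same contractions with $u_{i}$ and $n_{i}$, the same $2\times2$ system solved via its determinant $-\Delta$. The only addition is your explicit justification of the ansatz via the resolution of the identity $g^{i}_{\,j}$ implicit in (\ref{d.two.1}), which the paper simply assumes; that is a welcome but minor strengthening.
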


\begin{proposition}
The second rank tensor field $Y_{ij}$ is 1+1+2 decomposed wrt the double
congruence $\{u^{i},n^{i}\}$ by means of the identity:%
\begin{equation}
Y_{ij}=\frac{1}{\Delta}\left[  \alpha u_{i}u_{j}+\beta n_{i}n_{j}+\gamma
u_{i}n_{j}+\delta n_{i}u_{j}-\theta_{k}P_{i}^{k}u_{j}+\kappa_{k}P_{i}^{k}%
n_{j}-\rho_{s}P_{j}^{s}u_{i}+\nu_{k}n_{i}P_{j}^{k}\right]  +P_{i}^{k}P_{j}%
^{r}Y_{kr}\label{d.two.4}%
\end{equation}
where
\begin{align*}
\alpha &  =\frac{1}{\Delta}\left[  Y_{uu}-\phi\varepsilon(n)Y_{un}%
-\phi\varepsilon(n)Y_{nu}+\phi^{2}Y_{nn}\right]  \\
\beta &  =\frac{1}{\Delta}\left[  Y_{nn}-\phi\varepsilon(u)Y_{nu}%
-\phi\varepsilon(u)Y_{un}+\phi^{2}Y_{uu}\right]  \\
\gamma &  =\frac{1}{\Delta}\left[  \left(  \varepsilon(n)\varepsilon\left(
u\right)  Y_{un}-\phi\varepsilon(n)\right)  Y_{uu}-\phi\varepsilon
(u)Y_{nn}+\phi^{2}Y_{nu}\right]  \\
\delta &  =\frac{1}{\Delta}\left[  \varepsilon(n)\varepsilon\left(  u\right)
Y_{nu}-\phi\varepsilon(u)Y_{nn}-\phi\varepsilon(n)Y_{uu}+\phi^{2}%
Y_{un}\right]  \\
\theta_{k} &  =[\varepsilon\left(  n\right)  Y_{kr}u^{r}-\phi n^{s}Y_{rs}]\\
\kappa_{r} &  =[-\varepsilon\left(  u\right)  Y_{rs}n^{s}+\phi u^{s}Y_{rs}]\\
\rho_{s} &  =\left[  \varepsilon(n)Y_{rs}u^{r}-\phi Y_{rs}n^{r}\right]  \\
\nu_{r} &  =\left[  -\varepsilon(u)Y_{rs}n^{r}+\phi Y_{rs}u^{r}\right]  +\\
\Delta &  =\phi^{2}-\varepsilon\left(  s\right)  \varepsilon\left(  n\right)
\end{align*}

\end{proposition}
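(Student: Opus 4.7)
The plan is to proceed in direct analogy with the proofs of the two earlier decomposition propositions for $P^a$ and $u^a$, which use only the identity $g^k_i = h^k_i + \text{(parallel part)}$. Here the role of $h^k_i$ is played by the screen projector $P^k_i$, so the first step is to rewrite the definition \eqref{d.two.1} as the completeness relation
\begin{equation*}
g^k_i = P^k_i - \tfrac{1}{\Delta}\bigl[\varepsilon(n)u^k u_i + \varepsilon(u) n^k n_i - \phi (u^k n_i + n^k u_i)\bigr],
\end{equation*}
and similarly for $g^r_j$. Then I would insert $Y_{ij}=Y_{kr}\,g^k_i g^r_j$ and expand the product of the two completeness relations.

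The expansion produces sixteen terms in four natural groups. The first group is the pure \textbf{parallel--parallel} block: the four outer products of $u_i, n_i$ with $u_j, n_j$. Collecting their coefficients gives expressions of the form $\tfrac{1}{\Delta^2}$ times linear combinations of $Y_{uu}, Y_{un}, Y_{nu}, Y_{nn}$ (where $Y_{uu}=Y_{kr}u^ku^r$, etc.); these, after pulling out the overall $1/\Delta$ claimed in the statement, should reproduce $\alpha, \beta, \gamma, \delta$ exactly. The second and third groups are the \textbf{mixed} blocks containing one $P^k_i$ (or $P^r_j$) paired with $u_j, n_j$ (resp.\ $u_i, n_i$); contracting $P^k_i$ against the relevant piece of $Y_{kr}$ produces the screen vectors $\theta_k, \kappa_k, \rho_s, \nu_r$ written in the proposition. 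The fourth group is the single \textbf{screen--screen} term $P^k_i P^r_j Y_{kr}$, which survives unaltered.

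The main obstacle is purely bookkeeping: keeping track of the eight $\varepsilon(u),\varepsilon(n),\phi$ factors that multiply each of the four parallel--parallel terms, and verifying that the four mixed vectors $\theta, \kappa, \rho, \nu$ emerge with the correct signs after the coefficient $-\tfrac{1}{\Delta}$ in the completeness relation is distributed. A useful check during the calculation is that the two symmetric contractions $P^k_i P^r_j Y_{(kr)}$ and the proposed coefficient structure must together satisfy $Y_{ij} u^i = $ (the known 1+3-style decomposition of $Y_{ij}u^i$ in the $\{u,n\}$ plane plus a screen vector), and likewise upon contracting with $n^j$; if both tests pass, the identity is established on a basis of rank-2 tensors and hence for all $Y_{ij}$.

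Finally, as a sanity check I would verify two degenerate cases: $\phi=0$ (so $\Delta=-\varepsilon(u)\varepsilon(n)$ and $P_{ij}=h_{ij}-\varepsilon(n)n_i n_j/n^2$), which must reduce to the 1+3 decomposition \eqref{d.one.3} after one more collapse, and $n^i=u^i$ (formally $\phi=\varepsilon(u)$, $\Delta=0$) to confirm the formula degenerates as expected. No genuinely hard step is anticipated; the proposition is an algebraic identity whose content lies entirely in the correctness of the coefficients.
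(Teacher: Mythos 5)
Your proposal is correct, and it is the natural route: the paper itself gives no proof of this proposition (it is stated without one), but your method of writing $Y_{ij}=Y_{kr}g_{i}^{k}g_{j}^{r}$ and substituting the completeness relation $g_{i}^{k}=P_{i}^{k}-\frac{1}{\Delta}\bigl[\varepsilon(n)u^{k}u_{i}+\varepsilon(u)n^{k}n_{i}-\phi(u^{k}n_{i}+n^{k}u_{i})\bigr]$ is precisely the technique the paper uses to prove the 1+3 decomposition of a second-rank tensor in (\ref{d.one.3}), transplanted to the screen projector. The paper's proof of the companion vector proposition (\ref{d.two.3}) instead posits the ansatz $R^{i}=Au^{i}+Bn^{i}+P_{j}^{i}R^{j}$ and solves the resulting $2\times2$ linear system; you could equally iterate that ansatz here with nine undetermined blocks, but your projector expansion is more mechanical and avoids having to argue separately that the ansatz is exhaustive. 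One practical warning for the bookkeeping stage: your expansion yields, e.g., the $u_{i}u_{j}$ coefficient $\frac{1}{\Delta^{2}}\bigl[Y_{uu}-\phi\varepsilon(n)(Y_{un}+Y_{nu})+\phi^{2}Y_{nn}\bigr]$, which matches the stated $\frac{1}{\Delta}\alpha$ exactly, but the printed $\gamma$ has a misplaced parenthesis (it should read $\varepsilon(n)\varepsilon(u)Y_{un}-\phi\varepsilon(n)Y_{uu}-\phi\varepsilon(u)Y_{nn}+\phi^{2}Y_{nu}$) and the printed $\Delta$ should be $\phi^{2}-\varepsilon(u)\varepsilon(n)$; do not let these typographical slips make you think your calculation has gone wrong. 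Your two consistency checks (contraction with $u^{i}$, $n^{j}$ and the $\phi=0$ collapse to the 1+3 result) are sound and suffice to certify the coefficients.
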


\begin{corollary}
Let \thinspace$u^{i}$ be a unit time-like vector $(\varepsilon(u)=-1)$
and $n^{i}$ a unit space-like vector $\left(  \varepsilon(n)=1\right)  $
normal to $u^{i},$ that is and $u^{i}n_{i}=\phi=0$. Then equations
(\ref{d.two.1}),(\ref{d.two.3}) and ( \ref{d.two.4}) become:
\begin{align}
P_{ij}\left(  u,n\right)   &  =g_{ij}+u_{i}u_{j}-n_{i}n_{j}\label{d.two.5}%
\\[0.03in]
R^{i} &  =-\left(  R^{j}u_{j}\right)  u^{i}+\left(  R^{j}n_{j}\right)
n^{i}+R^{j}P_{j}^{i}\label{d.two.6}\\
Y_{ij} &  =\left(  Y_{kr}u^{k}u^{r}\right)  u_{i}u_{j}+\left(  Y_{kr}%
n^{k}n^{r}\right)  n_{i}n_{j}\text{ }\label{d.two.7}\\
&  -\left(  Y_{kr}u^{k}n^{r}\right)  u_{i}n_{j}-\left(  Y_{kr}n^{k}%
u^{r}\right)  n_{i}u_{j}\nonumber\\
&  -\left(  Y_{sr}u^{s}\right)  P_{i}^{r}u_{j}-\left(  Y_{sr}u^{r}\right)
P_{j}^{s}u_{i}+\left(  Y_{sr}n^{s}\right)  P_{i}^{r}n_{j}+\left(  Y_{sr}%
n^{r}\right)  P_{j}^{s}n_{i}+Y_{kr}P_{i}^{k}P_{j}^{r}\nonumber
\end{align}

\end{corollary}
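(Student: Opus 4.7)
The plan is simply to specialize the three general formulas (\ref{d.two.1}), (\ref{d.two.3}) and (\ref{d.two.4}) to the values $\varepsilon(u)=-1,\;\varepsilon(n)=+1,\;\phi=0$, so the corollary will be obtained by direct substitution rather than by any new argument. The first thing I would compute is the determinant factor
\[
\Delta=\phi^{2}-\varepsilon(u)\varepsilon(n)=0-(-1)(+1)=1,
\]
which collapses all the $\frac{1}{\Delta}$ prefactors and removes every term containing $\phi$.

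Next I would plug these values into (\ref{d.two.1}): the coefficient $\varepsilon(n)$ in front of $u_i u_j$ becomes $+1$, the coefficient $\varepsilon(u)$ in front of $n_i n_j$ becomes $-1$, and the mixed $\phi$-terms vanish, giving (\ref{d.two.5}) at once. For the vector decomposition I would substitute into (\ref{d.two.3}): with $\phi=0$ and $\Delta=1$ the bracket reduces to $\varepsilon(n)R_u\, u^{i}+\varepsilon(u)R_n\, n^{i}=R_u u^{i}-R_n n^{i}$, and the overall minus sign in (\ref{d.two.3}) reproduces (\ref{d.two.6}).

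The only step with any clerical weight is the tensor formula (\ref{d.two.7}), where eight coefficients $\alpha,\beta,\gamma,\delta,\theta_{k},\kappa_{r},\rho_{s},\nu_{r}$ must be evaluated. Keeping only the $\phi$-independent contributions and using $\varepsilon(n)\varepsilon(u)=-1$, I would record
\[
\alpha=Y_{uu},\;\beta=Y_{nn},\;\gamma=-Y_{un},\;\delta=-Y_{nu},
\]
\[
\theta_{k}=Y_{kr}u^{r},\;\kappa_{r}=Y_{rs}n^{s},\;\rho_{s}=Y_{rs}u^{r},\;\nu_{r}=Y_{rs}n^{r},
\]
where the signs $-\varepsilon(u)=+1$ and $\varepsilon(n)=+1$ make every vector coefficient appear with a positive sign. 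Inserting these into (\ref{d.two.4}), the two terms attached to $u_{j}$ (namely $-\theta_{k}P_{i}^{k}u_{j}$ and $-\rho_{s}P_{j}^{s}u_{i}$) retain their minus signs and reproduce the two $u$-trailed screen terms of (\ref{d.two.7}), while the two terms attached to $n_{j}$ carry their $+$ signs and give the two $n$-trailed screen terms; the residual $P_{i}^{k}P_{j}^{r}Y_{kr}$ piece is already in the correct form.

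There is no real obstacle: the corollary is a pure specialization and the substance of the proof is entirely in tracking the signs coming from $\varepsilon(u)=-1$ against those from $\varepsilon(n)=+1$, together with the consistent relabeling of dummy indices in the mixed screen terms. I would therefore present the proof as three short substitutions followed by one short table of coefficients, with no separate calculation required for $\Delta$ or for the vanishing of the $\phi$-terms.
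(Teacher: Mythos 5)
Your proposal is correct and follows exactly the route the paper intends: the corollary is stated without proof precisely because it is the direct specialization $\varepsilon(u)=-1$, $\varepsilon(n)=+1$, $\phi=0$, $\Delta=1$ of (\ref{d.two.1}), (\ref{d.two.3}) and (\ref{d.two.4}), and your sign-tracking of the eight coefficients reproduces (\ref{d.two.5})--(\ref{d.two.7}) as written.
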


\bigskip In matrix form the decomposition of $Y_{ij}$ is written as
follows\footnote{The matrix form of the decomposition is useful in the
computation of the various irreducible parts by means of algebraic computing
programs. }:%
\begin{equation}
Y_{ij}\rightarrow\left(
\begin{tabular}
[c]{lll}%
$\left(  Y_{kr}u^{k}u^{r}\right)  $ & $-\left(  Y_{kr}u^{k}n^{r}\right)  $ &
$-\left(  Y_{sr}u^{r}P_{i}^{r}\right)  $\\
$-\left(  Y_{kr}n^{k}u^{r}\right)  $ & $\left(  Y_{kr}n^{k}n^{r}\right)  $ &
$\left(  Y_{sr}n^{r}\right)  P_{j}^{s}n_{i}$\\
$-\left(  Y_{sr}u^{r}\right)  P_{j}^{s}u_{i}$ & $\left(  Y_{sr}n^{s}\right)
P_{i}^{r}n_{j}$ & $Y_{kr}P_{i}^{k}P_{j}^{r}$%
\end{tabular}
\ \right)  .\label{d.two.7a}%
\end{equation}

By means of this decomposition we brake $Y_{ij}$ in irreducible parts which is
easier to study (because they are simpler).

\subsection{Applications}

We apply the previous general decomposition of an arbitrary second rank tensor
to various cases which interest the physical applications. In the following we
assume that the pair of vectors defining the 1+1+2 decomposition are the
vectors $u^{a},n^{a}$ \ which are timelike \ ($\varepsilon(u)=-1)$ and
spacelike ($n^{a}=1)$ respectively and in addition they are normal to each
other so that $\phi=0.$ Under these assumptions the decomposition of the
symmetric tensor becomes (\ref{d.two.7}) or (\ref{d.two.7a}) and the
coefficients of equation (\ref{d.two.4}) \ read:%
\begin{align*}
A_{00}  &  =Y_{11}=Y_{kr}u^{k}u^{r}\\
A_{01}  &  =-Y_{12}=-Y_{kr}u^{k}n^{r}\\
A_{10}  &  =-Y_{21}=-Y_{kr}n^{k}u^{r}\\
A_{11}  &  =Y_{22}=Y_{kr}n^{k}n^{r}\\
\widetilde{A_{i}}  &  =-Y_{1t}P_{i}^{t}=-Y_{sr}u^{s}P_{i}^{r}\\
\widetilde{B_{i}}  &  =Y_{2t}P_{i}^{t}=Y_{sr}n^{s}P_{i}^{r}\\
\widetilde{C_{i}}  &  =-Y_{t1}P_{i}^{t}=-Y_{sr}u^{r}P_{i}^{r}\\
\widetilde{D_{i}}  &  =Y_{t2}P_{i}^{t}=Y_{sr}n^{r}P_{j}^{s}\\
\widetilde{C_{ij}}  &  =P_{[i}^{k}P_{j]}^{r}Y_{k;r}\\
\widetilde{D_{ij}}  &  =\left[  P_{(i}^{k}P_{j)}^{r}-\frac{1}{2}P_{ij}%
P^{kr}\right]  Y_{k;r}\\
\widetilde{E}  &  =P^{kr}Y_{k;r}%
\end{align*}

\subsubsection{The 1+1+2 decomposition of the tensor $u_{a;b}$}

We find the 1+1+2 decomposition of the derivative $u_{i;j}$ where $u_{i}$ is
the timelike unit vector defining the decomposition. We compute:%
\begin{align*}
A_{00}  &  :u_{i;j}u^{i}u^{j}=0\\
A_{01}  &  :u_{i;j}u^{i}n^{b}=0\\
A_{10}  &  :u_{i;j}n^{i}u^{j}=\dot{u}^{i}n_{i}\\
A_{11}  &  :u_{i;j}n^{i}n^{j}=\overset{\ast}{u}^{i}n_{i}\\
A_{i}  &  :P_{i}^{j}u_{j;k}u^{k}=P_{i}^{j}\dot{u}_{j}\\
\widetilde{A_{i}}  &  :-u_{j;k}u^{j}P_{i}^{k}=0\\
\widetilde{B_{i}}  &  :u_{i;k}n^{i}P_{j}^{k}n^{i}=P_{i}^{k}\left(  \omega
_{jk}n^{j}+\sigma_{jk}n^{j}\right)  =P_{i}^{k}\left(  -\omega_{kr}n^{r}%
+\sigma_{kr}n^{r}\right) \\
\widetilde{C_{i}}  &  :-u_{j;k}u^{k}P_{i}^{j}=-P_{i}^{j}\dot{u}_{j}\\
\widetilde{D_{i}}  &  :u_{j;k}n^{k}P_{i}^{j}=P_{i}^{j}\overset{\ast}{u}_{j}\\
\mathcal{R}_{ij}  &  :P_{[i}^{k}P_{j]}^{r}u_{k;r}=P_{[i}^{k}P_{j]}^{r}%
\omega_{kr}\\
\mathcal{S}_{ij}  &  :\left[  P_{(i}^{k}P_{j)}^{r}-\frac{1}{2}P_{ij}%
P^{kr}\right]  u_{k;r}=\left[  P_{(i}^{k}P_{j)}^{r}-\frac{1}{2}P_{ij}%
P^{kr}\right]  \sigma_{kr}\\
\mathcal{E}  &  :P^{kr}u_{k;r}=-\sigma_{kr}n^{k}n^{r}+\frac{2}{3}\theta
\end{align*}
Therefore the 1+1+2 decomposition of $u_{i;j}$ is expressed as
\begin{align}
u_{i;j}  &  =-\left(  \dot{u}^{k}n_{k}\right)  n_{i}u_{j}+\left(  \sigma
_{rk}n^{r}n^{k}+\frac{1}{3}\theta\right)  n_{i}n_{j}-P_{i}^{k}\dot{u}_{k}%
u_{j}+P_{i}^{k}\overset{\ast}{u}_{k}n_{j}+P_{j}^{k}\left(  -\omega_{kr}%
n^{r}+\sigma_{kr}n^{r}\right)  n_{i}+\nonumber\\
&  +P_{[i}^{k}P_{j]}^{r}\omega_{kr}+\left(  P_{(i}^{k}P_{j)}^{r}-\frac{1}%
{2}P_{ij}P^{kr}\right)  \sigma_{kr}+\frac{1}{2}P_{ij}\left(  -\sigma_{kr}%
n^{k}n^{r}+\frac{2}{3}\theta\right)  \text{ \ } \label{du.1}%
\end{align}
or in a matrix form%
\begin{equation}
u_{i;j}\rightarrow%
\begin{pmatrix}
0 & 0 & 0\\
-\dot{u}^{k}n_{k} & \left(  \sigma_{rk}n^{r}n^{k}+\frac{1}{3}\theta\right)  &
P_{j}^{k}\left(  -\omega_{kr}n^{r}+\sigma_{kr}n^{r}\right) \\
-P_{i}^{k}\dot{u}_{k} & P_{i}^{j}\left(  \omega_{jk}n^{k}+\sigma_{jk}%
n^{k}\right)  & P_{[i}^{k}P_{j]}^{r}\omega_{kr}+\left(  P_{(i}^{k}P_{j)}%
^{r}-\frac{1}{2}P_{ij}P^{kr}\right)  \sigma_{kr}+\frac{1}{2}P_{ij}\left(
-\sigma_{kr}n^{k}n^{r}+\frac{2}{3}\theta\right)
\end{pmatrix}
\label{du.2}%
\end{equation}

\subsubsection{The 1+1+2 decomposition of the shear tensor $\sigma_{ab}$ and
the vorticity tensor $\omega_{ab}$}

For the stress and the vorticity tensors for the 1+1+2 decomposition we find
the result:%
\begin{align}
\sigma_{ij}  &  =\left(  \sigma_{kr}n^{k}n^{r}\right)  \left(  n_{i}%
n_{j}-\frac{1}{2}P_{ij}\right)  +2P_{(i}^{k}n_{j)}\sigma_{kr}n^{r}+\left(
P_{(i}^{k}P_{j)}^{r}-\frac{1}{2}P_{ij}P^{kr}\right)  \sigma_{kr}\label{du.3}\\
\omega_{ij}  &  =2P_{[i}^{k}n_{j]}\omega_{kr}n^{r}+\omega_{kr}P_{[i}^{k}%
P_{j]}^{r} \label{du.4}%
\end{align}
In matrix form the above expressions are presented as follows%
\begin{equation}
\sigma_{ij}\rightarrow%
\begin{pmatrix}
0 & 0 & 0\\
0 & \sigma_{kr}n^{k}n^{r} & P_{i}^{k}\sigma_{kr}n^{r}\\
0 & P_{j}^{k}\sigma_{kr}n^{r} & \left(  P_{(i}^{k}P_{j)}^{r}-\frac{1}{2}%
P_{ij}P^{kr}\right)  \sigma_{kr}-\frac{1}{2}P_{ij}\left(  \sigma_{kr}%
n^{k}n^{r}\right)  \text{ \ \ \ \ }%
\end{pmatrix}
\label{du.5}%
\end{equation}

\begin{equation}
\omega_{ij}\rightarrow%
\begin{pmatrix}
0 & 0 & 0\\
0 & 0 & -P_{i}^{k}\omega_{kr}n^{r}\text{ \ \ }\\
& -P_{j}^{k}\omega_{kr}n^{r} & \omega_{kr}P_{[i}^{k}P_{j]}^{r}%
\end{pmatrix}
. \label{du.6}%
\end{equation}

\subsubsection{The 1+1+2 decomposition of the tensor $n_{a;b}$}

We consider now the 1+1+2 decomposition of the derivative $n_{a;b}$ of the
unit space like vector $n_{a}$ defining the double congruence. We find:
\begin{align*}
A_{00}  &  :n_{i;j}u^{i}u^{j}=\dot{n}^{i}u_{i}=-\dot{u}^{i}n_{i}\\
A_{01}  &  :n_{i;j}u^{i}n^{j}=\overset{\ast}{n}^{i}u_{i}=-\overset{\ast}%
{u}^{i}n_{i}=-\left(  \sigma_{rk}n^{r}n^{k}+\frac{1}{3}\theta\right) \\
A_{10}  &  :n_{i;j}n^{i}u^{j}=0\\
A_{11}  &  :n_{i;j}n^{i}n^{j}=0\\
A_{i}  &  :P_{i}^{j}n_{j;k}u^{k}=P_{i}^{j}\dot{n}_{j}\\
\widetilde{A_{i}}  &  :P_{i}^{k}n_{j;k}u^{j}=-P_{i}^{k}u_{j;k}n^{j}=-P_{i}%
^{k}\left(  \omega_{jk}n^{j}+\sigma_{jk}n^{j}\right) \\
B_{i}  &  :P_{i}^{j}n_{j;k}n^{k}=P_{i}^{j}\overset{\ast}{n}_{j}\\
\widetilde{B_{i}}  &  :P_{j}^{k}n_{i;k}n^{i}=0\\
\widetilde{C_{ij}}  &  :P_{[i}^{k}P_{j]}^{r}n_{r;k}=\mathit{R}_{ij}\\
\widetilde{D_{ij}}  &  :\left[  P_{(i}^{k}P_{j)}^{r}-\frac{1}{2}P_{ij}%
P^{kr}\right]  n_{k;r}=\mathit{J}_{ij}\\
\widetilde{E}  &  :\frac{1}{2}P_{ij}P^{kr}n_{k;r}=\frac{1}{2}P_{ij}\mathit{E}%
\end{align*}
where we have set:%

\begin{equation}
P_{i}^{k}P_{j}^{r}n_{j;r}=\mathit{S}_{ij}+\mathit{R}_{ij}+\frac{1}%
{2}\mathit{E}P_{ij} \label{d.two.9}%
\end{equation}
where%
\begin{align}
\mathit{S}_{ij}  &  =P_{i}^{k}P_{j}^{r}n_{(k;r)}-\frac{1}{2}EP_{ij}%
\label{d.two.10}\\
\mathit{E}  &  =P^{ij}n_{i;j}\label{d.two.11}\\
\mathit{R}_{ij}  &  =P_{i}^{k}P_{j}^{r}n_{[j;r]} \label{d.two.12}%
\end{align}

We call $\mathit{S}_{ij}$ is the \textbf{screen shear tensor}, $\mathit{R}%
_{ij}$ the \textbf{screen rotation tensor} and $\mathit{E}$ \ the
\textbf{screen \ expansion} of the vector field $n^{i}.$

From their definition it is easy to show that the kinematic quantities \ of
$\ n^{i}$ satisfy the properties:
\begin{align*}
\mathit{R}_{ij}  &  =-\mathit{R}_{ji}\,,\,\mathit{R}_{ij}n^{j}=0\\
\mathit{S}_{ij}  &  =\mathit{S}_{ji}\,,\,\mathit{S}_{i}^{i}=0\,,\,\mathit{S}%
_{ij}n^{j}=0\\
\mathit{R}_{ij}  &  =h_{i}^{k}h_{j}^{r}\mathit{R}_{kr}\\
\mathit{S}_{ij}  &  =h_{i}^{k}h_{j}^{r}\mathit{S}_{kr}%
\end{align*}
Coming back to the decomposition formula (\ref{d.two.7}) we find: \
\begin{align}
n_{i;j}  &  =-\left(  \dot{u}^{k}n_{k}\right)  u_{i}u_{j}+\left(  \sigma
_{rk}n^{r}n^{k}+\frac{1}{3}\theta\right)  u_{i}n_{j}-P_{i}^{k}\dot{n}_{k}%
u_{j}+P_{j}^{k}\left(  \omega_{rk}n^{r}+\sigma_{jk}n^{j}\right)  u_{i}%
+P_{i}^{k}\overset{\ast}{n}_{k}n_{j}\nonumber\\
&  +\mathit{R}_{ij}+\mathit{J}_{ij}+\frac{1}{2}P_{ij}\mathit{E}\nonumber\\
&  =-\left(  \dot{u}^{k}n_{k}\right)  u_{i}u_{j}+\left(  \sigma_{rk}n^{r}%
n^{k}+\frac{1}{3}\theta\right)  u_{i}n_{j}-P_{i}^{k}\dot{n}_{k}u_{j}+P_{j}%
^{k}\left(  \omega_{rk}n^{r}\right)  u_{i}+P_{j}^{k}\omega_{kr}n^{r}%
u_{i}+P_{i}^{k}\overset{\ast}{n}_{k}n_{j}\nonumber\\
&  +P_{j}^{k}\dot{n}_{k}u_{i}-N_{k}u_{i}+\mathit{R}_{ij}+\mathit{J}_{ij}%
+\frac{1}{2}P_{ij}\mathit{E}\nonumber\\
&  =-\left(  \dot{u}^{k}n_{k}\right)  u_{i}u_{j}+\left(  \sigma_{rk}n^{r}%
n^{k}+\frac{1}{3}\theta\right)  u_{i}n_{j}-P_{i}^{k}\dot{n}_{k}u_{j}+P_{j}%
^{k}\left(  2\omega_{rk}n^{r}-N_{k}\right)  u_{i}+P_{i}^{k}\overset{\ast}%
{n}_{k}n_{j}+P_{j}^{k}\dot{n}_{k}u_{i}\nonumber\\
&  +\mathit{R}_{ij}+\mathit{J}_{ij}+\frac{1}{2}P_{ij}\mathit{E}
\label{DSoT.38}%
\end{align}
where, the \textbf{Screen or Greenberg vector} $N^{a}$ is defined as follows:%
\begin{equation}
N^{i}=P_{j}^{i}(\dot{n}^{j}-\overset{\ast}{u}^{j})=P_{j}^{i}L_{n}u^{j}.
\label{DSoT.38a}%
\end{equation}
Finally we have:%
\begin{align}
n_{i;j}  &  =-\left(  \dot{u}^{k}n_{k}\right)  u_{i}u_{j}+\left(  \sigma
_{rk}n^{r}n^{k}+\frac{1}{3}\theta\right)  u_{i}n_{j}-P_{i}^{k}\dot{n}_{k}%
u_{j}+\label{DSoT.39}\\
&  +P_{j}^{k}\left(  2\omega_{jk}n^{j}-N_{j}\right)  u_{i}+P_{i}^{k}%
\overset{\ast}{n}_{k}n_{j}+P_{j}^{k}\dot{n}_{k}u_{i}+\mathit{R}_{ij}%
+\mathit{J}_{ij}+\frac{1}{2}P_{ij}\mathit{E}%
\end{align}

\bigskip

and in matrix form:%
\begin{equation}
n_{i;j}\rightarrow%
\begin{pmatrix}
-\dot{u}^{i}n_{i} & \left(  \sigma_{rk}n^{r}n^{k}+\frac{1}{3}\theta\right)   &
P_{j}^{k}\dot{n}_{k}+P_{j}^{k}\left(  2\omega_{jk}n^{j}-N_{j}\right)  \\
0 & 0 & 0\text{ \ }\\
-P_{i}^{k}\dot{n}_{k} & P_{i}^{k}\overset{\ast}{n}_{k} & \mathit{R}%
_{ij}+\mathit{J}_{ij}+\frac{1}{2}P_{ij}\mathit{E}%
\end{pmatrix}
\end{equation}
The Greenberg vector is important because when $N^{a}=0$ the $L_{n}u^{j}$ is a
linear combination of the vectors $u^{a},n^{a}$ which is the condition that
the integral curves of the vector fields $u^{a},n^{a}$ form a surface. This
condition is used in the RMHD approximation as the condition that the magnetic
field is frozen in wrt the observers $u^{a},$ that is a charge moves always on
the same magnetic field line.

\subsubsection{The 1+1+2 decomposition of the energy-momentum tensor}

The energy-momentum tensor is a symmetric second rank tensor. We take
$Y_{ij}=T_{ij}$ and find that the 1+1+2 decomposition of this tensor ( for
$u^{i}u_{i}=-1\,\,,n^{i}n_{i}=1$ and \thinspace$\phi=0)$ is as follows:%
\begin{equation}
T_{ij}=\mu u_{i}u_{j}+vn_{i}s_{j}+\nu s_{i}n_{j}+\left(  p+\gamma\right)
n_{i}n_{j}+Q_{j}u_{i}+P_{j}n_{i}+Q_{i}u_{j}+P_{i}n_{j}+D_{ij}+\left(
p-\frac{1}{2}\gamma\right)  P_{ij}.\label{dem.1}%
\end{equation}

We know that the 1+3 decomposition of the energy stress tensor wrt the vector
$u^{i}$ is given by the expression:%
\begin{equation}
T_{ij}=\mu u_{i}u_{j}+ph_{ij}+2u_{(i}q_{j)}+\pi_{ij}\label{dem.2}%
\end{equation}

where:%
\[
q_{i}u^{i}=0,\pi_{ij}u^{j}=0,\pi_{ij}=\pi_{ji},\pi_{i}^{i}=0.
\]
The quantities of the 1+3 decomposition are related to the quantities of the
1+1+2 decomposition as follows:%
\begin{align}
q_{i}  &  =vn_{i}+Q_{i}\label{dem.3}\\
\pi_{ij}  &  =\gamma\left(  n_{i}n_{j}-\frac{1}{2}P_{ij}\right)
+2P_{(i}n_{j)}+D_{ij}. \label{dem.4}%
\end{align}

More on the 1+1+2 decomposition of the energy stress tensor we shall mention below.

\subsubsection{The 1+1+2 decomposition of the tensors $(\lambda u_{a})_{;b}$
and $(\phi n_{a})_{;b}$}

We shall need in our calculations the 1+1+2 decomposition of the quantities
$(\lambda u_{i});_{j}$ \ and $(\phi n_{i});_{j}$ \ where $\lambda$,$\phi$
\ are scalars (invariants). One derives easily the following results:
\begin{align}
(\lambda u_{a});_{b} &  =\lambda,_{b}u_{a}+\lambda u_{a;b}\nonumber\\
&  =\lambda_{,b}u^{b}u_{a}u_{b}+\lambda_{,b}n^{b}u_{a}n_{b}+P_{b}^{k}%
\lambda_{,k}u_{a}+\lambda u_{a;b}\nonumber\\
&  =-\dot{\lambda}u_{a}u_{b}+\overset{\ast}{\lambda}u_{a}n_{b}+P_{b}%
^{k}\lambda,_{k}u_{a}+\lambda u_{a;b}\label{DSoT.39a}%
\end{align}
and:
\begin{align}
\left(  \phi n_{a}\right)  _{;b} &  =\phi_{,b}n_{a}+\phi n_{a;b}\nonumber\\
&  =-\phi_{,b}u^{b}n_{a}u_{b}+\phi_{,b}n^{b}n_{a}n_{b}+P_{b}^{k}\phi_{,k}%
n_{a}+\phi n_{a;b}\nonumber\\
&  =-\dot{\phi}n_{a}u_{b}+\overset{\ast}{\phi}n_{a}n_{b}+P_{b}^{k}\phi
_{,k}n_{a}+\phi n_{a;b}\label{DSoT.39b}%
\end{align}

\subsubsection{The case of a general vector $\xi^{a}$}

We consider the vector field $\xi_{a}=-\lambda u_{a}+\phi n_{a}$ where
$u_{a},n_{a}$ are the vectors defining the the double congruence.  We
calculate the 1+1+2 decomposition of the covariant derivative $\xi_{a;b}$ of
the vector $\xi_{a}$ in the double congruence defined by the pair $u_{a}%
,n_{a}.$ We have:%
\begin{equation}
\xi_{a;b}=-\left(  \lambda u_{a}\right)  _{;b}+\left(  \phi n_{a}\right)
_{;b}\label{DSoT.40}%
\end{equation}
where we have computed $\left(  \lambda u_{a}\right)  _{;b},\left(  \phi
n_{a}\right)  _{;b}~$ in (\ref{DSoT.39a}) and (\ref{DSoT.39b}).

After some algebra we find that
\begin{align}
\xi_{a;b} &  =\left[  \dot{\lambda}-\phi\dot{u}^{k}n_{k}\right]  u_{a}%
u_{b}+\left[  -\overset{\ast}{\lambda}+\phi\left(  \sigma_{rk}n^{r}n^{k}%
+\frac{1}{3}\theta\right)  \right]  u_{a}n_{b}+\left[  \lambda\dot{u}^{k}%
n_{k}-\dot{\phi}\right]  n_{a}u_{b}\nonumber\\
&  +\left[  \overset{\ast}{\phi}-\lambda\left(  \sigma_{ab}n^{a}n^{b}+\frac
{1}{3}\theta\right)  \right]  n_{a}n_{b}\nonumber\\
&  +\left[  -\lambda_{,k}+\phi\dot{n}_{k}+\phi\left(  2\omega_{bk}n^{b}%
-N_{b}\right)  \right]  u_{a}P_{b}^{k}+\left[  \lambda\dot{u}_{k}-\phi\dot
{n}_{k}\right]  P_{a}^{k}u_{b}\nonumber\\
&  +\left[  \phi_{,k}+\lambda\left(  \omega_{kr}n^{r}-\sigma_{kr}n^{r}\right)
\right]  n_{a}P_{b}^{k}+\left[  \phi\overset{\ast}{n}_{k}-\lambda(\sigma
_{kr}+\omega_{kr})n^{r}\right]  n_{b}P_{a}^{k}\nonumber\\
&  +\left[  -\lambda\sigma_{kr}+\phi\mathit{S}_{ab}\right]  \left(  P_{(a}%
^{k}P_{b)}^{r}-\frac{1}{2}P_{ab}P^{kr}\right)  +\left[  \frac{1}{2}%
\lambda\left(  \sigma_{kr}n^{k}n^{r}-\frac{2}{3}\theta\right)  +\frac{1}%
{2}\phi\mathit{E}\right]  P_{ab}\label{DSoT.43}\\
&  +\left[  -\lambda\omega_{kr}+\phi\mathit{R}_{kr}\right]  P_{[a}^{k}%
P_{b]}^{r}.\text{ \ \ \ }\nonumber
\end{align}

Having computed $\xi_{a;b}$ we compute the symmetric and the antisymmetric
part (for reasons to be seen later). \ For the symmetric part we have:%
\begin{align}
\xi_{\left(  a;b\right)  }  &  =\left[  \dot{\lambda}-\phi\dot{u}^{k}%
n_{k}\right]  u_{a}u_{b}+\left[  -\overset{\ast}{\lambda}+\phi\left(
\sigma_{rk}n^{r}n^{k}+\frac{1}{3}\theta\right)  +\lambda\dot{u}^{k}n_{k}%
-\dot{\phi}\right]  u_{(a}n_{b)}\nonumber\\
&  +\left[  \overset{\ast}{\phi}-\lambda\left(  \sigma_{ab}n^{a}n^{b}+\frac
{1}{3}\theta\right)  \right]  n_{a}n_{b}\nonumber\\
&  +\left[  -\lambda_{,k}+\phi\left(  2\omega_{bk}n^{b}-N_{b}\right)
+\lambda\dot{u}_{k}\right]  u_{(a}P_{b)}^{k}\nonumber\\
&  +\left[  \phi_{,k}-2\lambda\sigma_{kr}n^{r}+\phi\overset{\ast}{n}%
_{k}\right]  n_{(a}P_{b)}^{k}\nonumber\\
&  +\left[  -\lambda\sigma_{kr}+\phi\mathit{S}_{ab}\right]  \left(  P_{(a}%
^{k}P_{b)}^{r}-\frac{1}{2}P_{ab}P^{kr}\right) \nonumber\\
&  +\frac{1}{2}\left[  \lambda\left(  \sigma_{kr}n^{k}n^{r}-\frac{2}{3}%
\theta\right)  +\phi\mathit{E}\right]  P_{ab}.\text{ \ } \label{DSoT.44}%
\end{align}

For the antisymmetric part $\xi_{\left[  a;b\right]  }$ we find
\begin{align}
\xi_{\left[  a;b\right]  } &  =\left[  -\overset{\ast}{\lambda}+\phi\left(
\sigma_{rk}n^{r}n^{k}+\frac{1}{3}\theta\right)  -\lambda\dot{u}^{k}n_{k}%
+\dot{\phi}\right]  u_{[a}n_{b]}+\label{DSoT.45}\\
&  +\left[  -\lambda_{,k}+2\phi\dot{n}_{k}+\phi\left(  2\omega_{bk}n^{b}%
-N_{b}\right)  -\lambda\dot{u}_{k}\right]  u_{[a}P_{b]}^{k}+\\
&  +\left[  \phi_{,k}+2\lambda\omega_{kr}n^{r}-\phi\overset{\ast}{n}%
_{k}\right]  n_{[a}P_{b]}^{k}+\left[  -\lambda\omega_{kr}+\phi\mathit{R}%
_{kr}\right]  P_{[a}^{k}P_{b]}^{r}.
\end{align}
Finally the trace $\xi_{;a}^{a}$ is found to be:
\begin{equation}
\xi_{;a}^{a}=-\dot{\lambda}-\lambda\theta+\phi\dot{u}^{b}n_{b}+\phi
\mathit{E}+\overset{\ast}{\phi}.\label{DSoT.46}%
\end{equation}

\section{Lie derivative and the 1+1+2 decomposition}

As in the case of 1+3 decomposition, we consider the 1+1+2 decomposition of
the Lie derivative of a vector and a second rank tensor in order to study the
effects of a collineation in the kinematics and the dynamics of a model spacetime.

\subsection{1+1+2 decomposition wrt a double congruence}

Consider the double congruence defined by the timelike vector field
$u^{i}\,,\left(  u^{i}u_{i}=-1\right)  $ and a spacelike vector field
$n^{i},\left(  n^{i}n_{i}=1\right)  $ and assume that $\phi=u^{i}n_{i}=0$.

Let $\xi^{i}$ a vector field that is given by the expression:%
\[
\xi^{i}=-\lambda u^{i}+\varphi n^{i}%
\]
where $\lambda=\xi^{i}u_{i}$ and $\varphi=\xi^{i}n_{i}$. From the 1+1+2
decomposition formula for vectors [see (\ref{d.two.6})] we have:%
\begin{align*}
\lambda_{;i}  &  =-\dot{\lambda}u_{i}+\lambda^{\ast}n_{i}+P_{i}^{j}%
\lambda_{;j}\\
\varphi_{;i}  &  =-\dot{\varphi}u_{i}+\varphi^{\ast}n_{i}+P_{i}^{j}%
\varphi_{;j}.%
\end{align*}

\begin{proposition}
The following identities are true:
\begin{align}
L_{\xi}u_{a}  &  =-\left[  \dot{\lambda}-\phi\dot{u}^{k}n_{k}\right]
u_{a}+\left[  \overset{\ast}{\lambda}-\lambda\left(  \dot{u}^{k}n_{k}\right)
\right]  n_{a}+P_{a}^{k}\left[  \lambda_{,k}-\lambda\dot{u}_{k}-2\phi
\omega_{bk}n^{b}\right] \label{d.S.T.V.1}\\
L_{\xi}u^{a}  &  =\left[  \dot{\lambda}-\phi\dot{u}^{k}n_{k}\right]
u^{a}-\left[  \dot{\phi}-\phi\left(  \sigma_{cd}n^{c}n^{d}+\frac{1}{3}%
\theta\right)  \right]  n^{a}-\phi N^{a}\label{d.S.T.V.1b}\\
L_{\xi}n_{a}  &  =\left[  -\dot{\phi}+\phi\left(  \sigma_{rk}n^{r}n^{k}%
+\frac{1}{3}\theta\right)  \right]  u_{a}+\left[  \overset{\ast}{\phi}%
-\lambda\left(  \sigma_{ab}n^{a}n^{b}+\frac{1}{3}\theta\right)  \right]
n_{a}+\label{d.S.T.V.2}\\
&  +P_{a}^{k}\left[  \lambda\left(  2\omega_{tk}n^{t}-N_{k}\right)
+2\lambda\dot{n}_{k}-\phi\overset{\ast}{n}_{k}-\phi_{;c}\right] \\
L_{\xi}n^{a}  &  =\left[  \overset{\ast}{\lambda}-\lambda\dot{u}^{k}%
n_{k}\right]  u^{a}-\left[  \overset{\ast}{\phi}-\lambda\left(  \sigma
_{ab}n^{a}n^{b}+\frac{1}{3}\theta\right)  \right]  n^{a}-\lambda N^{a}
\label{d.S.T.V.2b}%
\end{align}

\end{proposition}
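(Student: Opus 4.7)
The plan is to derive each of the four formulas by substituting the 1+1+2 decompositions already assembled into the two standard Lie-derivative identities
\[
L_\xi u_a = \xi^b u_{a;b} + u_b\, \xi^b{}_{;a}, \qquad L_\xi n_a = \xi^b n_{a;b} + n_b\, \xi^b{}_{;a}.
\]
For (\ref{d.S.T.V.1}) I would insert $\xi^b = -\lambda u^b + \varphi n^b$ and use (\ref{du.1}) for the 1+1+2 form of $u_{a;b}$ together with (\ref{DSoT.43}) (raised on one index) for $\xi^b{}_{;a}$; collecting coefficients along $u_a$, $n_a$ and $P_a^k$ then produces the three irreducible blocks displayed on the right-hand side. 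The argument for (\ref{d.S.T.V.2}) is structurally parallel, with (\ref{DSoT.39}) for $n_{a;b}$ replacing (\ref{du.1}); it is at this stage that the screen-space piece reorganizes itself into the Greenberg vector $N_k$ together with the screen kinematic quantities $\mathit{S}_{ab}$, $\mathit{R}_{ab}$ and $\mathit{E}$.

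For the contravariant versions (\ref{d.S.T.V.1b}) and (\ref{d.S.T.V.2b}) I would avoid raising an index on the covariant formulas, since $\xi^a$ is not assumed to be a collineation and that route would drag in the awkward term $(L_\xi g^{ab}) u_b = -2\,\xi^{(a;b)} u_b$. Instead I will apply the product rule $L_{fX}Y = f\,L_X Y - (Yf)\,X$ to the splitting $\xi = -\lambda u + \varphi n$, using $L_u u = 0$, $L_n n = 0$ and $L_u n = -L_n u$, to obtain directly
\[
L_\xi u^a = \dot\lambda\, u^a - \dot\varphi\, n^a + \varphi\, (L_n u)^a, \qquad L_\xi n^a = \overset{\ast}{\lambda}\, u^a - \overset{\ast}{\varphi}\, n^a + \lambda\, (L_n u)^a.
\]
The only remaining step is then to 1+1+2 decompose the single vector $L_n u^a$ via (\ref{d.two.6}). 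The two scalar projections $u_b (L_n u)^b$ and $n_b (L_n u)^b$ evaluate, after a short calculation with the 1+3 form of $u_{a;b}$ and the constraints $u^a u_a = -1$, $u^a n_a = 0$, $n^a n_a = 1$, to $\dot u^k n_k$ and $\sigma_{cd} n^c n^d + \tfrac{1}{3}\theta$ respectively, while the screen projection is, by the very definition (\ref{DSoT.38a}), the Greenberg vector $N^a$.

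The main obstacle I expect is purely bookkeeping. Both $\xi_{a;b}$ and $u_{a;b}$, $n_{a;b}$ already split into nine structural blocks each, so the contractions $\xi^b u_{a;b}$, $u_b\,\xi^b{}_{;a}$ and their $n$-counterparts will each expand into roughly a dozen terms. The delicate points will be (i) correctly identifying mixed contributions of the type $u_{b;a} n^b$ with the combinations $\sigma_{kr} n^r \mp \omega_{kr} n^r$ according to the position of the free index, (ii) checking that every term which does not lie in the span of $\{u^a, n^a, N^a\}$ (or, respectively, of $\{u_a, n_a, P_a^k(\,\cdot\,)_k\}$) cancels, which it must if the claimed identities are to hold, and (iii) maintaining a consistent sign convention for $L_n u$ versus $L_u n$ so that $N^a$ enters with the correct coefficient. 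No conceptual difficulty beyond these manipulations is anticipated.
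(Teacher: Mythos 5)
Your proposal is correct, and for half of the proposition it follows a genuinely different route from the paper. For the covariant formulas (\ref{d.S.T.V.1}) and (\ref{d.S.T.V.2}) you do exactly what the paper does: write $L_{\xi}u_{a}=u_{a;b}\xi^{b}+\xi_{b;a}u^{b}$ (and likewise for $n_{a}$), contract the precomputed blocks (\ref{du.1}), (\ref{DSoT.39}) and (\ref{DSoT.43}), and collect coefficients along $u_{a}$, $n_{a}$ and the screen. For the contravariant formulas (\ref{d.S.T.V.1b}) and (\ref{d.S.T.V.2b}), however, the paper raises the index via $L_{\xi}u^{a}=(L_{\xi}g^{ab})u_{b}+g^{ab}L_{\xi}u_{b}=-2\xi^{(a;b)}u_{b}+g^{ab}L_{\xi}u_{b}$, which forces it to contract the full symmetric part (\ref{DSoT.44}) and then cancel most of the covariant result against it; your use of the bracket identity $L_{fX}Y=fL_{X}Y-(Yf)X$ together with $L_{u}u=L_{n}n=0$ bypasses all of that and reduces both formulas to the 1+1+2 decomposition of the single commutator $[u,n]$, whose projections $u_{b}L_{u}n^{b}=-\dot{u}^{k}n_{k}$, $n_{b}L_{u}n^{b}=-\left(\sigma_{cd}n^{c}n^{d}+\tfrac{1}{3}\theta\right)$ and $P^{a}_{\;b}L_{u}n^{b}=N^{a}$ you compute correctly. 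This is shorter, less error-prone, and makes transparent why only $u^{a}$, $n^{a}$ and $N^{a}$ survive in the contravariant expressions.

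One caution on the point you yourself flag as (iii): the definition (\ref{DSoT.38a}) as printed is internally inconsistent, since $P^{i}_{\;j}(\dot{n}^{j}-\overset{\ast}{u}^{j})=P^{i}_{\;j}L_{u}n^{j}=-P^{i}_{\;j}L_{n}u^{j}$; the convention actually used in the proposition is $N^{i}=P^{i}_{\;j}L_{u}n^{j}$ as in (\ref{d.S.T.V.3}). With your splitting $L_{\xi}u^{a}=\dot{\lambda}u^{a}-\dot{\varphi}n^{a}+\varphi\,(L_{n}u)^{a}$ the screen part of $(L_{n}u)^{a}$ is therefore $-N^{a}$, not $+N^{a}$ as your phrase ``is, by the very definition, the Greenberg vector'' suggests; that minus sign is precisely what produces the $-\phi N^{a}$ and $-\lambda N^{a}$ terms in (\ref{d.S.T.V.1b}) and (\ref{d.S.T.V.2b}). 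Since you explicitly list this sign convention as a point to be checked, this is a bookkeeping item rather than a gap, but it must be resolved in the stated direction for the identities to come out as claimed.
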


where $N^{i}$ is the \textbf{Greenberg} or the \textbf{screen vector} which is
given by the expression
\begin{equation}
N^{i}=P_{j}^{i}L_{u}n^{j}=P_{j}^{i}(\dot{n}^{j}-\overset{\ast}{u}^{j}).
\label{d.S.T.V.3}%
\end{equation}

\begin{proof}
We have for (\ref{d.S.T.V.1}):%
\[
L_{\xi}u_{a}=u_{a;b}\xi^{b}+\xi_{~b;a}u^{b}%
\]%
\begin{align}
\xi_{a;b}  &  =\left[  \dot{\lambda}-\phi\dot{u}^{k}n_{k}\right]  u_{a}%
u_{b}+\left[  -\overset{\ast}{\lambda}+\phi\left(  \sigma_{rk}n^{r}n^{k}%
+\frac{1}{3}\theta\right)  \right]  u_{a}n_{b}+\left[  \lambda\dot{u}^{k}%
n_{k}-\dot{\phi}\right]  n_{a}u_{b}\nonumber\\
&  +\left[  \overset{\ast}{\phi}-\lambda\left(  \sigma_{ab}n^{a}n^{b}+\frac
{1}{3}\theta\right)  \right]  n_{a}n_{b}\nonumber\\
&  +\left[  -\lambda_{,k}+\phi\dot{n}_{k}+\phi\left(  2\omega_{bk}n^{b}%
-N_{b}\right)  \right]  u_{a}P_{b}^{k}+\left[  \lambda\dot{u}_{k}-\phi\dot
{n}_{k}\right]  P_{a}^{k}u_{b}\nonumber\\
&  +\left[  \phi_{,k}+\lambda\left(  \omega_{kr}n^{r}-\sigma_{kr}n^{r}\right)
\right]  n_{a}P_{b}^{k}+\left[  \phi\overset{\ast}{n}_{k}-\lambda(\sigma
_{kr}+\omega_{kr})n^{r}\right]  n_{b}P_{a}^{k}\nonumber\\
&  +\left[  -\lambda\sigma_{kr}+\phi\mathit{S}_{ab}\right]  \left(  P_{(a}%
^{k}P_{b)}^{r}-\frac{1}{2}P_{ab}P^{kr}\right)  +\left[  \frac{1}{2}%
\lambda\left(  \sigma_{kr}n^{k}n^{r}-\frac{2}{3}\theta\right)  +\frac{1}%
{2}\phi\mathit{E}\right]  P_{ab}\\
&  +\left[  -\lambda\omega_{kr}+\phi\mathit{R}_{kr}\right]  P_{[a}^{k}%
P_{b]}^{r}.\nonumber
\end{align}
The term:%
\[
\xi_{b;a}u^{b}=-\left[  \dot{\lambda}-\phi\dot{u}^{k}n_{k}\right]
u_{a}+\left[  \overset{\ast}{\lambda}-\phi\left(  \sigma_{rk}n^{r}n^{k}%
+\frac{1}{3}\theta\right)  \right]  n_{a}+\left[  \lambda_{,k}-\phi\dot{n}%
_{k}-\phi\left(  2\omega_{bk}n^{b}-N_{k}\right)  \right]  P_{a}^{k}%
\]
so that:%
\[
L_{\xi}u_{a}=u_{a;b}\xi^{b}-\left[  \dot{\lambda}-\phi\dot{u}^{k}n_{k}\right]
u_{a}-\left[  -\overset{\ast}{\lambda}+\phi\left(  \sigma_{rk}n^{r}n^{k}%
+\frac{1}{3}\theta\right)  \right]  n_{a}+\left[  \lambda_{,k}-\phi\dot{n}%
_{k}-\phi\left(  2\omega_{bk}n^{b}-N_{k}\right)  \right]  P_{a}^{k}%
\]
Similarly from (\ref{du.1}) we have:%
\begin{equation}
u_{a;b}\xi^{b}=-\lambda\left(  \dot{u}^{k}n_{k}\right)  n_{i}-\lambda
P_{i}^{k}\dot{u}_{k}+\phi\left(  \sigma_{rk}n^{r}n^{k}+\frac{1}{3}%
\theta\right)  n_{i}+\phi P_{i}^{k}\overset{\ast}{u}_{k}%
\end{equation}
Collecting terms we find:%
\begin{equation}
L_{\xi}u_{a}=-\left[  \dot{\lambda}-\phi\dot{u}^{k}n_{k}\right]  u_{a}+\left[
\overset{\ast}{\lambda}-\lambda\left(  \dot{u}^{k}n_{k}\right)  \right]
n_{a}+P_{a}^{k}\left[  \lambda_{,k}-\lambda\dot{u}_{k}-\phi2\omega_{bk}%
n^{b}\right]
\end{equation}
\qquad\qquad\newline We have for (\ref{d.S.T.V.1b})
\[
L_{\xi}u^{a}=\left(  L_{\xi}g^{ab}\right)  u_{b}+g^{ab}L_{\xi}u_{b}%
=-2\xi^{\left(  a;b\right)  }u_{b}+g^{ab}L_{\xi}u_{b}%
\]
The first term gives:%
\begin{align*}
-2\xi^{\left(  a;b\right)  }u_{b}  &  =2\left[  \dot{\lambda}-\phi\dot{u}%
^{k}n_{k}\right]  u^{a}+\left[  -\overset{\ast}{\lambda}+\phi\left(
\sigma_{rk}n^{r}n^{k}+\frac{1}{3}\theta\right)  +\lambda\dot{u}^{k}n_{k}%
-\dot{\phi}\right]  n^{a}+\\
&  +\left[  -\lambda_{,k}+\phi\left(  2\omega_{bk}n^{b}-N_{b}\right)
+\lambda\dot{u}_{k}\right]  P^{ka}%
\end{align*}
Replacing $L_{\xi}u_{b}$ from (\ref{d.S.T.V.1}) and collecting terms we find:%
\begin{equation}
L_{\xi}u^{a}=\left[  \dot{\lambda}-\phi\dot{u}^{k}n_{k}\right]  u^{a}+\left[
\phi\left(  \sigma_{rk}n^{r}n^{k}+\frac{1}{3}\theta\right)  -\dot{\phi
}\right]  n^{a}-\phi N^{a}%
\end{equation}
Concerning $L_{\xi}n_{a}$ we have:%
\[
L_{\xi}n_{a}=n_{a;b}\xi^{b}+\xi_{~b;a}n^{b}.
\]
From (\ref{d.S.T.V.2}) we find:%
\begin{equation}
\xi_{b;a}n^{b}=\left[  \lambda\dot{u}^{k}n_{k}-\dot{\phi}\right]
u_{a}+\left[  \overset{\ast}{\phi}-\lambda\left(  \sigma_{ab}n^{a}n^{b}%
+\frac{1}{3}\theta\right)  \right]  n_{a}+\left[  \phi_{,k}+\lambda\left(
\omega_{kr}n^{r}-\sigma_{kr}n^{r}\right)  \right]  P_{a}^{k}%
\end{equation}
Similarity from (\ref{DSoT.39}) we have:%
\begin{equation}
n_{a;b}\xi^{b}=-\lambda\left(  \dot{u}^{k}n_{k}\right)  u_{j}-\lambda
P_{j}^{k}\dot{n}_{k}+\phi\left(  \sigma_{rk}n^{r}n^{k}+\frac{1}{3}%
\theta\right)  u_{i}+\phi P_{i}^{k}\overset{\ast}{n}_{k}%
\end{equation}
Replacing we find:%
\begin{align*}
L_{\xi}n_{a}  &  =\left[  -\dot{\phi}+\phi\left(  \sigma_{rk}n^{r}n^{k}%
+\frac{1}{3}\theta\right)  \right]  u_{a}+\left[  \overset{\ast}{\phi}%
-\lambda\left(  \sigma_{ab}n^{a}n^{b}+\frac{1}{3}\theta\right)  \right]
n_{a}+\\
&  -P_{a}^{k}\left[  \lambda\left(  2\omega_{tk}n^{t}-N_{k}\right)
+2\lambda\dot{n}_{k}-\phi\overset{\ast}{n}_{k}-\phi_{;c}\right]
\end{align*}
where $N^{a}=P_{b}^{a}L_{n}u^{a}$ is the Greenberg Vector. \

Concerning $L_{\xi}n^{a}$ we have:%
\[
L_{\xi}n^{a}=\left(  L_{\xi}g^{ab}\right)  n_{b}+g^{ab}L_{\xi}n_{a}%
=-2\xi^{\left(  a;b\right)  }n_{b}+g^{ab}L_{\xi}n_{b}%
\]
Again from (\ref{d.S.T.V.2}) we have for the first term:%
\begin{align}
-2\xi^{\left(  a;b\right)  }n_{b}  &  =-\left[  -\overset{\ast}{\lambda}%
+\phi\left(  \sigma_{rk}n^{r}n^{k}+\frac{1}{3}\theta\right)  +\lambda\dot
{u}^{k}n_{k}-\dot{\phi}\right]  u^{a}+\\
&  -2\left[  \overset{\ast}{\phi}-\lambda\left(  \sigma_{ab}n^{a}n^{b}%
+\frac{1}{3}\theta\right)  \right]  n^{a}-\left[  \phi_{,k}-2\lambda
\sigma_{kr}n^{r}+\phi\overset{\ast}{n}_{k}\right]  P^{ka}%
\end{align}
Using (\ref{d.S.T.V.2}) \ and collecting terms we find:%
\[
L_{\xi}n^{a}=\left[  \overset{\ast}{\lambda}-\lambda\dot{u}^{k}n_{k}\right]
u^{a}-\left[  \overset{\ast}{\phi}-\lambda\left(  \sigma_{ab}n^{a}n^{b}%
+\frac{1}{3}\theta\right)  \right]  n^{a}-\lambda N^{a}.
\]

\end{proof}

\begin{proposition}
Let $X^{i}$ be an arbitrary vector field and $u^{i}$ a unit timelike vector
field. Then we have the identity:%
\begin{equation}
L_{u}X^{i}=-\left(  u^{r}u^{j}L_{u}g_{rj}\right)  u^{i}+\left[  \left(
X^{j}n_{j}\right)  ^{\cdot}+2\mathit{R}_{rj}X^{r}u^{j}-\phi_{;r}X^{r}\right]
n^{i}+P_{j}^{i}L_{u}X^{i} \label{d.S.T.V.4}%
\end{equation}
where $\phi=u^{i}n_{i}$.
\end{proposition}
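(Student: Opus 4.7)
The plan is to apply the general $1+1+2$ decomposition identity for an arbitrary vector, equation~(\ref{d.two.6}), to the specific vector $R^{i}:=L_{u}X^{i}$. This reduces the proof to computing three scalar/vector pieces: the coefficients of $u^{i}$ and $n^{i}$, and the screen projection $P^{i}_{j}L_{u}X^{j}$. The last of these requires no work; it is already the third term on the right-hand side.

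For the $u^{i}$-coefficient, I would use the Leibniz rule for the Lie derivative,
\begin{equation}
L_{u}\bigl(X^{j}u_{j}\bigr)=\bigl(L_{u}X^{j}\bigr)u_{j}+X^{j}\,L_{u}u_{j},
\end{equation}
together with the trivial identity $L_{u}u^{j}=[u,u]^{j}=0$, which forces $L_{u}u_{j}=u^{k}L_{u}g_{jk}$. This gives a clean expression for $(L_{u}X^{j})u_{j}$ in terms of the directional derivative of the invariant $X^{j}u_{j}$ and the contraction $X^{j}u^{k}L_{u}g_{jk}$, which is the source of the $u^{r}u^{j}L_{u}g_{rj}$ factor in the statement.

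The genuinely delicate step is the $n^{i}$-coefficient. Here the same Leibniz trick yields
\begin{equation}
\bigl(L_{u}X^{j}\bigr)n_{j}=\bigl(X^{j}n_{j}\bigr)^{\cdot}-X^{j}L_{u}n_{j},
\end{equation}
so one must identify $X^{j}L_{u}n_{j}$ with the combination $-2\mathcal{R}_{rj}X^{r}u^{j}+\phi_{;r}X^{r}$ announced in the statement. I would compute $L_{u}n_{j}=u^{k}n_{j;k}+n_{k}u^{k}{}_{;j}$ by substituting the explicit $1+1+2$ expansion of $n_{j;k}$ obtained in~(\ref{DSoT.39}), recognizing the antisymmetric screen part as the screen rotation $\mathcal{R}_{jk}$, and using the $1+3$ expansion of $u^{k}{}_{;j}$. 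The algebra is organized by contracting the free index against $X^{j}$ and discarding the pieces that either vanish because $\mathcal{R}_{jk}$, $\mathcal{S}_{jk}$ and the Greenberg vector $N^{k}$ live in the screen space, or get absorbed into the screen projection $P^{i}_{j}L_{u}X^{j}$; the surviving non-screen contributions are exactly what the statement calls for, with the $\phi_{;r}X^{r}$ piece coming from Lie-differentiating the scalar $\phi=u^{i}n_{i}$ via $L_{u}(u^{i}n_{i})=\phi_{;r}u^{r}$ combined with $u^{i}L_{u}n_{i}$.

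The main obstacle is this last bookkeeping step: collecting the many terms generated by~(\ref{DSoT.39}) and~(\ref{du.1}) and showing that, after using the identities $\mathcal{R}_{jk}u^{k}=\mathcal{S}_{jk}u^{k}=0$ and the algebraic compatibility conditions between $L_{u}g_{jk}$ and the kinematics of $u^{i}$, only the compact combination $2\mathcal{R}_{rj}X^{r}u^{j}-\phi_{;r}X^{r}$ survives in the $n^{i}$-component. Once this identification is made, substituting back into~(\ref{d.two.6}) gives the stated formula directly.
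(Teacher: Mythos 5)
Your top-level route is the same as the paper's: apply the decomposition identity (\ref{d.two.6}) to the vector $L_{u}X^{i}$ and evaluate the two scalar projections $(L_{u}X^{j})u_{j}$ and $(L_{u}X^{j})n_{j}$. However, both projection computations, as you sketch them, fail to land on the stated coefficients, and in one case the mechanism you invoke is wrong. For the $u$-projection, your Leibniz identity $(L_{u}X^{j})u_{j}=(X^{j}u_{j})^{\cdot}-X^{j}u^{k}L_{u}g_{jk}$ is correct, but it is not ``the source of the $u^{r}u^{j}L_{u}g_{rj}$ factor'': that factor contains two $u$'s and no $X$, and in fact $u^{r}u^{j}L_{u}g_{rj}=2u^{r}u^{j}u_{(r;j)}=0$ identically for a unit $u^{a}$, so the first term of the proposition as printed is vacuous. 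The coefficient the paper actually computes is $u^{r}u^{j}X_{(r;j)}$ (equivalently $\tfrac{1}{2}u^{r}u^{j}L_{X}g_{rj}$), obtained by Lie-differentiating $u^{j}u_{j}=-1$ along $X^{a}$ and using $L_{u}X^{a}=-L_{X}u^{a}$; the $L_{u}g$ in the displayed statement is a misprint for $L_{X}g$. Your expression is equal to this, but you neither perform the identification nor flag that the printed coefficient cannot be correct, so this part of the argument does not close as written.

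The $n$-projection is where the real trouble lies. The term $2\mathit{R}_{rj}X^{r}u^{j}$ in the statement vanishes identically, because $\mathit{R}_{ij}=P_{i}^{k}P_{j}^{r}n_{[k;r]}$ is a screen tensor and $P_{j}^{r}u^{j}=0$. What the computation actually yields --- and what the paper's own proof displays --- is $2n_{[j;r]}X^{r}u^{j}$ with the \emph{full} antisymmetrized derivative of $n_{a}$; the $\mathit{R}$ in the statement is a misprint. Your plan to ``recognize the antisymmetric screen part as the screen rotation'' therefore isolates exactly the piece that dies under the contraction with $u^{j}$, while the surviving contributions come from the non-screen components of $n_{[j;r]}$, which your bookkeeping would discard. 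Similarly, the $-\phi_{;r}X^{r}$ term is a derivative of $\phi$ along $X^{a}$, so it cannot come from $L_{u}(u^{i}n_{i})=\dot{\phi}$, which is a derivative along $u^{a}$; it arises from eliminating $n_{k}u^{k}{}_{;j}$ through the identity $\phi_{;j}=n_{k}u^{k}{}_{;j}+u^{k}n_{k;j}$ and contracting with $X^{j}$:
\begin{equation}
n_{j}L_{u}X^{j}=(X^{j}n_{j})^{\cdot}-X^{j}\bigl(u^{k}n_{j;k}+n_{k}u^{k}{}_{;j}\bigr)=(X^{j}n_{j})^{\cdot}+2n_{[j;r]}X^{r}u^{j}-\phi_{;r}X^{r}.
\end{equation}
This two-line manipulation replaces the lengthy substitution of (\ref{DSoT.39}) and (\ref{du.1}) that you propose, and it is essentially what the paper does.
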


\begin{proof}
From equation (\ref{d.two.6}) \ we have:%
\[
L_{u}X^{i}=-\left(  (L_{u}X^{j})u_{j}\right)  u^{i}+\left(  n_{j}L_{u}%
X^{j}\right)  n^{i}+P_{j}^{i}L_{u}X^{j}%
\]
Contracting with $u^{i},n^{i}$ we find:%
\begin{align*}
(L_{u}X^{j})u_{j}  &  =X_{(j;r)}u^{r}u^{i}=u^{r}u^{i}L_{u}g_{ij}\\
n_{j}L_{u}X^{j}  &  =\left(  X^{j}n_{j}\right)  ^{\cdot}+2n_{\left[
j;r\right]  }X^{r}u^{j}-\phi_{;r}X^{r}%
\end{align*}
Hence%
\[
L_{u}X^{i}=-\left(  u^{r}u^{j}L_{u}g_{rj}\right)  u^{i}+\left[  \left(
X^{j}n_{j}\right)  ^{\cdot}+2n_{\left[  j;r\right]  }X^{r}u^{j}-\phi_{;r}%
X^{r}\right]  n^{i}+P_{j}^{i}L_{u}X^{i}%
\]

\end{proof}

Working in a similar manner it is easy to prove the relations:
\begin{align}
L_{n}X^{i}  &  =-\left[  \left(  X^{j}u_{j}\right)  ^{\cdot}+2u_{\left[
j;r\right]  }X^{r}u^{j}-\phi_{;r}X^{r}\right]  u^{i}+\left(  n^{r}n^{j}%
L_{u}g_{rj}\right)  n^{i}+P_{j}^{i}L_{u}X^{i}\label{d.S.T.V.5}\\
L_{u}n^{i}  &  =-\left(  u^{r}u^{j}L_{u}g_{rj}\right)  u^{i}+P_{j}^{i}%
n^{j}\label{d.S.T.V.6}\\
L_{n}u^{i}  &  =\left(  n^{r}n^{j}L_{u}g_{rj}\right)  n^{i}+P_{j}^{i}u^{j}.
\label{d.S.T.V.7}%
\end{align}

\subsection{The Lie derivative of the projection tensors}

In the case $\xi^{a}=-\lambda u^{a}+\phi n^{a}$ we compute the Lie
derivative of the projection tensors $h_{ab}=g_{ab}+u_{a}u_{a}$ and
$P_{ab}=h_{ab}-n_{a}n_{a}.$

We have:%
\begin{align}
L_{\xi}h_{ab}  &  =L_{\xi}g_{ab}+\left(  L_{\xi}u_{a}\right)  u_{b}%
+u_{a}\left(  L_{\xi}u_{b}\right) \nonumber\\
&  =2\left[  -\dot{\phi}+\phi\left(  \sigma_{rk}n^{r}n^{k}+\frac{1}{3}%
\theta\right)  \right]  u_{(a}n_{b)}-2\phi N_{(b}u_{a)}+2\left[  \overset
{\ast}{\phi}-\lambda\left(  \sigma_{ab}n^{a}n^{b}+\frac{1}{3}\theta\right)
\right]  n_{(a}n_{b)}\nonumber\\
&  +2\left[  \phi,_{c}-2\lambda\sigma_{cd}n^{d}+\phi\overset{\ast}{n}%
_{c}\right]  n_{(a}P_{b)}^{c}\text{ \ }\nonumber\\
&  +2\left[  -\lambda\sigma_{kr}+\phi\mathit{S}_{kr}\right]  \left(
P_{(a}^{k}P_{b)}^{r}-\frac{1}{2}P_{ab}P^{kr}\right)  +\left[  \lambda\left(
\sigma_{kr}n^{k}n^{r}-\frac{2}{3}\theta\right)  +\phi\mathit{E}\right]
P_{ab}~ \label{DSoT.49}%
\end{align}

We also have:%
\begin{align}
L_{\xi}h_{b}^{a}  &  =\left(  L_{\xi}u^{a}\right)  u_{b}+u^{a}\left(  L_{\xi
}u_{b}\right) \nonumber\\
&  =\left[  -\dot{\phi}+\phi\left(  \sigma_{rk}n^{r}n^{k}+\frac{1}{3}%
\theta\right)  \right]  n^{a}u_{b}+\left[  \overset{\ast}{\lambda}-\lambda
\dot{u}^{k}n_{k}\right]  u^{a}n_{b}\nonumber\\
&  -\phi N^{a}u_{b}+P_{b}^{c}\left[  \lambda_{;c}-\lambda\dot{u}_{c}%
-2\phi\omega_{kc}n^{k}\right]  u^{a}\text{ } \label{DSoT.50}%
\end{align}
and
\begin{align*}
L_{\xi}h^{ab}  &  =L_{\xi}g^{ab}+\left(  L_{\xi}u^{a}\right)  u^{b}%
+u^{a}\left(  L_{\xi}u^{b}\right) \\
&  =2\left[  \overset{\ast}{\lambda}-\lambda\dot{u}^{k}n_{k}\right]
n^{(a}u^{b)}+2u^{(a}P^{b)c}\left[  \lambda_{;c}-\lambda\dot{u}_{c}-2\phi
\omega_{dc}n^{d}\right] \\
&  -2\left[  \overset{\ast}{\phi}-\lambda\left(  \sigma_{ab}n^{a}n^{b}%
+\frac{1}{3}\theta\right)  \right]  n^{a}n^{b}-2\left[  \phi,_{c}%
-2\lambda\sigma_{cd}n^{d}+\phi\overset{\ast}{n}_{c}\right]  n^{(a}P^{b)c}+\\
&  -\left[  \lambda\left(  \sigma_{kr}n^{k}n^{r}-\frac{2}{3}\theta\right)
+\phi\mathit{E}\right]  P^{ab}-2\left[  -\lambda\sigma^{cd}+\phi
\mathit{S}^{cd}\right]  \left(  P_{c}^{a}P_{d}^{b}-\frac{1}{2}P_{ab}%
P^{kr}\right)
\end{align*}
Working with a similar manner we compute the Lie derivative of the screen
projection tensor $P_{ab}=h_{ab}-n_{a}n_{b}.$

We find:%
\begin{align}
L_{\xi}P_{ab} &  =L_{\xi}h_{ab}-2L_{\xi}n_{(a}n_{b)}\nonumber\\
&  =2\lambda N_{(a}n_{b)}-2\phi N_{(a}u_{b)}+2\left[  -\lambda\sigma_{kr}%
+\phi\mathit{S}_{kr}\right]  \left(  P_{(a}^{k}P_{b)}^{r}-\frac{1}{2}%
P_{ab}P^{kr}\right)  \nonumber\\
&  +\left[  \lambda\left(  \sigma_{kr}n^{k}n^{r}-\frac{2}{3}\theta\right)
+\phi\mathit{E}\right]  P_{ab}\text{ }\label{DSoT.51}%
\end{align}%
\begin{align}
L_{\xi}P_{b}^{a} &  =L_{\xi}h_{b}^{a}-\left(  L_{\xi}n^{a}\right)  n_{b}%
-n^{a}\left(  L_{\xi}n_{b}\right)  \nonumber\\
&  =\left[  \lambda_{,k}-\lambda\dot{u}_{k}-2\phi\omega_{rk}n^{r}\right]
P_{b}^{k}u^{a}-\phi N^{a}u_{b}+\left[  2\lambda\sigma_{kd}n^{d}-\phi_{,k}%
-\phi\overset{\ast}{n}_{k}\right]  P_{b}^{k}n^{a}+\lambda(N^{a}n_{b}%
+n^{a}N_{b})\label{DSoT.52}%
\end{align}

\begin{align}
L_{\xi}P^{ab} &  =2[\phi N_{c}-2\phi\omega_{tc}n^{t}-\lambda\dot{u}%
_{c}+\lambda_{;c}]u^{(a}P^{b)c}-2\phi u^{(a}N^{b)}\nonumber\\
&  -2[\phi_{;c}-2\lambda\sigma_{cd}n^{d}+\phi\overset{\ast}{n_{c}}%
]n^{(a}P^{b)c}+2\lambda n^{(a}N^{b)}\nonumber\\
&  -2[\phi\mathit{S}^{cd}-\lambda\sigma^{cd}]\left(  P_{(c}^{a}P_{d)}%
^{b}-\frac{1}{2}P_{cd}P^{ab}\right)  -\left[  \lambda\left(  \sigma_{kr}%
n^{k}n^{r}-\frac{2}{3}\theta\right)  +\phi\mathit{E}\right]  P^{ab}%
.\label{DSoT.53}%
\end{align}

\subsection{The components of the generic symmetry in terms of the kinematic
parameters}

We express $L_{\xi}g_{ab}$ (the generic symmetry) in terms of the kinematic
quantities of the double congruence. We introduce the trace and the trace free
part of the $L_{\xi}g_{ab}$ be means of the identity:%
\begin{equation}
L_{\xi}g_{ab}=2\psi g_{ab}+2H_{ab}\text{ },\,\,\,H_{a}^{a}=0\text{ and
}H_{\left[  ab\right]  }=0. \label{G.E.S.01}%
\end{equation}
The quantities $\psi,H_{ab}$ we call the components of the generic symmetry.

Now for a general vector $\xi^{a}$ we have the decomposition:%
\[
\xi_{a;b}=\xi_{\left(  a;b\right)  }+\xi_{\left[  a;b\right]  }%
\]
But we know that:%
\begin{equation}
2\xi_{\left(  a;b\right)  }=L_{\xi}g_{ab} \label{G.E.S.01a}%
\end{equation}
therefore we obtain:%
\[
\xi_{a;b}=\psi g_{ab}+H_{ab}+\xi_{\left[  a;b\right]  }.
\]
From (\ref{G.E.S.01a}) we compute the quantities $\psi$ and $H_{ab}$ in terms
of the kinematic quantities of the double congruence. Indeed writing
(\ref{G.E.S.01a}) in the form:%
\begin{equation}
2\xi_{\left(  a;b\right)  }=L_{\xi}g_{ab}=2\psi g_{ab}+2H_{ab}\text{ }
\label{G.E.S.01b}%
\end{equation}
and taking the trace we find:%
\begin{equation}
2\xi_{;a}^{a}=8\psi\Rightarrow\psi=\frac{1}{4}\left[  -\dot{\lambda}%
-\lambda\theta+\phi\dot{u}^{b}n_{b}+\phi\mathit{E}+\overset{\ast}{\phi
}\right]  \label{G.E.S.02}%
\end{equation}
where we have used (\ref{DSoT.46}).

Concerning the tensor $H_{ab}$ we have from (\ref{G.E.S.01b}):%
\[
H_{ab}=\xi_{\left(  a;b\right)  }-\psi g_{ab}.
\]
Replacing $\xi_{\left(  a;b\right)  }$ from (\ref{G.E.S.01}) and $\psi$ \ from
(\ref{G.E.S.02}) we find after standard calculations:%
\begin{align}
H_{ab}  &  =\frac{1}{2}\left[  3\dot{\lambda}-3\phi\left(  \dot{u}^{k}%
n_{k}\right)  -\lambda\theta+\phi\mathit{E}+\overset{\ast}{\phi}\right]
u_{a}u_{b}\nonumber\\
&  +\left[  -\overset{\ast}{\lambda}+\phi\left(  \sigma_{kr}n^{k}n^{r}%
+\frac{1}{3}\theta\right)  +\lambda\left(  \dot{u}^{k}n_{k}\right)  -\dot
{\phi}\right]  u_{(a}n_{b)}\nonumber\\
&  +\frac{1}{2}\left[  3\overset{\ast}{\phi}-4\lambda\sigma_{kr}n^{k}%
n^{r}-\frac{1}{3}\lambda\theta-\phi\dot{u}^{k}n_{k}-\phi\mathit{E}%
+\dot{\lambda}\right]  n_{a}n_{b}\nonumber\\
&  +\left[  -\lambda_{,k}+\phi\left(  2\omega_{kr}n^{k}-N_{k}\right)
+\lambda\dot{u}_{k}\right]  u_{(a}P_{b)}^{r}\nonumber\\
&  +\left[  \phi_{,k}-2\lambda\sigma_{kr}n^{r}+\phi\overset{\ast}{n}%
_{k}\right]  n_{(a}P_{b)}^{k}\nonumber\\
&  +\left[  -\lambda\sigma_{kr}+\phi\mathit{S}_{ab}\right]  \left(  P_{(a}%
^{k}P_{b)}^{r}-\frac{1}{2}P_{ab}P^{kr}\right) \label{G.E.S.03}\\
&  +\frac{1}{2}\left[  \dot{\lambda}-\phi\dot{u}^{k}u_{k}-\frac{1}{3}%
\lambda\theta+2\lambda\sigma_{kr}n^{k}n^{r}+\phi\mathit{E}-\overset{\ast}%
{\phi}\right]  P_{ab}.\nonumber
\end{align}
In matrix form this result is written as follows:%
\begin{equation}
H_{ab}\rightarrow%
\begin{pmatrix}
\frac{1}{2}\left[  3\dot{\lambda}-3\phi\left(  \dot{u}^{k}n_{k}\right)
-\lambda\theta+\phi\mathit{E}+\overset{\ast}{\phi}\right]  & \frac{1}%
{2}\left[  -\overset{\ast}{\lambda}+\phi\left(  \sigma_{kr}n^{k}n^{r}+\frac
{1}{3}\theta\right)  +\lambda\left(  \dot{u}^{k}n_{k}\right)  -\dot{\phi
}\right]  & H_{13}\\
& \frac{1}{2}\left[  3\overset{\ast}{\phi}-4\lambda\sigma_{kr}n^{k}n^{r}%
-\frac{1}{3}\lambda\theta-\phi\dot{u}^{k}n_{k}-\phi\mathit{E}+\dot{\lambda
}\right]  & H_{23}\\
&  & H_{33}%
\end{pmatrix}
. \label{G.E.S.03a}%
\end{equation}

where:%
\begin{align*}
H_{13}  &  =\frac{1}{2}\left[  -\lambda_{,k}+\phi\left(  2\omega_{kr}%
n^{k}-N_{k}\right)  +\lambda\dot{u}_{k}\right]  P_{b}^{r}\\
H_{23}  &  =\frac{1}{2}\left[  \phi_{,k}-2\lambda\sigma_{kr}n^{r}+\phi
\overset{\ast}{n}_{k}\right]  P_{b}^{k}\\
H_{33}  &  =\left[  -\lambda\sigma_{kr}+\phi\mathit{S}_{ab}\right]  \left(
P_{(a}^{k}P_{b)}^{r}-\frac{1}{2}P_{ab}P^{kr}\right)  +\frac{1}{2}\left[
\dot{\lambda}-\phi\dot{u}^{k}u_{k}-\frac{1}{3}\lambda\theta+2\lambda
\sigma_{kr}n^{k}n^{r}+\phi\mathit{E}-\overset{\ast}{\phi}\right]  P_{ab}.
\end{align*}

These expressions can be used to answer the question: given $\xi^{a},n^{a}$
what type of collineation $\xi^{a}$ can be  and under what conditions? Below,
we express the kinematic quantities of the double congruence in terms of the
symmetry parameters $\psi,H_{ab}.$

\subsection{The kinematic implications of a collineation}

The following Theorem gives the kinematic implications of a collineation, that
is, expresses the kinematic quantities in terms of the collineation parameters
$\psi,H_{ab}$ in the 1+1+2 decomposition

\begin{theorem}
Suppose $\xi^{i}=\xi n^{i}$, $n^{i}n_{i}=\varepsilon\left(  n\right)  $ and
$s^{i}s_{i}=\varepsilon\left(  s\right)  $ two non-null normalized vector
fields such that $s^{i}n_{i}=\phi$ where $\varepsilon\left(  s\right)
,\varepsilon\left(  n\right)  =\pm1$ \ are the signatures of the vectors, the
sign $+\,\ $applying to a spacelike vector and the $-$ sign to a timelike
vector. Let $P_{ij}=P_{ij}\left(  s,n\right)  $ be the projective tensor
associated with the double congruence consisting of the vector fields $s^{i}$
and $n^{i}$. Then equation (\ref{DSoT.38}) is equivalent to the following
conditions:
\begin{align}
\mathit{S}_{ij} &  =\frac{1}{\xi}\left[  P_{i}^{k}P_{j}^{r}-\frac{1}{2}%
P^{kr}P_{ij}\right]  H_{kr}\label{ColDecom.2}\\
\dot{n}^{i}s_{i} &  =\frac{1}{\xi}\varepsilon\left(  s\right)  \psi
-\phi\left(  \ln\xi\right)  ^{\bullet}+\frac{1}{\xi}H_{11}\label{ColDecom.3}\\
n_{i}^{\ast} &  =\frac{\phi n_{i}-\varepsilon\left(  n\right)  s_{i}}{\phi
^{2}-\varepsilon\left(  s\right)  \varepsilon\left(  n\right)  }\left[
\frac{1}{\xi}\phi\psi-\varepsilon\left(  n\right)  \left(  \ln\xi\right)
^{\bullet}+\frac{2}{\xi}H_{21}-\frac{\varepsilon\left(  n\right)  }{\xi}\phi
H_{22}\right]  +\label{ColDecom.4}\\
&  +P_{i}^{j}\left[  -\varepsilon\left(  n\right)  \left(  \ln\xi\right)
_{;j}+\frac{2}{\xi}H_{j2}\right]  \,\\
\xi^{\ast} &  =\psi+\varepsilon(n)H_{22}\label{ColDecom.5}\\
\mathit{E} &  =\frac{2\psi}{\xi}+\frac{1}{\xi}P^{ij}H_{ij}\label{ColDecom.6}\\
N_{i} &  =-2\omega_{ij}n^{j}-P_{i}^{j}\phi\left[  \ln\left(  \xi\left\vert
\phi\right\vert \right)  \right]  _{;b}+\frac{2}{\xi}P_{i}^{j}H_{j1}%
-\varepsilon\left(  s\right)  \phi P_{i}^{j}\dot{s}_{j}\label{ColDecom.7}%
\end{align}
where $\dot{\upsilon}=\upsilon_{;k}s^{k},\upsilon^{\ast}=\upsilon_{;k}n^{k}$
and $N_{i}=\left(  n_{i;j}u^{j}-u_{i;j}n^{j}\right)  =\left(  \dot{n}%
_{i}-u_{i}^{\ast}\right)  $ is the Greenberg vector.
\end{theorem}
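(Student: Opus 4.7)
The plan is to expand $\xi_{(a;b)}$ directly from $\xi^a=\xi n^a$ and then match the 1+1+2 irreducible parts against $\psi g_{ab}+H_{ab}$. Since $\xi_a=\xi n_a$, I compute
\[
\xi_{(a;b)}=\xi_{,(b}\,n_{a)}+\xi\,n_{(a;b)},
\]
and then use the master identity $2\xi_{(a;b)}=L_\xi g_{ab}=2\psi g_{ab}+2H_{ab}$. The two sides are both symmetric second-rank tensors, so they have $10$ independent components in spacetime, matching precisely the six equations claimed in the statement (three scalars, two screen vectors, one screen traceless symmetric tensor). The strategy is therefore to project the identity onto the nine ``slots'' of the symmetric part of the decomposition matrix (\ref{d.two.7a}) and read off each kinematic quantity in terms of $\psi$ and the corresponding component of $H_{ab}$.

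First I would set up the building blocks. Write $\xi_{,c}$ via (\ref{d.two.3}), and write $n_{a;b}$ via the generalisation of (\ref{DSoT.39}) to $\phi\neq 0$; both decompositions express their coefficients in terms of $(\dot{\xi},\xi^{\ast},P_b^j\xi_{,j})$ and in terms of the kinematic quantities of the double congruence $(s,n)$, namely $\mathit{S}_{ij},\mathit{R}_{ij},\mathit{E},N^a,\dot n^a s_a,n^\ast_a$, together with the ``transverse'' scalars $\dot u^{k}n_{k}$ type terms adapted to $s^a$. For the right-hand side, expand $g_{ab}$ in the double-congruence basis via $g_{ab}=P_{ab}-\tfrac{1}{\Delta}\bigl[\varepsilon(n)s_{a}s_{b}+\varepsilon(s)n_{a}n_{b}-\phi(s_{a}n_{b}+n_{a}s_{b})\bigr]$, so that $\psi g_{ab}+H_{ab}$ is already presented slot by slot. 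Here the $H_{ij}$ symbols on the right of (\ref{ColDecom.2})–(\ref{ColDecom.7}) are to be read as the scalar/vector/tensor components of $H_{ab}$ in the basis $(s^a,n^a,P^a_{\cdot})$, e.g.\ $H_{22}=H_{ab}n^a n^b$, $H_{21}=H_{ab}n^a s^b$, $H_{j2}=H_{ab}P^a_j n^b$, etc.

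Next I would match slot by slot. The $nn$ scalar immediately gives (\ref{ColDecom.5}) because only $\xi n_{(a;b)}n^a n^b=\xi\,\xi^{\ast}/\xi$-type terms survive. The $ss$ scalar yields (\ref{ColDecom.3}) after isolating $\dot n^a s_a$ from the $\xi n_{(a;b)}s^a s^b$ contribution and the $\phi$-dependent piece of $\xi_{,(b}n_{a)}s^a s^b$. The screen trace $P^{ab}$ gives (\ref{ColDecom.6}), which is simply the screen-expansion equation since $P^{ab}\xi n_{(a;b)}=\xi\mathit{E}$ and $\psi g^{ab}P_{ab}=2\psi$. The traceless symmetric screen part of $\xi n_{(a;b)}$ is $\xi\mathit{S}_{ij}$, yielding (\ref{ColDecom.2}). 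Equation (\ref{ColDecom.4}) will come from the mixed $sn$ scalar slot combined with the $n_{(a}P_{b)}^{\cdot}$ vector slot: these have to be packaged so that $n^\ast_i$ is expressed after projecting the residual onto the $(s,n)$-plane and onto the screen separately, which explains the combined form $\frac{\phi n_i-\varepsilon(n)s_i}{\Delta}[\,\cdots\,]+P^j_i[\,\cdots\,]$.

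The main obstacle will be (\ref{ColDecom.7}), the Greenberg vector formula. By definition $N_i=P^j_i L_n s_j$, and it mixes the antisymmetric part $n_{[a;b]}$ (hence the vorticity $\omega_{ij}$ of $s^a$) with the symmetric data coming from $H_{ab}$. To extract it cleanly I would work with the $s_{(a}P_{b)}^{\cdot}$ projection of $\xi_{(a;b)}=\psi g_{ab}+H_{ab}$, write $n_{a;b}=n_{(a;b)}+n_{[a;b]}$, use the screen projection of the Ricci-type identity (more concretely: $P^i_a L_s n_i=P^i_a(\dot{n}_i-s^\ast_i)$ once one accepts that the antisymmetric part in the $s$--$P$ slot supplies $-2\omega_{ij}n^j$) and then express the remaining $\phi$-dependent piece through $\phi[\ln(\xi|\phi|)]_{,b}$. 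All the other quantities are then fixed, and substituting back into (\ref{DSoT.43}) with $\lambda=0,\phi\to\xi$ reproduces the identity, providing a consistency check. No step is conceptually deep; the difficulty is keeping the $\Delta$-weighted $(s,n)$-plane bookkeeping disciplined so that the screen and plane contributions separate in the form the statement asserts.
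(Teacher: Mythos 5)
Your strategy --- expanding $\xi_{(a;b)}=\xi_{,(b}n_{a)}+\xi\,n_{(a;b)}$ and projecting the generic symmetry identity $\xi_{(a;b)}=\psi g_{ab}+H_{ab}$ slot by slot onto the $(s,n,P)$ basis --- is correct and is precisely the computation the paper relies on (the theorem is stated there without an explicit proof, but the preceding subsection derives the inverse relations (\ref{G.E.S.02})--(\ref{G.E.S.03}) by exactly this kind of projection). The individual contractions check out as you describe: $n^an^b$ gives (\ref{ColDecom.5}), $s^as^b$ gives (\ref{ColDecom.3}), $P^{ab}$ gives (\ref{ColDecom.6}), the traceless screen projector gives (\ref{ColDecom.2}), the $s^an^b$ and $n^aP^b_i$ slots combine with the $\Delta$-weighted plane decomposition to give (\ref{ColDecom.4}), and the $s^aP^b_i$ slot together with the rewriting $n^bs_{b;a}=s^{\ast}_{a}-2s_{[a;b]}n^{b}$ and the 1+3 split of $s_{[a;b]}$ supplies the $-2\omega_{ij}n^{j}$ and $-\varepsilon(s)\phi P_i^j\dot{s}_j$ terms of (\ref{ColDecom.7}), exactly where you anticipated the only real bookkeeping difficulty.
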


\begin{corollary}
Let $\xi^{i}=\xi n^{i}$, $n^{i}n_{i}=\varepsilon\left(  n\right)  $ be a CKV
and $s^{i},$ $s^{i}s_{i}=\varepsilon\left(  s\right)  $ be a unit vector such
that $n^{i}s_{i}=\phi$. Then the conformal symmetry condition $L_{\xi}%
g_{ij}=2\psi g_{ij}$ is equivalent to the conditions:%
\begin{align}
\mathit{S}_{ij}  &  =0\label{ColDecom.20}\\
\dot{n}^{i}s_{i}  &  =\frac{1}{\xi}\left[  \varepsilon\left(  s\right)
\psi-\phi\left(  \ln\xi\right)  ^{\bullet}\right] \label{ColDecom.21}\\
n_{i}^{\ast}  &  =\frac{1}{\phi^{2}-\varepsilon\left(  s\right)
\varepsilon\left(  n\right)  }\left(  \phi n_{i}-\varepsilon\left(  n\right)
s_{i}\right)  \left[  \frac{1}{\xi}\phi\psi-\varepsilon\left(  n\right)
\left(  \ln\xi\right)  ^{\bullet}\right]  -\varepsilon\left(  n\right)
P_{i}^{j}\left(  \ln\xi\right)  _{;j}\,\label{ColDecom.22}\\
\xi^{\ast}  &  =\psi\label{ColDecom.23}\\
\mathit{E}  &  =\frac{2\psi}{\xi}\label{ColDecom.24}\\
N_{i}  &  =-2\omega_{ij}n^{j}-P_{i}^{j}\phi\left[  \ln\left(  \xi\left\vert
\phi\right\vert \right)  \right]  _{;b}-\varepsilon\left(  s\right)  \phi
P_{i}^{j}\dot{s}_{j}. \label{ColDecom.25}%
\end{align}

\end{corollary}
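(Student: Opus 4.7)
The plan is to obtain the Corollary as an immediate specialization of the preceding Theorem to the case in which the generic symmetry tensor $H_{ab}$ vanishes identically. The CKV condition $L_{\xi}g_{ij}=2\psi g_{ij}$, when compared to the generic identity $L_{\xi}g_{ij}=2\psi g_{ij}+2H_{ab}$ introduced in (\ref{G.E.S.01}), forces $H_{ab}=0$ in every frame, and in particular all irreducible components of $H_{ab}$ with respect to the double congruence $(s^{a},n^{a})$ vanish. So the first step is simply to record that under the hypothesis of the Corollary one has $H_{11}=H_{22}=H_{21}=0$ and $P_{i}^{j}H_{j1}=P_{i}^{j}H_{j2}=P^{ij}H_{ij}=0$, together with $\bigl[P_{i}^{k}P_{j}^{r}-\tfrac{1}{2}P^{kr}P_{ij}\bigr]H_{kr}=0$.

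The second step is to substitute $H_{ab}=0$ into each of the six formulae (\ref{ColDecom.2})--(\ref{ColDecom.7}) of the Theorem. Equation (\ref{ColDecom.2}) collapses to $\mathit{S}_{ij}=0$, yielding (\ref{ColDecom.20}); (\ref{ColDecom.3}) drops its $H_{11}/\xi$ term and produces (\ref{ColDecom.21}); in (\ref{ColDecom.4}) both the bracketed $H_{21},H_{22}$ contribution and the $P_{i}^{j}H_{j2}$ term disappear, leaving exactly (\ref{ColDecom.22}); (\ref{ColDecom.5}) loses its $\varepsilon(n)H_{22}$ term, giving (\ref{ColDecom.23}); (\ref{ColDecom.6}) keeps only its $2\psi/\xi$ piece, producing (\ref{ColDecom.24}); and finally in (\ref{ColDecom.7}) the term $(2/\xi)P_{i}^{j}H_{j1}$ drops out to yield (\ref{ColDecom.25}). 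All six implications are therefore purely algebraic consequences of setting $H_{ab}=0$ in the Theorem's identities.

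For the converse direction one has to argue that the six conditions (\ref{ColDecom.20})--(\ref{ColDecom.25}) together with the ansatz $\xi^{i}=\xi n^{i}$ are sufficient to recover $L_{\xi}g_{ij}=2\psi g_{ij}$. The natural route is to feed these relations back into the explicit 1+1+2 expression (\ref{G.E.S.03}) for $H_{ab}$: each of its nine irreducible blocks (the $u_{a}u_{b}$, $u_{(a}n_{b)}$, $n_{a}n_{b}$, $u_{(a}P_{b)}^{k}$, $n_{(a}P_{b)}^{k}$, traceless projected, and trace projected parts) is a linear combination of the data fixed by (\ref{ColDecom.20})--(\ref{ColDecom.25}), and substitution shows each block vanishes. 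Equivalently, since the Theorem is stated as an equivalence, the reverse implication is automatic once one checks that the system (\ref{ColDecom.20})--(\ref{ColDecom.25}) corresponds one-to-one with the vanishing of the irreducible parts of $H_{ab}$ in (\ref{G.E.S.03a}).

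The only mildly delicate point—and the step I would treat most carefully—is bookkeeping the projections: one needs to be sure that setting, e.g., $P_{i}^{j}H_{j1}=0$ together with $H_{11}=H_{21}=0$ is equivalent to the vanishing of all components of $H_{ab}$ in the $u_{(a}P_{b)}^{k}$ sector, and similarly for the $n_{(a}P_{b)}^{k}$ and the projected traceless sector. Once the matching between the six conditions and the nine irreducible blocks of (\ref{G.E.S.03a}) is laid out, the equivalence is immediate and no further computation is required.
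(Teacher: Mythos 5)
Your proposal is correct and matches the paper's (implicit) argument: the Corollary is obtained exactly as you describe, by setting $H_{ab}=0$ (hence all its $1+1+2$ components) in equations (\ref{ColDecom.2})--(\ref{ColDecom.7}) of the preceding Theorem, with the converse supplied by the Theorem's stated equivalence. The paper offers no further proof, so your substitution-and-bookkeeping argument is essentially the same route.
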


\begin{corollary}
In the special case $n^{i}n_{i}=1,s^{i}s_{i}=-1,n^{i}s_{i}=0$ \ we have that
$\xi^{i}=\xi n^{i}$ is a spacelike CKV iff:%
\begin{align}
\mathit{S}_{ij}  &  =0\\
\dot{n}^{i}s_{i}  &  =-\frac{\psi}{\xi}\\
n_{i}^{\ast}  &  =s_{i}\left(  \ln\xi\right)  ^{\bullet}-P_{i}^{j}\left(
\ln\xi\right)  _{;j}\,\\
\xi^{\ast}  &  =\psi\\
\mathit{E}  &  =\frac{2\psi}{\xi}\\
N_{i}  &  =-2\omega_{ij}n^{j}.
\end{align}

\end{corollary}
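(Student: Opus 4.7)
The plan is to obtain this result by direct specialization of the preceding Corollary (equivalently, of the Theorem above it) to the particular signature/orthogonality data $\varepsilon(n)=+1$, $\varepsilon(s)=-1$, $\phi=s^{i}n_{i}=0$. Since $\xi^{a}=\xi n^{a}$ is assumed to be a CKV, the generic-collineation tensor $H_{ab}$ in the Theorem vanishes and only the scalar $\psi$ survives, so every item of the list reduces to what is obtained by plugging these four data ($\varepsilon(n)=1$, $\varepsilon(s)=-1$, $\phi=0$, $H_{ab}\equiv 0$) into the corresponding formulas (\ref{ColDecom.2})--(\ref{ColDecom.7}) of the Theorem.

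Concretely, I would proceed item by item. The screen-shear identity $\mathcal{S}_{ij}=0$ is immediate from (\ref{ColDecom.2}) because the RHS is linear in $H_{kr}$. Substituting $\varepsilon(s)=-1$, $\phi=0$, $H_{11}=0$ into (\ref{ColDecom.3}) gives $\dot n^{i}s_{i}=-\psi/\xi$. In (\ref{ColDecom.4}) the prefactor simplifies because $\phi n_{i}-\varepsilon(n)s_{i}=-s_{i}$ and $\phi^{2}-\varepsilon(s)\varepsilon(n)=1$; dropping the $H$-terms and the $\phi\psi$-piece leaves $n_{i}^{\ast}=-s_{i}\,[-(\ln\xi)^{\bullet}]-P_{i}^{j}(\ln\xi)_{;j}=s_{i}(\ln\xi)^{\bullet}-P_{i}^{j}(\ln\xi)_{;j}$. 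Equations (\ref{ColDecom.5}) and (\ref{ColDecom.6}) reduce to $\xi^{\ast}=\psi$ and $\mathcal{E}=2\psi/\xi$ simply by setting $H_{22}=0$ and $P^{ij}H_{ij}=0$.

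The Greenberg-vector relation (\ref{ColDecom.7}) is the one item that deserves a moment's care, because its general form contains the expression $\phi[\ln(\xi|\phi|)]_{;b}$, which is formally singular at $\phi=0$. The clean way to dispose of this is to expand $\phi[\ln(\xi|\phi|)]_{;b}=\phi(\ln\xi)_{;b}+\phi_{;b}$; under the standing hypothesis $\phi\equiv 0$ the first summand vanishes and, since $\phi$ is identically zero, $\phi_{;b}=0$ as well. Combined with $\varepsilon(s)\phi P_{i}^{j}\dot{s}_{j}=0$ and $H_{j1}=0$, this collapses (\ref{ColDecom.7}) to $N_{i}=-2\omega_{ij}n^{j}$, as claimed.

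Thus all six equivalences follow from the single logical step ``apply the Theorem with $H_{ab}=0$, $\varepsilon(n)=+1$, $\varepsilon(s)=-1$, $\phi=0$''. The only genuine subtlety, and the step I would be most careful to justify explicitly in the write-up, is the handling of the $\phi\ln|\phi|$ term; once that is addressed by the expansion above (or, equivalently, by rederiving the Greenberg vector directly from (\ref{DSoT.38a}) in the orthogonal case rather than specializing), no further computation is required and the Corollary is proved.
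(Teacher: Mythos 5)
Your proposal is correct and follows exactly the route the paper intends: the corollary is obtained by substituting $\varepsilon(n)=+1$, $\varepsilon(s)=-1$, $\phi=0$ and $H_{ab}=0$ into the Theorem's conditions (equivalently, into the first Corollary), which is precisely how the paper presents it, stating the result without further proof as such a specialization. Your extra care with the formally singular term $\phi\left[\ln\left(\xi\left\vert\phi\right\vert\right)\right]_{;b}$, resolved by expanding it as $\phi(\ln\xi)_{;b}+\phi_{;b}$ and invoking $\phi\equiv 0$, is sound and in fact tidier than what the paper makes explicit.
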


\section{The Lie derivative of $R_{ab}$ wrt a collineation $\xi_{a}$.}

In this section we compute the Lie derivative of the Einstein equations in the
1+1+2 decomposition. To do that we write these equations in the form:
\begin{equation}
R_{ab}=T_{ab}-\frac{1}{2}g_{ab}T+\Lambda g_{ab}.\label{LR.01}%
\end{equation}
The Lie derivative of Einstein field equations wrt a general vector
field\footnote{A vector $\xi^{a}$ can always be written as
$\xi^{a}=-\lambda u^{a}+\phi n^{a}$ where $n^{a} // h^{a}_{b}\xi^{b}$.} $\xi^{a}$ is:%
\begin{equation}
L_{\xi}R_{ab}=L_{\xi}\left[  T_{ab}-\frac{1}{2}g_{ab}T+\Lambda g_{ab}\right]
.\label{LR.05a}%
\end{equation}

The lhs $L_{\xi}R_{ab}$ is of a pure geometric nature and can be computed in
terms of the generic symmetry $L_{\xi}g_{ab}$ or,  in
terms of the geometric parameters $\psi,H_{ab.}$ If we do that then we shall
have expressed the Lie derivative of energy momentum tensor in terms of
geometric variables. Because the generic symmetry describes any symmetry, this
means that by doing this we have solved Einstein field equations for the
symmetry defined by the symmetry parameters $\psi,H_{ab.}$

We consider next the dynamic variables $\mu,p,q^{a},\pi_{ab}$ defined by the
1+3 decomposition of the energy momentum tensor wrt any observers $u^{a}. $ As
we have shown (see equation (\ref{TE.5}) ) the 1+3 decomposition of the energy
momentum tensor wrt the 4-velocity $u^{a}$ is given by the relation:%
\begin{align}
T_{ab}  &  =\mu u_{a}u_{b}+ph_{ab}+q_{a}u_{b}+q_{b}u_{a}+\pi_{ab}%
\label{LR.02}\\
T  &  =g_{ab}T^{ab}=3p-\mu. \label{LR.03}%
\end{align}
Replacing in equation (\ref{LR.01}) we find:%
\begin{align}
R_{ab}  &  =\left(  \mu+p\right)  u_{a}u_{b}+\frac{1}{2}\left(  \mu
-p+2\Lambda\right)  g_{ab}+q_{a}u_{b}+q_{b}u_{a}+\pi_{ab}\label{LR.04}\\
&  \text{or}\nonumber\\
R_{ab}  &  =\left(  \frac{1}{2}\mu+\frac{3}{2}p-\Lambda\right)  u_{a}%
u_{b}+\left(  \frac{1}{2}\mu-\frac{1}{2}p+\Lambda\right)  h_{ab}+q_{a}%
u_{b}+q_{b}u_{a}+\pi_{ab}. \label{LR.05}%
\end{align}

In the 1+1+2 decomposition wrt a double congruence $u^{a},n^{a}$ we have shown
(see equation (\ref{dem.1})) that:
\begin{equation}
T_{ab}=\mu u_{a}u_{b}+vn_{a}u_{b}+\nu u_{a}n_{b}+\left(  p+\gamma\right)
n_{a}n_{b}+Q_{b}u_{a}+P_{b}n_{a}+Q_{a}u_{b}+P_{a}n_{b}+D_{ab}+\left(
p-\frac{1}{2}\gamma\right)  P_{ab}\label{LR.05b}%
\end{equation}
where:%
\begin{align}
q_{a}u^{a} &  =0,\pi_{ab}u^{b}=0,\pi_{ab}=\pi_{ba},\pi_{a}^{a}=0\nonumber\\
q_{a} &  =vn_{a}+Q_{a}\label{LR.05c}\\
\pi_{ab} &  =\gamma\left(  n_{a}n_{b}-\frac{1}{2}P_{ab}\right)  +2P_{(a}%
n_{b)}+D_{ab}.\label{LR.05d}%
\end{align}
Therefore for this decomposition the Ricci tensor reads:%
\begin{align}
R_{ab} &  =\left(  \frac{1}{2}\mu+\frac{3}{2}p-\Lambda\right)  u_{a}%
u_{b}+vn_{a}u_{b}+\nu u_{a}n_{b}+\left(  \frac{1}{2}\mu-\frac{1}{2}%
p+\gamma+\Lambda\right)  n_{a}n_{b}+\nonumber\\
&  +Q_{b}u_{a}+P_{b}n_{a}+Q_{a}u_{b}+P_{a}n_{b}+D_{ab}+\left(  \frac{1}{2}%
\mu-\frac{1}{2}p-\frac{1}{2}\gamma+\Lambda\right)  P_{ab}.\label{LR.05e}%
\end{align}
The conclusion is that by working in this manner we will express the field
equations as Lie derivatives of the dynamic variables either in the 1+3 or in
the 1+1+2 decomposition in terms of the geometric parameters $\psi,H_{ab.}%
$defining the symmetry vector $\xi^{a}$.

Before we compute the Lie derivative $L_{\xi}R_{ab}$ we note that any vector
field $\xi^{a}$ can be written as $\xi^{a}=-\lambda u^{a}+\phi n^{a}$ (where
$n^{a}||h_{b}^{a}\xi^{b})$ therefore by the linearity of the Lie derivative we
have:%
\begin{align}
L_{\xi}R_{ab} &  =L_{-\lambda u}R_{ab}+L_{\phi n}R_{ab}\nonumber\\
&  =\left[  R_{ab;c}\left[  -\lambda u^{c}\right]  +R_{cb}\left[  -\lambda
u^{c}\right]  _{;a}+R_{ac}\left[  -\lambda u^{c}\right]  _{;b}\right]
+\left[  R_{ab;c}\left[  \phi n^{c}\right]  +R_{cb}\left[  \phi n^{c}\right]
_{;a}+R_{ac}\left[  \phi n^{c}\right]  _{;b}\right]  \nonumber\\
&  =\left[  -\lambda\dot{R}_{ab}+R_{cb}\left[  -\lambda u^{c}\right]
_{;a}+R_{ac}\left[  -\lambda u^{c}\right]  _{;b}\right]  +\left[  \phi
\overset{\ast}{R}_{ab}+R_{cb}\left[  \phi n^{c}\right]  _{;a}+R_{ac}\left[
\phi n^{c}\right]  _{;b}\right]  .\label{LR.07}%
\end{align}
This implies that we should brake the calculation of $L_{\xi}R_{ab}$ in three
steps. First we calculate the Lie derivative wrt a spacelike vector $\xi
^{a}=\xi n^{a}$, then wrt a timelike vector $\xi^{a}=\xi u^{a}$ and finally
wrt the general vector $\xi^{a}=-\lambda u^{a}+\phi n^{a}.\qquad$

Before we start the calculation we recall the relation:%
\begin{equation}
L_{\xi}R_{ab}=R_{ab;c}\xi^{c}+R_{cb}\xi_{;a}^{c}+R_{ac}\xi_{;a}^{c}%
=\mathring{R}_{ab}+2R_{c(a}\xi_{;b)}^{c} \label{LR.06}%
\end{equation}
where $\mathring{R}_{ab}=R_{ab;c}\xi^{c}$and a $\circ$ over a symbol means
covariant differentiation wrt $\xi^{a}.$

\subsection{1+1+2 decomposition of the terms $\mathring{R}_{ab},~2R_{c(a}%
\xi_{;b)}^{c}$}

For the term $\mathring{R}_{ab}$ we have:%
\begin{align*}
R_{ab} &  =\left(  \frac{1}{2}\mu+\frac{3}{2}p-\Lambda\right)  u_{a}%
u_{b}+vn_{a}u_{b}+\nu u_{a}n_{b}+\left(  \frac{1}{2}\mu-\frac{1}{2}%
p+\gamma+\Lambda\right)  n_{a}n_{b}+\\
&  +Q_{b}u_{a}+P_{b}n_{a}+Q_{a}u_{b}+P_{a}n_{b}+D_{ab}+\left(  \frac{1}{2}%
\mu-\frac{1}{2}p-\frac{1}{2}\gamma+\Lambda\right)  P_{ab}\Rightarrow\\
& \\
\mathring{R}_{ab} &  =\left(  \frac{1}{2}\mu+\frac{3}{2}p-\Lambda\right)
^{\circ}u_{a}u_{b}+\left(  \frac{1}{2}\mu+\frac{3}{2}p-\Lambda\right)
2\mathring{u}_{(a}u_{b)}+2\nu\left[  \mathring{n}_{(a}u_{b)}+n_{(a}%
\mathring{u}_{b)}\right]  \\
&  +\left(  \frac{1}{2}\mu-\frac{1}{2}p+\gamma+\Lambda\right)  ^{\circ}%
n_{a}n_{b}+\left(  \frac{1}{2}\mu-\frac{1}{2}p+\gamma+\Lambda\right)
2\mathring{n}_{(a}n_{b)}+\left(  \frac{1}{2}\mu-\frac{1}{2}p-\frac{1}{2}%
\gamma+\Lambda\right)  ^{\circ}P_{ab}\\
&  +\left(  \frac{1}{2}\mu-\frac{1}{2}p-\frac{1}{2}\gamma+\Lambda\right)
2\mathring{u}_{(a}u_{b)}-\left(  \frac{1}{2}\mu-\frac{1}{2}p-\frac{1}{2}%
\gamma+\Lambda\right)  2\mathring{n}_{(a}n_{b)}%
\end{align*}
finally:%
\begin{align*}
\mathring{R}_{ab} &  =\frac{1}{2}\left(  \mu+3p\right)  ^{\circ}u_{a}%
u_{b}+2\left(  \mu+p-\frac{1}{2}\gamma\right)  \mathring{u}_{(a}u_{b)}+\left(
\frac{1}{2}\mu-\frac{1}{2}p+\gamma\right)  ^{\circ}n_{a}n_{b}+2\frac{3}%
{2}\gamma\mathring{n}_{(a}n_{b)}+2\mathring{\nu}n_{(a}u_{b)}+\\
&  +2\nu\left(  \mathring{n}_{(a}u_{b)}+n_{(a}\mathring{u}_{b)}\right)
+2\mathring{Q}_{(b}u_{a)}+2Q_{(b}\mathring{u}_{a)}+2\mathring{P}_{(b}%
n_{a)}+2P_{(b}\mathring{n}_{a)}+\mathring{D}_{ab}+\left(  \frac{1}{2}\mu
-\frac{1}{2}p-\frac{1}{2}\gamma\right)  ^{\circ}P_{ab}%
\end{align*}
From this  we compute the quantities%
\begin{align*}
u_{a}u_{b} &  :\mathring{R}_{ab}u^{a}u^{b}=\frac{1}{2}\left(  \mu+3p\right)
^{\circ}+2\nu\mathring{u}_{c}n^{c}-2\mathring{Q}_{c}u^{c}\\
u_{a}n_{b} &  :\mathring{R}_{ab}u^{a}n^{b}=-\left(  \mu+p+\gamma\right)
\mathring{u}_{c}n^{c}-\mathring{\nu}-\mathring{Q}_{c}n^{c}-P^{c}\mathring
{u}_{c}\\
u_{a}P_{b}^{c} &  :\mathring{R}_{ac}u^{a}P_{b}^{c}=-\left(  \mu+p-\frac{1}%
{2}\gamma\right)  \mathring{u}_{d}P_{c}^{d}-\left(  \nu\mathring{n}%
_{d}+\mathring{Q}_{d}\right)  P_{c}^{d}-P_{d}\left(  \mathring{u}_{k}%
n^{k}\right)  P_{c}^{d}-D_{kd}\mathring{u}^{k}P_{c}^{d}\\
n_{a}n_{b} &  :\mathring{R}_{ab}n^{a}n^{b}=\left(  \frac{1}{2}\mu-\frac{1}%
{2}p+\gamma\right)  ^{\circ}+2\nu\mathring{u}_{c}n^{c}-2P_{k}\mathring{n}%
^{k}\\
n_{a}P_{b}^{c} &  :\mathring{R}_{ac}n^{a}P_{b}^{c}=\nu\mathring{u}_{d}%
P_{c}^{d}+\left(  \mathring{u}_{k}n^{k}\right)  Q_{d}P_{c}^{d}+\mathring
{P}_{d}P_{c}^{d}+\frac{3}{2}\gamma\mathring{n}_{c}-D_{dc}\mathring{n}^{d}\\
P_{a}^{c}P_{b}^{d} &  :\mathring{R}_{cd}P_{a}^{c}P_{b}^{d}=\frac{1}{2}\left(
\mu-p-\gamma\right)  ^{\circ}P_{cd}+2Q_{(k}\mathring{u}_{r)}P_{c}^{k}P_{d}%
^{r}+2P_{(c}\mathring{n}_{d)}+\mathring{D}_{kr}P_{c}^{k}P_{d}^{r}%
\end{align*}
which lead to the 1+1+2 decomposition%
\begin{align*}
\overset{\circ}{R}_{ab} &  =\left[  \frac{1}{2}\left(  \mu+3p\right)  ^{\circ
}+2\nu\mathring{u}_{c}n^{c}-2\mathring{Q}_{c}u^{c}\right]  u_{a}u_{b}-2\left[
-\left(  \mu+p+\gamma\right)  \mathring{u}_{c}n^{c}-\mathring{\nu}%
-\mathring{Q}_{c}n^{c}-P^{c}\mathring{u}_{c}\right]  u_{(a}n_{b)}+\\
&  -2\left[  -\left(  \mu+p-\frac{1}{2}\gamma\right)  \mathring{u}_{d}%
P_{c}^{d}-\left(  \nu\mathring{n}_{d}+\mathring{Q}_{d}\right)  P_{c}^{d}%
-P_{d}\left(  \mathring{u}_{k}n^{k}\right)  P_{c}^{d}-D_{kd}\mathring{u}%
^{k}P_{c}^{d}\right]  u_{(a}P_{b)}^{c}+\\
&  +\left[  \left(  \frac{1}{2}\mu-\frac{1}{2}p+\gamma\right)  ^{\circ}%
+2\nu\mathring{u}_{c}n^{c}-2P_{k}\mathring{n}^{k}\right]  n_{a}n_{b}+\\
&  +2\left[  \nu\mathring{u}_{d}P_{c}^{d}+\left(  \mathring{u}_{k}%
n^{k}\right)  Q_{d}P_{c}^{d}+\mathring{P}_{d}P_{c}^{d}+\frac{3}{2}%
\gamma\mathring{n}_{c}-D_{dc}\mathring{n}^{d}\right]  n_{(a}P_{b)}^{c}+\\
&  +\frac{1}{2}\left(  \mu-p-\gamma\right)  ^{\circ}P_{cd}+2Q_{(k}\mathring
{u}_{r)}P_{c}^{k}P_{d}^{r}+2P_{(c}\mathring{n}_{d)}+\mathring{D}_{kr}P_{c}%
^{k}P_{d}^{r}.
\end{align*}

For the term $R_{ac}\xi_{;b}^{c}$ we have
\begin{align}
R_{ac}\xi_{;b}^{c} &  =\left(  \frac{1}{2}\mu+\frac{3}{2}p-\Lambda\right)
u_{a}u_{c}\xi_{;b}^{c}+vn_{a}u_{c}\xi_{;b}^{c}+\nu u_{a}n_{c}\xi_{;b}%
^{c}+\left(  \frac{1}{2}\mu-\frac{1}{2}p+\gamma+\Lambda\right)  n_{a}n_{c}%
\xi_{;b}^{c}+\nonumber\\
&  +Q_{c}\xi_{;b}^{c}u_{a}+P_{c}\xi_{;b}^{c}n_{a}+Q_{a}u_{c}\xi_{;b}^{c}%
+P_{a}n_{c}\xi_{;b}^{c}+D_{ac}\xi_{;b}^{c}+\left(  \frac{1}{2}\mu-\frac{1}%
{2}p-\frac{1}{2}\gamma+\Lambda\right)  P_{ac}\xi_{;b}^{c}.%
\end{align}
We compute the contractions:%
\begin{align*}
u_{a}u_{b} &  :R_{ac}\xi_{;b}^{c}=-\frac{1}{2}\left(  \mu+3p-2\Lambda\right)
\left(  \dot{\xi}_{c}u^{c}\right)  -\nu n_{c}\dot{\xi}^{c}-Q_{c}\dot{\xi}%
^{c}\\
u_{a}n_{b} &  :R_{ac}\xi_{;b}^{c}=-\frac{1}{2}\left(  \mu+3p-2\Lambda\right)
u_{c}\overset{\ast}{\xi}^{c}-\nu n_{c}\overset{\ast}{\xi}^{c}-Q_{c}%
\overset{\ast}{\xi}^{c}\\
n_{a}u_{b} &  :R_{ac}\xi_{;b}^{c}=\frac{1}{2}\left(  \mu-p+2\gamma
+2\Lambda\right)  \dot{\xi}_{c}n^{c}+\nu\left(  \dot{\xi}^{c}u_{c}\right)
+P_{c}\dot{\xi}^{c}\\
u_{a}P_{b}^{c} &  :R_{ac}\xi_{;b}^{c}=\frac{1}{2}\left(  \mu+3p-2\Lambda
\right)  u_{k}\xi_{;d}^{k}P_{c}^{d}-\nu n_{k}\xi_{;d}^{k}P_{c}^{d}+Q_{k}%
\xi_{;d}^{k}P_{c}^{d}\\
P_{a}^{c}u_{b} &  :R_{ac}\xi_{;b}^{c}=\frac{1}{2}\left(  \mu-p-\gamma
+2\Lambda\right)  \dot{\xi}_{d}P_{c}^{d}+Q_{d}\left(  \dot{\xi}^{k}%
u_{k}\right)  P_{c}^{d}+P_{d}\left(  n_{k}\dot{\xi}^{k}\right)  P_{c}%
^{d}+D_{dk}\dot{\xi}^{k}P_{c}^{d}\\
n_{a}n_{b} &  :R_{ac}\xi_{;b}^{c}=\frac{1}{2}\left(  \mu-p+2\gamma
+2\Lambda\right)  \left(  \overset{\ast}{\xi}_{c}n^{c}\right)  +\nu\left(
u^{c}\overset{\ast}{\xi}_{c}\right)  +P^{c}\overset{\ast}{\xi}_{c}\\
n_{a}P_{b}^{c} &  :R_{ac}\xi_{;b}^{c}=\frac{1}{2}\left(  \mu-p+2\gamma
+2\Lambda\right)  n_{k}\xi_{;d}^{k}P_{c}^{d}+\nu u_{c}\xi_{;d}^{c}P_{e}%
^{d}+P_{c}\xi_{;d}^{c}P_{e}^{d}\\
P_{a}^{c}n_{b} &  :R_{ac}\xi_{;b}^{c}=\frac{1}{2}\left(  \mu-p-\gamma
+2\Lambda\right)  \overset{\ast}{\xi}_{c}+Q_{c}\left(  u^{k}\overset{\ast}%
{\xi}_{k}\right)  +P_{d}n_{k}\overset{\ast}{\xi}^{k}P_{c}^{d}+D_{dk}%
\overset{\ast}{\xi}^{k}P_{c}^{d}\\
P_{a}^{c}P_{b}^{d} &  :R_{ac}\xi_{;b}^{c}=\frac{1}{2}\left(  \mu
-p-\gamma+2\Lambda\right)  \xi_{k;r}P_{c}^{k}P_{d}^{r}+Q_{c}\left(  u_{k}%
\xi_{;d}^{k}\right)  +P_{c}\left(  n_{k}\xi_{;d}^{k}\right)  +D_{ck}\xi
_{;d}^{k}%
\end{align*}
from which follows:%
\begin{align}
R_{ac}\xi_{;b}^{c} &  =\left[  -\frac{1}{2}\left(  \mu+3p-2\Lambda\right)
\left(  \dot{\xi}_{c}u^{c}\right)  -\nu n_{c}\dot{\xi}^{c}-Q_{c}\dot{\xi}%
^{c}\right]  u_{a}u_{b}+\nonumber\\
&  -\left[  -\frac{1}{2}\left(  \mu+3p-2\Lambda\right)  u_{c}\overset{\ast
}{\xi}^{c}-\nu n_{c}\overset{\ast}{\xi}^{c}-Q_{c}\overset{\ast}{\xi}%
^{c}\right]  u_{a}n_{b}+\nonumber\\
&  -\left[  \frac{1}{2}\left(  \mu-p+2\gamma+2\Lambda\right)  \dot{\xi}%
_{c}n^{c}+\nu\left(  \dot{\xi}^{c}u_{c}\right)  +P_{c}\dot{\xi}^{c}\right]
u_{b}n_{a}+\nonumber\\
&  -\left[  \frac{1}{2}\left(  \mu+3p-2\Lambda\right)  u_{k}\xi_{;d}^{k}%
P_{c}^{d}-\nu n_{k}\xi_{;d}^{k}P_{c}^{d}+Q_{k}\xi_{;d}^{k}P_{c}^{d}\right]
u_{a}P_{b}^{c}+\nonumber\\
&  -\left[  \frac{1}{2}\left(  \mu-p-\gamma+2\Lambda\right)  \dot{\xi}%
_{d}P_{c}^{d}+Q_{d}\left(  \dot{\xi}^{k}u_{k}\right)  P_{c}^{d}+P_{d}\left(
n_{k}\dot{\xi}^{k}\right)  P_{c}^{d}+D_{dk}\dot{\xi}^{k}P_{c}^{d}\right]
u_{b}P_{a}^{c}+\nonumber\\
&  +\left[  \frac{1}{2}\left(  \mu-p+2\gamma+2\Lambda\right)  \left(
\overset{\ast}{\xi}_{c}n^{c}\right)  +\nu\left(  u^{c}\overset{\ast}{\xi}%
_{c}\right)  +P^{c}\overset{\ast}{\xi}_{c}\right]  n_{a}n_{b}+\nonumber\\
&  +\left[  \frac{1}{2}\left(  \mu-p+2\gamma+2\Lambda\right)  n_{k}\xi
_{;d}^{k}P_{c}^{d}+\nu u_{c}\xi_{;d}^{c}P_{e}^{d}+P_{c}\xi_{;d}^{c}P_{e}%
^{d}\right]  n_{a}P_{b}^{c}+\nonumber\\
&  +\left[  \frac{1}{2}\left(  \mu-p-\gamma+2\Lambda\right)  \overset{\ast
}{\xi}_{d}P_{c}^{d}+Q_{c}\left(  u^{k}\overset{\ast}{\xi}_{k}\right)
+P_{d}n_{k}\overset{\ast}{\xi}^{k}P_{c}^{d}+D_{dk}\overset{\ast}{\xi}^{k}%
P_{c}^{d}\right]  n_{b}P_{a}^{c}+\nonumber\\
&  +\left[  \frac{1}{2}\left(  \mu-p-\gamma+2\Lambda\right)  \xi_{k;r}%
P_{c}^{k}P_{d}^{r}+Q_{c}\left(  u_{k}\xi_{;d}^{k}\right)  +P_{c}\left(
n_{k}\xi_{;d}^{k}\right)  +D_{ck}\xi_{;d}^{k}\right]  P_{a}^{c}P_{b}^{d}.
\end{align}
Calculation of the term $R_{cb}\xi_{;a}^{c}.$

We have:%
\begin{align}
R_{cb}\xi_{;a}^{c}  &  =\left[  -\frac{1}{2}\left(  \mu+3p-2\Lambda\right)
\left(  \dot{\xi}_{c}u^{c}\right)  -\nu n_{c}\dot{\xi}^{c}-Q_{c}\dot{\xi}%
^{c}\right]  u_{a}u_{b}+\nonumber\\
&  -\left[  \frac{1}{2}\left(  \mu-p+2\gamma+2\Lambda\right)  \dot{\xi}%
_{c}n^{c}+\nu\left(  \dot{\xi}^{c}u_{c}\right)  +P_{c}\dot{\xi}^{c}\right]
u_{a}n_{b}+\nonumber\\
&  -\left[  -\frac{1}{2}\left(  \mu+3p-2\Lambda\right)  u_{c}\overset{\ast
}{\xi}^{c}-\nu n_{c}\overset{\ast}{\xi}^{c}-Q_{c}\overset{\ast}{\xi}%
^{c}\right]  u_{b}n_{a}+\nonumber\\
&  -\left[  \frac{1}{2}\left(  \mu-p-\gamma+2\Lambda\right)  \dot{\xi}%
_{d}P_{c}^{d}+Q_{d}\left(  \dot{\xi}^{k}u_{k}\right)  P_{c}^{d}+P_{d}\left(
n_{k}\dot{\xi}^{k}\right)  P_{c}^{d}+D_{dk}\dot{\xi}^{k}P_{c}^{d}\right]
u_{a}P_{b}^{c}+\nonumber\\
&  -\left[  \frac{1}{2}\left(  \mu+3p-2\Lambda\right)  u_{k}\xi_{;d}^{k}%
P_{c}^{d}-\nu n_{k}\xi_{;d}^{k}P_{c}^{d}+Q_{k}\xi_{;d}^{k}P_{c}^{d}\right]
u_{b}P_{a}^{c}+\nonumber\\
&  +\left[  \frac{1}{2}\left(  \mu-p+2\gamma+2\Lambda\right)  \left(
\overset{\ast}{\xi}_{c}n^{c}\right)  +\nu\left(  u^{c}\overset{\ast}{\xi}%
_{c}\right)  +P^{c}\overset{\ast}{\xi}_{c}\right]  n_{a}n_{b}+\nonumber\\
&  +\left[  \frac{1}{2}\left(  \mu-p-\gamma+2\Lambda\right)  \overset{\ast
}{\xi}_{d}P_{c}^{d}+Q_{c}\left(  u^{k}\overset{\ast}{\xi}_{k}\right)
+P_{d}n_{k}\overset{\ast}{\xi}^{k}P_{c}^{d}+D_{dk}\overset{\ast}{\xi}^{k}%
P_{c}^{d}\right]  n_{a}P_{b}^{c}+\nonumber\\
&  +\left[  \frac{1}{2}\left(  \mu-p+2\gamma+2\Lambda\right)  n_{k}\xi
_{;d}^{k}P_{c}^{d}+\nu u_{c}\xi_{;d}^{c}P_{e}^{d}+P_{c}\xi_{;d}^{c}P_{e}%
^{d}\right]  n_{b}P_{a}^{c}+\nonumber\\
&  +\left[  \frac{1}{2}\left(  \mu-p-\gamma+2\Lambda\right)  \xi_{k;r}%
P_{c}^{k}P_{d}^{r}+Q_{c}\left(  u_{k}\xi_{;d}^{k}\right)  +P_{c}\left(
n_{k}\xi_{;d}^{k}\right)  +D_{ck}\xi_{;d}^{k}\right]  P_{a}^{c}P_{b}^{d}%
\end{align}

Adding we find%
\begin{align}
R_{c(a}\xi_{;b)}^{c} &  =\left[  -\left(  \mu+3p-2\Lambda\right)  \left(
\dot{\xi}_{c}u^{c}\right)  -2\nu n_{c}\dot{\xi}^{c}-2Q_{c}\dot{\xi}%
^{c}\right]  u_{a}u_{b}+\nonumber\\
&  -2\left[
\begin{array}
[c]{c}%
\left(  \frac{1}{2}\left(  \mu-p+2\gamma+2\Lambda\right)  \dot{\xi}_{c}%
n^{c}+\nu\left(  \dot{\xi}^{c}u_{c}\right)  +P_{c}\dot{\xi}^{c}\right)  +\\
\\
+\left(  -\frac{1}{2}\left(  \mu+3p-2\Lambda\right)  u_{c}\overset{\ast}{\xi
}^{c}-\nu n_{c}\overset{\ast}{\xi}^{c}-Q_{c}\overset{\ast}{\xi}^{c}\right)
\end{array}
\right]  u_{(a}n_{b)}+\nonumber\\
&  -2\left[
\begin{array}
[c]{c}%
\left(  \frac{1}{2}\left(  \mu+3p-2\Lambda\right)  u_{k}\xi_{;d}^{k}P_{c}%
^{d}-\nu n_{k}\xi_{;d}^{k}P_{c}^{d}+Q_{k}\xi_{;d}^{k}P_{c}^{d}\right)  +\\
\\
+\left(
\begin{array}
[c]{c}%
\frac{1}{2}\left(  \mu-p-\gamma+2\Lambda\right)  \dot{\xi}_{d}P_{c}^{d}%
+Q_{d}\left(  \dot{\xi}^{k}u_{k}\right)  P_{c}^{d}+\\
\\
+P_{d}\left(  n_{k}\dot{\xi}^{k}\right)  P_{c}^{d}+D_{dk}\dot{\xi}^{k}%
P_{c}^{d}%
\end{array}
\right)
\end{array}
\right]  u_{(a}P_{b)}^{c}+\\
&  +\left[  \left(  \mu-p+2\gamma+2\Lambda\right)  \left(  \overset{\ast}{\xi
}_{c}n^{c}\right)  +2\nu\left(  u^{c}\overset{\ast}{\xi}_{c}\right)
+2P^{c}\overset{\ast}{\xi}_{c}\right]  n_{a}n_{b}+\nonumber\\
&  +2\left[
\begin{array}
[c]{c}%
\left(  \frac{1}{2}\left(  \mu-p+\gamma+2\Lambda\right)  n_{k}\xi_{;d}%
^{k}P_{c}^{d}+\nu u_{c}\xi_{;d}^{c}P_{e}^{d}+P_{c}\xi_{;d}^{c}P_{e}%
^{d}\right)  +\\
\\
+\left(
\begin{array}
[c]{c}%
\frac{1}{2}\left(  \mu-p-\gamma+2\Lambda\right)  \overset{\ast}{\xi}_{d}%
P_{c}^{d}+\\
\\
+Q_{c}\left(  u^{k}\overset{\ast}{\xi}_{k}\right)  +P_{d}n_{k}\overset{\ast
}{\xi}^{k}P_{c}^{d}+D_{dk}\overset{\ast}{\xi}^{k}P_{c}^{d}%
\end{array}
\right)
\end{array}
\right]  n_{(a}P_{b)}^{c}+\nonumber\\
&  +\left[
\begin{array}
[c]{c}%
\left(  \mu-p-\gamma+2\Lambda\right)  \xi_{k;r}P_{c}^{k}P_{d}^{r}+\\
+2Q_{c}\left(  u_{k}\xi_{;d}^{k}\right)  +2P_{c}\left(  n_{k}\xi_{;d}%
^{k}\right)  +2D_{ck}\xi_{;d}^{k}%
\end{array}
\right]  P_{a}^{c}P_{b}^{d}\nonumber
\end{align}
Finally we write for the quantity $L_{\xi}R_{ab}$ in the 1+1+2 decomposition:%
\begin{align}
L_{\xi}R_{ab} &  =\left[
\begin{array}
[c]{c}%
\frac{1}{2}\left(  \mu+3p\right)  ^{\circ}+2\nu\mathring{u}_{c}n^{c}%
-2\mathring{Q}_{c}u^{c}+\\
\\
-\left(  \mu+3p-2\Lambda\right)  \left(  \dot{\xi}_{c}u^{c}\right)  -2\nu
n_{c}\dot{\xi}^{c}-2Q_{c}\dot{\xi}^{c}%
\end{array}
\right]  u_{a}u_{b}+\nonumber\\
&  -2\left[
\begin{array}
[c]{c}%
-\left(  \mu+p+\gamma\right)  \mathring{u}_{c}n^{c}-\mathring{\nu}%
-\mathring{Q}_{c}n^{c}-P^{c}\mathring{u}_{c}+\\
\\
+\left(  \frac{1}{2}\left(  \mu-p+2\gamma+2\Lambda\right)  \dot{\xi}_{c}%
n^{c}+\nu\left(  \dot{\xi}^{c}u_{c}\right)  +P^{c}\dot{\xi}_{c}\right)  +\\
\\
+\left(  -\frac{1}{2}\left(  \mu+3p-2\Lambda\right)  u_{c}\overset{\ast}{\xi
}^{c}-\nu n_{c}\overset{\ast}{\xi}^{c}-Q_{c}\overset{\ast}{\xi}^{c}\right)
\end{array}
\right]  u_{(a}n_{b)}+\nonumber\\
&  -2\left[
\begin{array}
[c]{c}%
-\left(  \mu+p-\frac{1}{2}\gamma\right)  \mathring{u}_{d}P_{c}^{d}-\left(
\nu\mathring{n}_{d}+\mathring{Q}_{d}\right)  P_{c}^{d}-P_{d}\left(
\mathring{u}_{k}n^{k}\right)  P_{c}^{d}-D_{kd}\mathring{u}^{k}P_{c}^{d}\\
\\
+\left(  \frac{1}{2}\left(  \mu+3p-2\Lambda\right)  u_{k}\xi_{;d}^{k}P_{c}%
^{d}-\nu n_{k}\xi_{;d}^{k}P_{c}^{d}+Q_{k}\xi_{;d}^{k}P_{c}^{d}\right)  +\\
\\
+\left(  \frac{1}{2}\left(  \mu-p-\gamma+2\Lambda\right)  \dot{\xi}_{d}%
P_{c}^{d}+Q_{d}\left(  \dot{\xi}^{k}u_{k}\right)  P_{c}^{d}+P_{d}\left(
n_{k}\dot{\xi}^{k}\right)  P_{c}^{d}+D_{dk}\dot{\xi}^{k}P_{c}^{d}\right)
\end{array}
\right]  u_{(a}P_{b)}^{c}+\nonumber\\
&  +\left[
\begin{array}
[c]{c}%
\left(  \frac{1}{2}\mu-\frac{1}{2}p+\gamma\right)  ^{\circ}+2\nu\mathring
{u}_{c}n^{c}-2P_{k}\mathring{n}^{k}\\
\\
+\left(  \mu-p+2\gamma+2\Lambda\right)  \left(  \overset{\ast}{\xi}_{c}%
n^{c}\right)  +2\nu\left(  u^{c}\overset{\ast}{\xi}_{c}\right)  +2P^{c}%
\overset{\ast}{\xi}_{c}%
\end{array}
\right]  n_{a}n_{b}+\label{LR.09}\\
&  +2\left[
\begin{array}
[c]{c}%
\nu\mathring{u}_{d}P_{c}^{d}+\left(  \mathring{u}_{k}n^{k}\right)  Q_{d}%
P_{c}^{d}+\mathring{P}_{d}P_{c}^{d}+\frac{3}{2}\gamma\mathring{n}_{c}%
-D_{dc}\mathring{n}^{d}\\
\\
+\left(  \frac{1}{2}\left(  \mu-p+2\gamma+2\Lambda\right)  n_{k}\xi_{;d}%
^{k}P_{c}^{d}+\nu u_{c}\xi_{;d}^{c}P_{e}^{d}+P_{c}\xi_{;d}^{c}P_{e}%
^{d}\right)  +\\
\\
+\left(  \frac{1}{2}\left(  \mu-p-\gamma+2\Lambda\right)  \overset{\ast}{\xi
}_{d}P_{c}^{d}+Q_{c}\left(  u^{k}\overset{\ast}{\xi}_{k}\right)  +P_{d}%
n_{k}\overset{\ast}{\xi}^{k}P_{c}^{d}+D_{dk}\overset{\ast}{\xi}^{k}P_{c}%
^{d}\right)
\end{array}
\right]  n_{(a}P_{b)}^{c}+\nonumber\\
&  +\left[
\begin{array}
[c]{c}%
\frac{1}{2}\left(  \mu-p-\gamma\right)  ^{\circ}P_{cd}+2Q_{(k}\mathring
{u}_{r)}P_{c}^{k}P_{d}^{r}+2P_{(c}\mathring{n}_{d)}+\mathring{D}_{kr}P_{c}%
^{k}P_{d}^{r}+\\
\\
\left(  \mu-p-\gamma+2\Lambda\right)  \xi_{k;r}P_{c}^{k}P_{d}^{r}%
+2Q_{c}\left(  u_{k}\xi_{;d}^{k}\right)  +2P_{c}\left(  n_{k}\xi_{;d}%
^{k}\right)  +2D_{ck}\xi_{;d}^{k}%
\end{array}
\right]  P_{a}^{c}P_{b}^{d}\nonumber
\end{align}

If we replace $\xi^{a}=-\lambda u^{a}+\phi n^{a}$ we will find the complete
answer in the final 1+1+2 form. In the following we work with the particular
cases $\xi^{a}=-\lambda u^{a}$ and $\xi^{a}=\phi n^{a}$.

\subsubsection{The case $\xi^{a}=-\lambda u^{a}$}

When $\xi^{a}=-\lambda u^{a}$ we find%
\begin{align}
L_{\left(  -\lambda u^{a}\right)  }R_{ab} &  =-\lambda\left[  \frac{1}%
{2}\left(  \mu+3p\right)  ^{\cdot}+\left(  \mu+3p-2\Lambda\right)  \left(
\ln\lambda\right)  ^{\cdot}\right]  u_{a}u_{b}+\nonumber\\
&  +\lambda\left[
\begin{array}
[c]{c}%
-\left(  \mu+3p-2\Lambda\right)  \left(  \dot{u}_{c}n^{c}-\left(  \ln
\lambda\right)  ^{\ast}\right)  -2\nu\left(  \ln\lambda\right)  ^{\cdot}%
\qquad\\
\\
-2\dot{\nu}+2Q^{c}N_{c}-2\nu\left(  \sigma_{dc}n^{c}n^{d}+\frac{1}{3}%
\theta\right)
\end{array}
\right]  u_{(a}n_{b)}+\nonumber\\
& \nonumber\\
&  +\lambda\left[
\begin{array}
[c]{c}%
-\left(  \mu+3p-2\Lambda\right)  \left[  \dot{u}_{c}-\left(  \ln
\lambda\right)  _{,c}\right]  -2Q_{c}\left(  \ln\lambda\right)  ^{\cdot}+\\
\\
-2\nu N_{c}-4\nu\sigma_{dc}n^{d}-2\dot{Q}_{c}-2Q_{d}\left(  \omega_{.c}%
^{d}+\sigma_{.c}^{d}+\frac{1}{3}\theta h_{c}^{d}\right)
\end{array}
\right]  u_{(a}P_{b)}^{c}+\label{LR.10}\\
& \nonumber\\
&  -\lambda\left[
\begin{array}
[c]{c}%
\frac{1}{2}\left(  \mu-p+2\gamma\right)  ^{\cdot}+\left(  \mu-p+2\gamma
+2\Lambda\right)  \left(  \sigma_{cd}n^{c}n^{d}+\frac{1}{3}\theta\right)  +\\
\\
-2P_{c}N^{c}+2\nu\left[  \dot{u}_{d}n^{d}-\left(  \ln\lambda\right)  ^{\ast
}\right]
\end{array}
\right]  n_{a}n_{b}+\nonumber\\
& \nonumber\\
&  -\lambda\left[
\begin{array}
[c]{c}%
2\left(  \mu-p+2\gamma+2\Lambda\right)  \sigma_{cd}n^{d}+\frac{6}{2}\gamma
N_{c}+2\dot{P}_{c}-2D_{cd}N^{d}+\frac{4}{3}\theta P_{c}+\\
\\
+2Q_{c}\left[  \dot{u}_{d}n^{d}-\left(  \ln\lambda\right)  ^{\ast}\right]
+2\nu\left[  \dot{u}_{c}-\left(  \ln\lambda\right)  _{,c}\right]  +\\
\\
+2P_{c}\left(  \sigma_{de}n^{d}n^{e}\right)  +2P_{d}\left(  \sigma_{.c}%
^{d}+\omega_{.c}^{d}\right)
\end{array}
\right]  n_{(a}P_{b)}^{d}+\nonumber\\
& \nonumber\\
&  -\lambda\left[
\begin{array}
[c]{c}%
\frac{1}{2}\left(  \mu-p-\gamma\right)  ^{\cdot}P_{cd}+\left(  \mu
-p-\gamma+2\Lambda\right)  \left(  \sigma_{cd}-\frac{1}{3}\theta
P_{cd}\right)  +\\
\\
+2P_{(c}P_{d)}^{e}\left[  N_{e}+\left(  \sigma_{ef}+\omega_{ef}\right)
n^{f}\right]  +\dot{D}_{cd}+\frac{2}{3}\theta D_{cd}+\\
\\
+2Q_{(c}\left[  \dot{u}_{d)}-\left(  \ln\lambda\right)  _{;d)}\right]
+2P_{(c}\left(  \omega_{.d)}^{e}+\sigma_{.d)}^{e}\right)  n_{e}+2D_{e(c}%
\left(  \omega_{.d)}^{e}+\sigma_{.d)}^{e}\right)
\end{array}
\right]  P_{a}^{c}P_{b}^{d}\nonumber
\end{align}

The last term can be decomposed in trace and a trace-free part. Contracting
with $P^{ab}$ we have the trace part:%
\[
I=-\frac{\lambda}{2}\left[
\begin{array}
[c]{c}%
\left(  \mu-p-\gamma\right)  ^{\cdot}-\left(  \mu-p-\gamma+2\Lambda\right)
\left(  \sigma_{cd}n^{c}n^{d}-\frac{2}{3}\theta\right)  +\\
+2P^{e}\left[  N_{e}+\sigma_{ef}n^{f}\right]  +2Q^{e}\left[  \dot{u}%
_{e}-\left(  \ln\lambda\right)  _{;e)}\right]  +2D_{ec}\sigma^{ec}%
\end{array}
\right]
\]
and the trace free part:
\[
H_{ab}=-\lambda\left[
\begin{array}
[c]{c}%
\left(  \mu-p-\gamma+2\Lambda\right)  \sigma_{cd}+2P_{(c}P_{d)}^{e}\left[
N_{e}+\left(  \sigma_{ef}+\omega_{ef}\right)  n^{f}\right]  -P^{e}\left[
N_{e}+\sigma_{ef}n^{f}\right]  P_{cd}+\\
+\dot{D}_{cd}+\frac{2}{3}\theta D_{cd}+2Q_{(c}\left[  \dot{u}_{d)}-\left(
\ln\lambda\right)  _{;d)}\right]  -Q^{e}\left[  \dot{u}_{e}-\left(  \ln
\lambda\right)  _{;e)}\right]  P_{cd}+\\
+2P_{(c}\left(  \omega_{.d)}^{e}+\sigma_{.d)}^{e}\right)  n_{e}+2D_{e(c}%
\left(  \omega_{.d)}^{e}+\sigma_{.d)}^{e}\right)  -\left(  D_{ec}\sigma
^{ec}\right)  P_{cd}%
\end{array}
\right]
\]

\subsubsection{The case $\xi^{a}=\phi n^{a}$}

When $\xi^{a}=\phi n^{a}$ we calculate%
\begin{align}
&  L_{\left(  \phi n^{a}\right)  }R_{ab}=\phi\left[
\begin{array}
[c]{c}%
\frac{1}{2}\left(  \mu+3p\right)  ^{\ast}+\left(  \mu+3p-2\Lambda\right)
\left(  \dot{u}^{c}n_{c}\right)  +\\
\\
-2\nu\left[  \left(  \ln\phi\right)  ^{\cdot}-\sigma_{cd}n^{c}n^{d}+\frac
{1}{3}\theta\right]  -2Q_{c}N^{c}%
\end{array}
\right]  u_{a}u_{b}+\nonumber\\
& \nonumber\\
&  -2\phi\left[
\begin{array}
[c]{c}%
\frac{1}{2}\left(  \mu-p+2\Lambda+2\gamma\right)  \left[  \left(  \ln
\phi\right)  ^{\cdot}-\left(  \sigma_{cd}n^{c}n^{d}+\frac{1}{3}\theta\right)
\right]  +\\
\\
+P^{c}N_{c}-\overset{\ast}{\nu}-\nu\left[  \left(  \ln\phi\right)  ^{\ast
}+\dot{u}^{c}n_{c}\right]
\end{array}
\right]  u_{(a}n_{b)}\text{ }\label{LR.11}\\
& \nonumber\\
&  -2\phi\left[
\begin{array}
[c]{c}%
\frac{1}{2}\left(  \mu-p-\gamma+2\Lambda\right)  N_{c}+\left(  \mu
+3p-2\Lambda\right)  \omega_{dc}n^{d}+\\
\\
+P_{c}\left[  \left(  \ln\phi\right)  ^{\cdot}-\left(  \sigma_{ef}n^{e}%
n^{f}+\frac{1}{3}\theta\right)  \right]  +D_{dc}N^{d}-\nu\left(  \overset
{\ast}{n}_{c}+\left(  \ln\phi\right)  _{;c}\right)  +\\
\\
-Q_{c}\left(  \dot{u}^{d}n_{d}\right)  -\overset{\ast}{Q}_{d}P_{c}^{d}%
-Q_{r}\left(  \mathit{R}_{.d}^{r}\mathit{+S}_{.d}^{r}\mathit{+}\frac{1}%
{2}\mathit{E}P_{d}^{r}\right)
\end{array}
\right]  u_{(a}P_{b)}^{c}\nonumber\\
& \nonumber\\
&  +\phi\left[  \frac{1}{2}\left(  \mu-p+2\gamma\right)  ^{\ast}+\left(
\mu-p+2\Lambda+2\gamma\right)  \left(  \ln\phi\right)  ^{\ast}\right]
n_{a}n_{b}\nonumber\\
& \nonumber\\
&  +2\phi\left[
\begin{array}
[c]{c}%
\overset{\ast}{P}_{c}+\frac{1}{2}\left(  \mu-p+2\gamma+2\Lambda\right)
\left(  \overset{\ast}{n}_{c}+\left(  \ln\phi\right)  _{;c}\right)
-2\nu\omega_{dc}n^{d}+\\
+P_{r}\left[  \left(  \mathit{R}_{.c)}^{r}\mathit{+S}_{.c)}^{r}\mathit{+}%
\frac{1}{2}\mathit{E}P_{c)}^{r}\right)  +\left(  \ln\phi\right)  ^{\ast}%
P_{c}^{r}\right]
\end{array}
\right]  n_{(a}P_{b)}^{c}\nonumber\\
& \nonumber\\
&  +\phi\left[
\begin{array}
[c]{c}%
\frac{1}{2}\left(  \mu-p-\gamma\right)  ^{\ast}P_{cd}+\left(  \mu
-p-\gamma+2\Lambda\right)  \left(  \mathit{S}_{cd}+\frac{1}{2}P_{cd}%
\mathit{E}\right)  +\\
\\
+\overset{\ast}{D}_{cd}+2D_{r(c}\left(  \mathit{R}_{.d)}^{r}\mathit{+S}%
_{.d)}^{r}\mathit{+}\frac{1}{2}\mathit{E}P_{d)}^{r}\right)  +\\
\\
+4Q_{(c}\omega_{d)r}n^{r}+2P_{(c}P_{d)}^{e}\left(  \overset{\ast}{n}%
_{e}+\left(  \ln\phi\right)  _{;e}\right)
\end{array}
\right]  P_{a}^{c}P_{b}^{d}\nonumber
\end{align}

The trace and the trace free parts of the last term are:%
\[
I=\frac{\phi}{2}\left[
\begin{array}
[c]{c}%
\left(  \mu-p-\gamma\right)  ^{\ast}+\left(  \mu-p-\gamma+2\Lambda\right)
\mathit{E}+2D_{rc}\mathit{S}^{rc}+\\
+4Q^{d}\omega_{dr}n^{r}+2P^{e}\left(  \overset{\ast}{n}_{e}+\left(  \ln
\phi\right)  _{;e}\right)
\end{array}
\right]
\]%
\[
H_{ab}=+\phi\left[
\begin{array}
[c]{c}%
\left(  \mu-p-\gamma+2\Lambda\right)  \mathit{S}_{cd}+\overset{\ast}{D}%
_{cd}+2D_{r(c}\left(  \mathit{R}_{.d)}^{r}\mathit{+S}_{.d)}^{r}\mathit{+}%
\frac{1}{2}\mathit{E}P_{d)}^{r}\right)  +\\
-\left(  2D_{rc}\mathit{J}^{rc}\right)  P_{cd}+4Q_{(c}\omega_{d)r}%
n^{r}-2\left(  Q^{d}\omega_{dr}n^{r}\right)  P_{cd}+\\
+2P_{(c}P_{d)}^{e}\left(  \overset{\ast}{n}_{e}+\left(  \ln\phi\right)
_{;e}\right)  -P^{e}\left(  \overset{\ast}{n}_{e}+\left(  \ln\phi\right)
_{;e}\right)  P_{cd}%
\end{array}
\right].
\]
These general results can be used to study the kinematics and he dynamics of
all spacetime models for all types of observers,  all types of symmetries and
for general matter fields.

\textbf{Acknowledgements}

AP acknowledges the financial support of FONDECYT grant no. 3160121.
\newpage

\end{document}